\documentclass[aps,pra,superscriptaddress,onecolumn,10pt]{revtex4-2}
\usepackage{amsmath,mathtools,amsthm,amssymb}
\usepackage{newtxtext,newtxmath}
\usepackage{microtype}
\usepackage{aliascnt}
\usepackage{physics}
\usepackage{dsfont}
\usepackage{comment}
\usepackage{enumitem}
\usepackage[dvipsnames]{xcolor}

\definecolor{myrefcolor}{rgb}{0.067,0.5,0.5}
\definecolor{myurlcolor}{rgb}{0.1,0,0.9}
\usepackage[breaklinks,colorlinks=true,linkcolor=myrefcolor,citecolor=myrefcolor,urlcolor=myurlcolor]{hyperref}
\usepackage[capitalize]{cleveref}
\newtheorem{theorem}{Theorem}
\crefname{theorem}{Theorem}{Theorems}
\Crefname{theorem}{Theorem}{Theorems}
\newaliascnt{lemma}{theorem}
\newtheorem{lemma}[lemma]{Lemma}
\aliascntresetthe{lemma}
\newcommand{\Var}{\operatorname{Var}}
\crefname{lemma}{Lemma}{Lemmas}
\Crefname{lemma}{Lemma}{Lemmas}
\newtheorem{proposition}{Proposition}
\newtheorem{corollary}{Corollary}
\crefname{corollary}{Corollary}{Corollaries}
\Crefname{corollary}{Corollary}{Corollaries}
\crefname{proposition}{Proposition}{Propositions}
\Crefname{proposition}{Proposition}{Propositions}
\newcommand{\poly}{\operatorname{poly}}
\newcommand{\id}{\mathds{1}}
\newcommand{\idmap}{\mathrm{id}}
\DeclareMathOperator{\supp}{supp}
\newcommand{\be}{\begin{equation}\begin{aligned}\hspace{0pt}}
\newcommand{\ee}{\end{aligned}\end{equation}}

\allowdisplaybreaks
\allowdisplaybreaks
\begin{document}

\title{Universal quantum coding}
\author{Jacopo Rizzo}
\email{jacopo.rizzo@fu-berlin.de}
\affiliation{\fu}
\author{Ludovico Lami}
\affiliation{\sns}
\author{Jens Eisert}
\affiliation{\fu}
\author{Lorenzo Leone}
\affiliation{\sa}
\affiliation{\infn}
\newcommand{\fu}{Dahlem Center for Complex Quantum Systems, Freie Universit\"at Berlin, 14195 Berlin, Germany}
\newcommand{\sns}{Scuola Normale Superiore, Piazza dei Cavalieri 7, 56126 Pisa, Italy}
\newcommand{\sa}{Dipartimento di Ingegneria Industriale, Università degli Studi di Salerno, Via Giovanni Paolo II, 132, 84084 Fisciano (SA), Italy}
\newcommand{\infn}{INFN, Sezione di Napoli, Gruppo Collegato di Salerno, Italy}

\begin{abstract}
Entanglement distillation from noisy bipartite states and reliable quantum communication over noisy channels are fundamental tasks in quantum information theory, yet optimal coding schemes typically rely on prior knowledge of the underlying state or channel. Here, through Schur--Weyl duality, we establish a universal quantum error-correction principle that removes this dependence entirely.
We show that the irreducible permutation spaces arising from the local Schur--Weyl decomposition of many identical copies of an unknown bipartite mixed state form maximally entangled quantum code spaces, with an error structure determined entirely by representation theory. Correcting these symmetry-resolved errors yields universal protocols for entanglement distillation and quantum communication that achieve coherent-information rates with exponentially vanishing error and optimal second-order correction. The protocols are invariant under local unitaries and allow both input marginals to be recovered exactly.
At finite block length, their achievable rates are governed by an associated Schur--Weyl information spectrum, which we relate to Petz--R\'enyi conditional entropies of the underlying state. The resulting sample complexity depends only on the relevant spectral ranks, rather than on the respective Hilbert space dimensions. In particular, for \(n\)-qubit states of \(O(\operatorname{poly} n)\) global rank, we show that a Petz--R\'enyi conditional-entropy promise suffices for optimal sample-efficient entanglement distillation.
Altogether, these results identify Schur--Weyl duality as a fundamental and general mechanism for optimal, tomography-free universal quantum coding, placing entanglement distillation, quantum error correction and quantum communication within a unified representation-theoretic framework.
\end{abstract}

\maketitle
\let\savedaddcontentsline\addcontentsline
\renewcommand{\addcontentsline}[3]{}

\section{Introduction}
Entanglement distillation and quantum communication are two central primitives of quantum information theory. Indeed, they form two pillars on which much of its abstract theory rests. In the most standard setting, the former task aims at converting many copies of a noisy bipartite state into high-fidelity maximally entangled states by making use of suitable \emph{local operations and classical communication} (LOCC), while the latter concerns the reliable transmission of quantum information through a noisy quantum channel. Despite their central role, both problems are notoriously difficult to fully characterize in general: neither the distillable entanglement of an arbitrary mixed state nor the quantum capacity of an arbitrary quantum channel is known to admit a single-letter, explicit characterization.
Nevertheless, an essential entropic quantity, the coherent information, provides a fundamental operational benchmark in both settings. 
For an independent and identically distributed (i.i.d.)\ bipartite source $\rho_{AB}$, the hashing inequality gives an asymptotically achievable one-way LOCC entanglement-distillation rate \cite{DevetakWinter2005}
\begin{align}
E_{\rm D}^{\to}(\rho_{AB})
\ge
I(A\rangle B)_\rho
\coloneqq
H(B)_\rho-H(AB)_\rho,
\end{align}
where $H(X)_\rho \coloneqq - \Tr (\rho_{X} \log_2 \rho_X)$.
For a quantum channel $\mathcal N:\mathcal L(A')\to\mathcal L(B)$, with $\mathcal L(X)$ denoting linear operators on a quantum system $X$, the optimal rate of quantum information transmission, known as quantum capacity, is given instead by the regularized coherent information, optimized over the channel input states
\begin{align}
Q(\mathcal N)
=
\lim_{m\to\infty}
\frac1m
\max_{\xi}
I_c\!\left(\xi,\mathcal N^{\otimes m}\right),
\qquad
I_c(\xi,\mathcal N)
\coloneqq
H(B)_\omega-H(RB)_\omega ,
\label{eq:intro-quantum-capacity}
\end{align}
where $\xi\in\mathcal D(A')$, $\phi_\xi=\ketbra*{\phi_\xi}{\phi_\xi}_{RA'}$ is a purification with $R\simeq A'$, and $\omega_{RB}\coloneqq(\idmap_R\otimes\mathcal N)(\phi_\xi)\in\mathcal D(RB)$ \cite{Lloyd1997,Devetak2005Channel}. 

However, standard achievability proofs for these information-theoretic rates are typically built around the particular input state to be distilled or the respective target channel, through state-dependent typical subspaces, or channel-dependent random coding constructions \cite{DevetakWinter2005,Klesse_2007,HaydenHorodeckiWinterYard2008}. Universal coding under uncertainty is known from the compound-source and compound-channel results of Refs.~\cite{BjelakovicBocheJanssen2013,BjelakovicBocheNoetzel2009,BocheJanssen2014}, but there the code is constructed relative to a prescribed compound class, and its construction relies on a nontrivial code optimization over that class.
This raises a sharper question: can coherent-information rates be attained by a fully \emph{universal} code whose recovery is fixed by a simple, state-independent criterion, with the unknown source, channel, or uncertainty set entering only through the condition that certifies the target rate?

In this work, we settle this question by exhibiting a direct and universal coding mechanism based on Schur-Weyl duality.
For i.i.d. pure bipartite states, the local Schur-Weyl decomposition directly isolates irreducible permutation registers that are exactly maximally entangled, yielding universal, state-independent entanglement distillation at the optimal von-Neumann entropy rate \cite{PhysRevA.75.062338}. Here we show that, for arbitrary i.i.d. mixed states, these same registers persist as canonical quantum code spaces, while the unknown purifying environment induces a universal family of representation-theoretic errors with only state-dependent weights. 
The selected errors define a state-independent reference channel whose Choi state is flat on the corresponding error space. Designing a suitable random encoding for the fixed reference channel then gives, via polar recovery, the corresponding encoding and decoding scheme for the actual physical channel.

This fundamental mechanism provides a simple criterion for the achievable rate and, in turn, a Schur-Weyl information-spectrum quantity controlling the finite-copy entanglement distillation yield (\cref{thm:universal-distillation-main}).
We connect this Schur-Weyl information-spectrum quantity directly to Petz--R\'enyi divergences \cite{petz1986quasientropies}, showing that the universally achievable asymptotic distillation rate is governed by the coherent information at first order and by the coherent-information variance at second order, matching the known second-order achievability bound and, for maximally correlated states in the positive-variance regime, the converse bound \cite{Fang_2019}. This shows that a single universal LOCC family achieves the hashing rate with exponentially vanishing error and its optimal second-order correction, without any tomography, state estimation, or state-dependent decoding.

A more refined analysis also yields explicit sample-complexity bounds in two complementary regimes (\cref{sec:main-samples}). To approach the coherent-information rate, we show that the number of input copies is controlled by the ranks of the relevant local and global spectra rather than by the full Hilbert-space dimensions. 
Furthermore, under a Petz--R\'enyi conditional-entropy promise, the sample complexity depends only on the global rank, with no explicit dependence on local ranks or Hilbert space dimensions. 

In particular, for $n$-qubit systems with a supplied polynomial global-rank ceiling,
polynomially many copies suffice to achieve rates up to
$\min\{H_{1+s}^{\downarrow}(A|E)_\psi,\log_2 n\}$ for every fixed
$0<s<1$, where $H_{1+s}^{\downarrow}(A|E)_\psi\coloneqq-\frac1s\log_2\Tr\!\left[\rho_{AB}^{1-s}(\id_A\otimes\rho_B^s)\right]$ denotes the Petz--R\'enyi conditional entropy, with $\psi_{ABE}$ denoting any purification of $\rho_{AB}$. Similar guarantees also hold in
terms of the optimized sandwiched conditional R\'enyi entropies
$\widetilde H_\alpha^\uparrow(A|E)_\psi$, whose
$\alpha\to\infty$ limit recovers the conditional min-entropy. 
For pure bipartite inputs, this fully generalizes the results of Ref.~\cite{Leone_2025}, which identified the optimally achievable sample-efficient pure-state rate with the min-entropy of the reduced density matrix. 
In forthcoming work, some of us further show that the more general conditional R\'enyi entropy criterion established here is likewise sample-efficiently optimal for mixed states \cite{UtsumiRizzoTakagiLeone}.
Thus, non-vanishing, and even logarithmically growing, distillation rates remain achievable universally with polynomial sample complexity despite exponentially large local Hilbert spaces. Remarkably, our protocol also simultaneously preserves enough local information to reconstruct both local input marginals exactly.
Finally, we show that a similar construction applies to universal asymptotic rates for quantum communication (\cref{thm:main-universal-communication}). For any fixed input state and a positive channel coherent-information guarantee, we provide an encoder--decoder pair that works simultaneously for every quantum channel satisfying that guarantee with high probability over a shared random seed. Applying this threshold theorem to suitable entangled block inputs then recovers, for arbitrary compound channel families, the established optimal regularized coherent-information capacity for compound families of quantum channels \cite{BjelakovicBocheNoetzel2009}. 
These results reveal a broad and profound role for Schur--Weyl duality in universal quantum information processing. Across a wide class of tasks, the Schur decomposition does more than merely expose and simplify the underlying symmetry: it isolates representation sectors that directly support optimal entanglement distillation and transmission, while organizing the accompanying noise into a structure amenable to universal quantum error correction.

\section{Main results}
To set up the mathematical framework, we consider $k$ copies of an unknown bipartite (generally mixed) state $\rho_{AB}$. Throughout, $d_X\coloneqq\dim X$ and $r_X\coloneqq\rank\rho_X$ denote the dimension and spectral rank, respectively. Our starting point is the local Schur-Weyl decomposition \cite{weyl1946classical}
\begin{align}
(\mathbb C^d)^{\otimes k}
&\underset{S_k\times\mathcal U(d)}{\cong}
\bigoplus_{\lambda\vdash k,\;\ell(\lambda)\le d}
\mathcal P_\lambda\otimes\mathcal Q_\lambda^d ,
\label{eq:main-schur-weyl}
\end{align}
where $\lambda$ ranges over partitions of $k$ with at most $d$ nonzero entries. The unitary implementing this change of basis is known as Schur transform, denoted by $U_{\mathrm{Schur}}^X$ \cite{harrow2005applicationscoherentclassicalcommunication}.
The space $\mathcal P_\lambda$ carries the irreducible representation $p_\lambda(\cdot)$ of the symmetric group $S_k$, while $\mathcal Q_\lambda^d$ carries the corresponding irreducible representation $q_\lambda$ of the $d$-dimensional unitary group $\mathcal U(d)$, naturally extended to operators. We write
$d_\lambda\coloneqq \dim\mathcal P_\lambda$ and $D_\lambda^{(d)}\coloneqq \dim\mathcal Q_\lambda^d$. In the original Hilbert space, these actions correspond to the permutation operators $R_\pi$ and the tensor product unitaries $U^{\otimes k}$ (see \cref{app:schur}).
Throughout, Schur blocks are ordered as $\mathcal P\otimes\mathcal Q$. Since the reduced density matrix $\rho_A^{\otimes k}$ commutes with the permutation action, Schur's lemma gives
\begin{align}
\rho_A^{\otimes k}
\cong
\bigoplus_\lambda
I_{\mathcal P_\lambda}
\otimes
q_\lambda(\rho_A),
\label{eq:main-schur-marginal}
\end{align}
where $q_\lambda(\rho_A)$ is a positive semidefinite operator on $\mathcal Q_\lambda^{d_A}$. If $r_A=\rank\rho_A$, then $q_\lambda(\rho_A)=0$ whenever $\ell(\lambda)>r_A$, so only Young diagrams with at most $r_A$ rows have nonzero weight. Conditional on $\lambda$, the permutation register is therefore maximally mixed, while all state dependence is confined to the sector weight and the $\mathcal Q_\lambda^{d_A}$ register. The probability of sampling $\lambda$ reads $p_\lambda = d_\lambda\,\Tr q_\lambda(\rho_A)$.
Moreover, for large $k$, the normalized sampled Young diagram $\lambda/k$ approximates the spectrum of $\rho_A$ \cite{PhysRevA.64.052311}. Crucially, the Schur transform is a unitary change of basis and admits an efficient quantum circuit implementation \cite{BaconChuangHarrow2006,Krovi2019efficienthigh}.

\subsection{A tripartite Schur-Weyl decomposition}

We now introduce a tripartite Schur decomposition underlying our coding construction. We first start with the simpler bipartite case. For $k$ copies of a pure bipartite state vector $\ket{\psi}_{AR}$ shared between Alice and a reference system $R$, simultaneous permutation invariance forces the same Young diagram $\lambda$ on both sides of the bipartition and produces a maximally entangled state on the corresponding irreducible permutation registers \cite{PhysRevA.75.062338}, so that in the local Schur basis 
\be
(U_{\mathrm{Schur}}^A\otimes U_{\mathrm{Schur}}^R)\ket\psi_{AR}^{\otimes k}
=
\sum_{\substack{\lambda\vdash k\\ \ell(\lambda)\le r_A}}
\sqrt{p_\lambda}\,
\ket\lambda_A\ket\lambda_R
\otimes\ket*{\Phi_{\mathcal P_\lambda}}_{A:R}
\otimes\ket{\phi_\lambda}_{A:R},
\label{eq:bipartite-schur-purification}
\ee
where $\ket{\phi_\lambda}_{A:R}\in\mathcal Q_\lambda^A\otimes\mathcal Q_\lambda^R$ is normalized, and $\ket*{\Phi_{\mathcal P_\lambda}}_{A:R}\coloneqq\frac{1}{\sqrt{d_\lambda}}\sum_{i=1}^{d_\lambda}\ket{i}_A\ket{i}_R$,
with $\{\ket{i}\}_{i=1}^{d_\lambda}$ the Young-Yamanouchi basis of the irreducible $S_k$ representation $\mathcal P_\lambda$, naturally indexed by standard Young tableaux. In this basis, the irreducible action of the symmetric group is real orthogonal. A standard Young tableau of shape $\lambda\vdash k$ is a filling of the boxes of the Young diagram $\lambda$ with the integers $1,\ldots,k$, each used once, such that the entries increase along every row and every column.
Thus every Schur sector contains an exact, state-independent EPR state of Schmidt rank $d_\lambda$. This is the representation-theoretic mechanism underlying universal pure-state entanglement distillation \cite{PhysRevA.75.062338}.
For mixed states, we choose instead a purification $\ket{\psi}_{ABE}$ and first apply the Schur transform across the cut $A:BE$. Splitting the joint $BE$ registers into local $B$ and $E$ registers induces dual Clebsch--Gordan branchings on the permutation and unitary sectors, 
\begin{align}
\mathcal P_\mu\otimes\mathcal P_\nu
\underset{S_k}{\cong}
\bigoplus_\lambda
\mathcal P_\lambda\otimes\mathbb C^{g_{\lambda\mu\nu}}, \qquad
\mathcal Q_\lambda^{d_Bd_E}
\underset{\mathcal U(d_B)\times\mathcal U(d_E)}{\cong}
\bigoplus_{\mu,\nu}
\mathcal Q_\mu^{d_B}\otimes
\mathcal Q_\nu^{d_E}\otimes
\mathbb C^{g_{\lambda\mu\nu}} ,
\end{align}
governed by the same Kronecker coefficients. Here, the multiplicity $g_{\lambda\mu\nu}$ counts the independent ways in which the local sectors $(\mu,\nu)$ couple to the global $BE$ sector $\lambda$. We denote the corresponding branching unitaries by
\be
V_{\mu,\nu}:
\mathcal P_\mu^B\otimes\mathcal P_\nu^E
\longrightarrow
\bigoplus_\lambda
\mathcal P_\lambda^{BE}\otimes\mathbb C^{g_{\lambda\mu\nu}},
\qquad
W_\lambda:
\mathcal Q_\lambda^{d_Bd_E}
\longrightarrow
\bigoplus_{\mu,\nu}
\mathcal Q_\mu^B\otimes\mathcal Q_\nu^E\otimes
\mathbb C^{g_{\lambda\mu\nu}} .
\label{eq:CG-branching}
\ee
We then prove the following key generalization of \cref{eq:bipartite-schur-purification}.

\begin{theorem}[Schur-Weyl purification (informal version of \cref{thm:schurpurif})]\label{thm:tripartitemain}
Let $p_\lambda$ and $\ket{\phi_\lambda}_{A:BE}$ be the quantities in \eqref{eq:bipartite-schur-purification} for the cut $A:BE$. Then, for an arbitrary purification $\ket\psi_{ABE}$ of $\rho_{AB}$, the following decomposition holds
\be
&(U_{\mathrm{Schur}}^A\otimes U_{\mathrm{Schur}}^B\otimes U_{\mathrm{Schur}}^E)
\ket\psi_{ABE}^{\otimes k}\\
&=
\sum_{\lambda,\mu,\nu}
\sqrt{p_\lambda}\,
\ket\lambda_A\ket\mu_B\ket\nu_E
\otimes
\sum_{\alpha=1}^{g_{\lambda\mu\nu}}
\bigl(\id\otimes(V_{\mu,\nu}^{\lambda,\alpha})^\dagger\bigr)
\ket*{\Phi_{\mathcal P_\lambda}}_{A:BE}
\otimes
\bigl(\id\otimes W_{\mu,\nu}^{\lambda,\alpha}\bigr)
\ket{\phi_\lambda}_{A:BE}.
\label{eq:tripartite-schur-main}
\ee
The sum runs over $\lambda,\mu,\nu\vdash k$ satisfying the respective rank constraints. The maps $V_{\mu,\nu}^{\lambda,\alpha}$ and $W_{\mu,\nu}^{\lambda,\alpha}$ are the projected branch intertwiners on the $\mathcal P$ and $\mathcal Q$ registers from \cref{eq:CG-branching}. 
\end{theorem}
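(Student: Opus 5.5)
The plan is to derive \eqref{eq:tripartite-schur-main} from the bipartite decomposition \eqref{eq:bipartite-schur-purification}, applied to the cut $A:BE$, by then refining the $BE$ side into separate $B$ and $E$ registers. Treat $\ket\psi_{ABE}$ as a pure bipartite state on $A$ and $R\coloneqq BE$ with $d_R=d_Bd_E$. Then \eqref{eq:bipartite-schur-purification} gives
\[
(U_{\mathrm{Schur}}^A\otimes U_{\mathrm{Schur}}^{BE})\ket\psi_{ABE}^{\otimes k}=\sum_\lambda\sqrt{p_\lambda}\ket\lambda_A\ket\lambda_{BE}\otimes\ket*{\Phi_{\mathcal P_\lambda}}_{A:BE}\otimes\ket{\phi_\lambda}_{A:BE}.
\]
The remaining task is to relate $U_{\mathrm{Schur}}^{BE}$, the Schur transform for $\mathcal U(d_Bd_E)$, to $U_{\mathrm{Schur}}^B\otimes U_{\mathrm{Schur}}^E$.

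\textbf{Step 1: relate the two Schur bases.} Both are unitary changes of basis of $(\mathbb C^{d_B}\otimes\mathbb C^{d_E})^{\otimes k}$. Consider the composite map $(U_{\mathrm{Schur}}^B\otimes U_{\mathrm{Schur}}^E)(U_{\mathrm{Schur}}^{BE})^\dagger$. It maps $\bigoplus_\lambda\mathcal P_\lambda\otimes\mathcal Q_\lambda^{d_Bd_E}$ to $\bigoplus_{\mu,\nu}\mathcal P_\mu\otimes\mathcal Q_\mu^{d_B}\otimes\mathcal P_\nu\otimes\mathcal Q_\nu^{d_E}$. It intertwines the diagonal $S_k$ action (realised as $R_\pi^B\otimes R_\pi^E$ on the target). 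It also intertwines the $\mathcal U(d_B)\times\mathcal U(d_E)$ action $(U\otimes V)^{\otimes k}=U^{\otimes k}\otimes V^{\otimes k}$.

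By Schur's lemma for the commuting pair $S_k\times(\mathcal U(d_B)\times\mathcal U(d_E))$, this map has the form $\bigoplus_{\lambda,\mu,\nu}\sum_\alpha (V_{\mu,\nu}^{\lambda,\alpha})^\dagger\otimes W_{\mu,\nu}^{\lambda,\alpha}$. The pieces are defined as follows.
\begin{itemize}
\item $V_{\mu,\nu}^{\lambda,\alpha}\colon\mathcal P_\mu\otimes\mathcal P_\nu\to\mathcal P_\lambda$ is the $\alpha$-th component of the permutation branching $V_{\mu,\nu}$ from \eqref{eq:CG-branching}.
\item $W_{\mu,\nu}^{\lambda,\alpha}\colon\mathcal Q_\lambda^{d_Bd_E}\to\mathcal Q_\mu^{B}\otimes\mathcal Q_\nu^{E}$ is the corresponding component of $W_\lambda$.
\end{itemize}
The multiplicity spaces $\mathbb C^{g_{\lambda\mu\nu}}$ are contracted against each other. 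This contraction is the standard dual-pair statement: the isotypic multiplicity of $\mathcal P_\lambda\otimes\mathcal Q_\mu\otimes\mathcal Q_\nu$ computed on the two sides is the same Kronecker coefficient. A basis choice fixes the $\alpha$-labels consistently; any other choice differs by a unitary on $\mathbb C^{g_{\lambda\mu\nu}}$, which can be absorbed.

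\textbf{Step 2: substitute and collect terms.} Substitute this block form into the displayed state and apply it to each $\lambda$-summand. The $\mathcal P_\lambda$ factor on $BE$ goes to $\bigoplus_{\mu,\nu}\mathcal P_\mu\otimes\mathcal P_\nu$ via $(V^{\lambda,\alpha}_{\mu,\nu})^\dagger$. The $\mathcal Q_\lambda^{d_Bd_E}$ factor goes via $W^{\lambda,\alpha}_{\mu,\nu}$. The label register $\ket\lambda_{BE}$ is replaced by $\ket\mu_B\ket\nu_E$. The operators act only on the $BE$ half of $\ket*{\Phi_{\mathcal P_\lambda}}$ and $\ket{\phi_\lambda}$, which yields exactly \eqref{eq:tripartite-schur-main}.

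Rank constraints then follow from support considerations:
\begin{itemize}
\item $\ell(\lambda)\le r_A$ holds from \eqref{eq:main-schur-marginal}.
\item $\ell(\mu)\le r_B$ and $\ell(\nu)\le r_E$ hold because terms with other labels would give nonzero weight to $q_\mu(\rho_B)$ in the marginal $\rho_B^{\otimes k}$, contradicting \eqref{eq:main-schur-marginal} applied to $B$; the same argument applies to $E$.
\end{itemize}

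\textbf{Main obstacle.} I expect the delicate point to be Step 1. One must show that the same multiplicity index $\alpha$ pairs the permutation branching with the unitary branching, i.e.\ that the intertwiner is a sum of products over a single shared $\alpha$ rather than a general element of $\mathrm{End}(\mathbb C^{g})\otimes\mathrm{End}(\mathbb C^{g})$. I would handle this by decomposing $\mathrm{Hom}_{S_k\times\mathcal U\times\mathcal U}$ explicitly. Double-commutant duality identifies the $S_k$-multiplicity space of $\mathcal P_\lambda$ inside $\mathcal P_\mu\otimes\mathcal P_\nu$ with the $\mathcal U(d_B)\times\mathcal U(d_E)$-multiplicity space of $\mathcal Q_\mu\otimes\mathcal Q_\nu$ inside $\mathcal Q_\lambda^{d_Bd_E}$. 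The statement therefore holds after defining $W_\lambda$ relative to $V_{\mu,\nu}$, or up to a unitary on $\mathbb C^{g_{\lambda\mu\nu}}$ that can be absorbed into the definitions.
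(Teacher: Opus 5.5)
Your proposal is correct and follows the paper's proof. The paper likewise forms $R^{BE}=(U_{\mathrm{Schur}}^B\otimes U_{\mathrm{Schur}}^E)(U_{\mathrm{Schur}}^{BE})^\dagger$ and rotates by $V_{\mu,\nu}$ and $W_\lambda$, so that Schur's lemma reduces each $(\lambda,\mu,\nu)$ block to $\id\otimes u^{\lambda,\mu,\nu}$, where $u^{\lambda,\mu,\nu}$ is unitary on $\mathbb C^{g_{\lambda\mu\nu}}$ because $R^{BE}$, $V_{\mu,\nu}$ and $W_\lambda$ are unitary. It then absorbs $u^{\lambda,\mu,\nu}$ into the multiplicity basis of $W_\lambda$ to get the shared index $\alpha$, substitutes into the $A{:}BE$ bipartite decomposition, and obtains the rank constraints from $\Tr[\Pi^B_{>r_B}\rho_B^{\otimes k}]=0$, which forces those components of $\ket\psi^{\otimes k}$ to vanish; Schur's lemma only ever produces a single matrix $u^{\lambda,\mu,\nu}$ linking the two multiplicity spaces, not an element of $\mathrm{End}\otimes\mathrm{End}$, so the ``main obstacle'' you flag is settled by exactly the absorption you propose.
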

More precisely, the projected maps are
\begin{align}
V_{\mu,\nu}^{\lambda,\alpha}
\coloneqq(\id_{\mathcal P_\lambda}\otimes\bra{\alpha})\Pi_\lambda^{\mu,\nu}V_{\mu,\nu},\qquad
W_{\mu,\nu}^{\lambda,\alpha}
\coloneqq(\id_{\mathcal Q_\mu^B\otimes\mathcal Q_\nu^E}\otimes\bra{\alpha})\Pi_{\mu,\nu}^{\lambda}W_\lambda,
\end{align}
where $\Pi_\lambda^{\mu,\nu}$ and $\Pi_{\mu,\nu}^{\lambda}$ select the corresponding direct-sum components, and $\ket{\alpha}$ selects the multiplicity coordinate.
In the theorem, these maps resolve the combined $BE$ registers into separate $B$ and $E$ registers:
\begin{align}
(V_{\mu,\nu}^{\lambda,\alpha})^\dagger
:\mathcal P_\lambda^{BE}\longrightarrow\mathcal P_\mu^B\otimes\mathcal P_\nu^E,\qquad
W_{\mu,\nu}^{\lambda,\alpha}
:\mathcal Q_\lambda^{d_Bd_E}\longrightarrow\mathcal Q_\mu^B\otimes\mathcal Q_\nu^E.
\end{align}
In the following, subscripts on $O$ indicate the parameters on which its implicit constant may depend. For fixed local dimension $d$, the unitary registers have polynomial dimension, $\dim\mathcal Q_\lambda^d=k^{O(d^2)}$, and therefore carry at most $O_d(\log_2 k)$ entanglement. On the other hand, permutation registers can have exponentially large dimension in $k$ and retain the leading, extensive contribution. 
The decomposition in \cref{thm:tripartitemain} already contains the crucial insights that lead to most of our results. Indeed, the irreducible permutation branches have a particularly simple entropic structure. For each normalized pure branch $\ket*{B_{\mu,\nu}^{\lambda,\alpha}}\coloneqq(\id\otimes(V_{\mu,\nu}^{\lambda,\alpha})^\dagger)\ket*{\Phi_{\mathcal P_\lambda}}_{A:BE}$, permutation symmetry forces all one-party marginals to be maximally mixed. Consequently, \cref{lem:orthperm} gives the following conditional min-entropy content
\be
H_{\min}(\mathcal P_\lambda^A|\mathcal P_\nu^E)_{B_{\mu,\nu}^{\lambda,\alpha}}
=
\log_2\frac{d_\mu}{d_\nu}
\approx k\,I(A\rangle B)_\rho,
\ee
where $H_{\min}(X|Y)_\omega\coloneqq-\log_2\inf\{\Tr T_Y:T_Y\ge0,\ \omega_{XY}\le\id_X\otimes T_Y\}$. The last approximation holds on typical Schur sectors, up to sublinear corrections in $k$ at fixed local dimensions.
Importantly, the conditional min-entropy quantifies the single-shot LOCC-accessible entanglement \cite{BuscemiDatta2010}, so the previous equation directly links the entanglement accessible through Schur--Weyl structure to the asymptotic hashing yield.
This branchwise identity motivates the coding construction: the macroscopic entanglement resides in the permutation sectors in a state-agnostic form. The remainder of the paper develops this observation into a rigorous framework, casting entanglement distillation and quantum communication as the correction, on the maximally entangled permutation registers, of errors induced by the Clebsch--Gordan maps in \cref{eq:tripartite-schur-main}. In the following section, we make this intuition precise.

\subsection{Universal entanglement distillation from Schur-Weyl coding}\label{sec:maindist}

We start from the decomposition in \cref{eq:tripartite-schur-main}, and assume that Alice and Bob have measured their local Schur labels $(\lambda,\mu)$, and Alice has communicated classically her outcome $\lambda$ to Bob. If the unitary registers $\mathcal{Q}$ are traced out, it is relatively easy to show that the resulting conditional state is the output of a channel acting on one half of a fixed maximally entangled state across $A:BE$:
\be
\rho_{\mathcal P_\lambda^A\mathcal P_\mu^B}
=
\sum_\nu
\Tr_{\mathcal P_\nu^E}\!\left[
(\id_{\mathcal P_\lambda^A}\otimes V_{\mu,\nu}^\dag)
\left( \left(\Phi_{\mathcal P_\lambda}\right)_{A:BE}
\otimes S^{(\nu)}
\right)
(\id_{\mathcal P_\lambda^A}\otimes V_{\mu,\nu})
\right],
\label{eq:physical-schur-channel-main}
\ee
Here $S^{(\nu)}\ge0$ are Gram matrices in multiplicity space satisfying $\sum_\nu\Tr S^{(\nu)}=1$; they depend on the input state and observed labels $(\lambda,\mu)$, which we omit from the notation. This picture reveals a fundamental insight. To extract entanglement, Alice can encode quantum information using a suitable code isometry $C:\mathbb C^K\to\mathcal P_\lambda$ via the transpose trick:
\begin{align}
(C^T\otimes\id)
\bigl(\id\otimes(V_{\mu,\nu}^{\lambda,\alpha})^\dagger\bigr)
\ket*{\Phi_{\mathcal P_\lambda}}_{A:BE}
=
\sqrt{\frac{K}{d_\lambda}}\,
\bigl(\id_K\otimes(V_{\mu,\nu}^{\lambda,\alpha})^\dagger C\bigr)
\ket{\Phi_K}_{A:BE}.
\label{eq:schur-code-transpose}
\end{align}
The distillation problem then reduces to recovering entanglement from the Choi state of the induced channel
\(\mathcal N_{\lambda\to\mu}\) (\cref{eq:physical-schur-channel-main}).
Furthermore, the channel picture suggests replacing the state-dependent channel in \cref{eq:physical-schur-channel-main} with a worst-case, state-independent channel for the design of the final encoder and decoder. 
To formalize this intuition, for a suitable nonempty set $\mathcal T$ of admissible recoverable sectors indexed by $\nu$, we define the reference channel $\widehat{\mathcal N}_{\lambda\to\mu}^{\mathcal T}:\mathcal L(\mathcal P_\lambda^{BE})\to\mathcal L(\mathcal P_\mu^B)$ by
\be
\widehat{\mathcal N}_{\lambda\to\mu}^{\mathcal T}(X)
\coloneqq
\frac{1}{N_{\lambda\mu}(\mathcal T)}
\sum_{\nu\in\mathcal T} d_\nu
\Tr_{\mathcal P_\nu^E}\!\left[
V_{\mu,\nu}^\dagger
\left(
X\otimes \id_{\mathbb C^{g_{\lambda\mu\nu}}}
\right)
V_{\mu,\nu}
\right],
\label{eq:reference-schur-channel}
\ee
where $N_{\lambda\mu}(\mathcal T)\coloneqq\sum_{\nu\in\mathcal T}g_{\lambda\mu\nu}d_\nu$ and $X\in\mathcal L(\mathcal P_\lambda^{BE})$.  
The reference channel in \cref{eq:reference-schur-channel} assigns weights according to the effective dimensions of the selected error sectors. We then show that this state-independent choice yields a recovery map whose performance on the physical channel, combined with the Schur-spectrum bounds below, establishes achievability of the coherent-information rate.
Universal entanglement distillation can therefore be viewed, sector by sector, as a quantum error-correction, entanglement recovery problem for these canonical Schur channels.
This leads to a simple error-space packing picture. Bob's permutation register has dimension \(d_\mu\), whereas an environment sector \(\nu\) generates a representation-theoretic error space of size $d_\nu$. Since Bob does not know \(\nu\), our decoder includes all admissible error sectors of dimension at most $d_\nu$, together with their multiplicities. For a supplied rank ceiling \(r\ge\rank\rho_{AB}\), we define the corresponding cumulative error volume function
\begin{align}
\mathcal V_{\lambda\mu}^{(r)}(t)
&\coloneqq
\sum_{\substack{\eta\vdash k,\ \ell(\eta)\le r\\ d_\eta\le t}}
g_{\lambda\mu\eta}d_\eta,
\label{eq:sw-volume-information-main}
\end{align}
Then, intuitively, Bob is able to recover a $K$-dimensional logical subspace from Alice's logical encoding if the packing condition
\be
K \ll \frac{d_\mu}{\mathcal V_{\lambda\mu}^{(r)}(d_\nu)}
\label{eq:pack}
\ee
holds. We formalize this intuition by defining $J_k^{(r)} \coloneqq \log_2 \left(d_\mu/
\mathcal V_{\lambda\mu}^{(r)}(d_\nu)\right)$ and the corresponding Schur-Weyl information spectrum quantity as
\be
H_{\mathrm{SW},k}^{\varepsilon,(r)}(A|E)_\psi
\coloneqq
\max\left\{
x\in\mathbb R:
\Pr_{\psi^{\otimes k}}
\!\left[J_k^{(r)}<x\right]
\le\varepsilon
\right\},
\qquad
0\le\varepsilon<1.
\label{eq:sw-spectrum-entropymain}
\ee
This is the finite-copy \(\varepsilon\)-quantile of the packing ratio above: the amount of quantum information (measured in qubits) that fits in Bob's permutation space after accounting for the environment errors that must be corrected. The following theorem then rigorously bounds the recovery error.

\begin{theorem}[Universal entanglement distillation (informal version of \cref{thm:universal-distillation})]
\label{thm:universal-distillation-main}
Fix $k$, a global rank ceiling $r$, and $0<\eta<\varepsilon<1$. For every integer $K\ge1$, there is an explicit universal randomized one-way LOCC protocol depending only on $(d_A,d_B,k,K,r,\eta)$ that, for every state $\rho_{AB}$ with $\rank\rho_{AB}\le r$, distills a maximally entangled state of Schmidt rank $K$ with infidelity at most $\varepsilon$ whenever
\begin{align}
\log_2 K
\le
H_{\mathrm{SW},k}^{\varepsilon-\eta,(r)}(A|E)_\psi
-\log_2(1/\eta).
\label{eq:entropy-achievable-main}
\end{align}
Here $\psi_{ABE}$ is any purification of $\rho_{AB}$; the bound is independent of this choice.
\end{theorem}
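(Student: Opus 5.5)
Protocol. Alice and Bob apply their local Schur transforms to $\rho_{AB}^{\otimes k}$ and measure the Young labels $\lambda$ and $\mu$. Alice sends $\lambda$ to Bob. The unitary registers $\mathcal Q$ are discarded.

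For each pair $(\lambda,\mu)$, a shared random seed selects a code isometry $C:\mathbb C^K\to\mathcal P_\lambda$, drawn Haar-randomly or from a unitary 2-design. Alice measures $C^T$ in the sense of \cref{eq:schur-code-transpose}: she projects onto the code and forwards the success bit classically. This turns the conditional state \cref{eq:physical-schur-channel-main} into $(\idmap\otimes\mathcal N_{\lambda\to\mu}\circ\mathcal C)(\Phi_K)$. Bob then applies a fixed decoder. It is the polar (pretty-good/Petz-type) recovery map built from the reference channel $\widehat{\mathcal N}^{\mathcal T}_{\lambda\to\mu}$ of \cref{eq:reference-schur-channel}, with the admissible set
\[
\mathcal T=\{\eta\vdash k:\ \ell(\eta)\le r,\ d_\eta\le t\}
\]
for a threshold $t$ chosen so that $\log_2(d_\mu/\mathcal V^{(r)}_{\lambda\mu}(t))\ge \log_2K+\log_2(1/\eta)$.

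Everything here depends only on $(d_A,d_B,k,K,r,\eta)$. Since $\rank\rho_{AB}\le r$ forces $\ell(\nu)\le r$ on the $E$ side, the physical error sectors lie among those indexed by $\mathcal T$ whenever $d_\nu\le t$.

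Error analysis. First I check that the Choi state of $\widehat{\mathcal N}^{\mathcal T}$ restricted to the code is flat. By Schur's lemma and the $S_k$-covariance of $V_{\mu,\nu}$, the complementary channel of $\widehat{\mathcal N}^{\mathcal T}$ has Kraus operators $V^{\lambda,\alpha\dagger}_{\mu,\nu}$ acting on an environment of total effective dimension $N_{\lambda\mu}(\mathcal T)=\mathcal V^{(r)}_{\lambda\mu}(t)$. The output $\widehat{\mathcal N}(\id/d_\lambda)$ is $\id/d_\mu$ on Bob's side, since permutation invariance forces maximally mixed marginals, as in \cref{lem:orthperm}.

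Next I apply a standard decoupling/one-shot quantum coding bound, averaged over the random code. It gives an expected infidelity on the reference channel of order
\[
\sqrt{K\,N_{\lambda\mu}(\mathcal T)/d_\mu}\;\le\;\eta .
\]
Equivalently, the conditional min-entropy $\log_2(d_\mu/N)$ of the flat Choi state, matching the $H_{\min}$ identity after \cref{thm:tripartitemain}, controls decoupling from the $\mathcal T$-environment.

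The physical channel on a branch $\nu\in\mathcal T$ has Choi operator dominated by a constant times the reference Choi operator. This holds because $S^{(\nu)}\le(\Tr S^{(\nu)})\id$ and the weights $d_\nu$ are built into $\widehat{\mathcal N}$. I then transfer the reference-channel recovery guarantee to the physical channel using the polar-recovery property: the same Uhlmann/polar isometry serves every channel whose complementary output decouples, and the fidelity bound is linear in the branch weights. Branches with $\nu\notin\mathcal T$, meaning $d_\nu>t$, are charged fully as errors.

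Combining. The infidelity is at most the probability of these bad branches plus the coding error, roughly
\[
\varepsilon\le \Pr[J^{(r)}_k<\log_2K+\log_2(1/\eta)]+\eta .
\]
Here I use that the tripartite decomposition \cref{thm:tripartitemain} makes the joint law of $(\lambda,\mu,\nu)$ equal to the $\psi^{\otimes k}$-distribution entering \cref{eq:sw-spectrum-entropymain}. Under hypothesis \eqref{eq:entropy-achievable-main} the first term is at most $\varepsilon-\eta$ by the quantile definition. Independence from the purification follows because purifications differ by an isometry on $E$, which leaves the $E$-Schur labels' distribution invariant.

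The main obstacle is the transfer step. The physical channel is a $\nu$-dependent mixture with unknown Gram matrices $S^{(\nu)}$ that may be coherent across multiplicity indices $\alpha$. Showing that a decoder designed for the flat reference channel still works requires a careful operator inequality relating the physical complementary channel to the reference one, with constants absorbed into $\log_2(1/\eta)$. A second issue is that Bob does not know $\nu$. The threshold $t$ must be tied to the quantile rather than to the realized $\nu$, and a randomized or $\lambda,\mu$-dependent choice of $t$ is likely needed.
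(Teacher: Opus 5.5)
Your overall architecture matches the paper's: Schur labels, transpose-trick coding on $\mathcal P_\lambda$, a polar decoder for the weighted reference channel on an initial segment of sectors ordered by $d_\nu$, and a final bound of the form bad-branch probability plus coding term. Your closing worry about the threshold is already settled by your own choice: taking $t$ maximal gives $\nu\notin\mathcal T\iff J_k^{(r)}<\log_2K+\log_2(1/\eta)$, by monotonicity of $\mathcal V_{\lambda\mu}^{(r)}$. There are, however, genuine gaps.

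\textbf{Encoder.} A two-outcome projection onto a single code succeeds with probability $K/d_\lambda$, and this is fatally small. On any branch where a physical sector is accepted, Kronecker symmetry gives $d_\mu\le d_\lambda d_\nu\le d_\lambda\mathcal V_{\lambda\mu}^{(r)}(d_\nu)$, so your acceptance condition forces $K\le\eta\,d_\lambda$. The code state is therefore produced with probability at most $\eta$, and no decoder can rescue the claimed bound. The paper instead proceeds as follows.
\begin{itemize}
\item It tiles $\mathcal P_\lambda$ into $\lfloor d_\lambda/K\rfloor$ orthogonal $K$-dimensional blocks plus a remainder.
\item It rotates the whole tiling by one common Haar (or $2$-design) unitary that depends on $\lambda$ only. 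Alice never learns $\mu$, so the code cannot be indexed by $(\lambda,\mu)$ as you wrote.
\item It performs the complete instrument with Kraus operators $C_t^T$.
\end{itemize}
Each block is then marginally Haar, occurs with probability $K/d_\lambda$, and leaves $p_\nu$ and $S^{(\nu)}$ unchanged. The remainder has probability $q<K/d_\lambda$ and is absorbed through $q+(1-q)\beta\le KN/d_\mu\le\eta$.

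\textbf{Transfer step.} This is the step you flag as the main obstacle, and it is the heart of the proof; the domination route cannot close it. In the orthonormal basis $(\id\otimes F_i)\ket{\Phi_{d_\lambda}}$, the accepted physical Choi operator is represented by $T=\bigoplus_{\nu\in\mathcal T}S^{(\nu)}\otimes\id_{d_\nu}/d_\nu$, while the reference one is $\id_N/N$. The domination constant is therefore $N\|T\|_\infty$. This is state dependent and can be exponentially large, for instance when the weight sits on sectors with small $d_\nu$ under a loose ceiling $r$. Even a branchwise comparison via $S^{(\nu)}\le(\Tr S^{(\nu)})\id$ inflates the weight by up to $g_{\lambda\mu\nu}$. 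None of this can be absorbed into $\log_2(1/\eta)$.

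The missing idea is that no comparison is needed. Set $W_C(\ket{x}\otimes\ket{i})=F_iC\ket{x}$.
\begin{itemize}
\item The accepted code state is exactly $(\id\otimes W_C)(\Phi_K\otimes T)(\id\otimes W_C^\dagger)$, with $T$ independent of $C$ and $\Tr T=p_{\rm good}$.
\item The polar decoder of $W_C=U_C\sqrt{G_C}$ depends only on $C$ and the reference errors, and its error is at most $\Tr[(\pi_K\otimes T)(G_C-\id)^2]$.
\item Using $\Tr(F_i^\dagger F_j)=d_\lambda\delta_{ij}$ and $\sum_iF_iF_i^\dagger=(d_\lambda N/d_\mu)\id$, the Haar second moment gives $\frac1K\mathbb E_C\Tr_K(G_C-\id)^2=\beta\,\id_N$. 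This is a scalar on the error register.
\end{itemize}
Hence the averaged error is at most $\beta\Tr T$ for every $T\ge0$, including coherences across $\alpha$ (\cref{lem:normalized-polar}).

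\textbf{Square-root loss.} The same lemma also repairs your arithmetic. A decoupling bound of order $\sqrt{KN/d_\mu}$, under your threshold $KN/d_\mu\le\eta$, yields $\sqrt\eta$ rather than $\eta$. You would end with $\varepsilon-\eta+O(\sqrt\eta)$ and would need a margin of $2\log_2(1/\eta)$. The Gram-defect bound is linear, $\beta\le KN/d_\mu-K/d_\lambda$, and this is exactly what produces $p_k+(1-p_k)/L$ with $L=1/\eta$.
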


Operationally, in our protocol, Alice and Bob first apply their local Schur transforms and measure their sector labels $(\lambda,\mu)$. Using a shared random seed, Alice applies a unitary drawn from a unitary $2$-design on $\mathcal P_\lambda^A$ (an ensemble matching Haar second moments), measures the code blocks, and sends the outcome together with $\lambda$ to Bob. Bob then applies the recovery map constructed from the reference Schur channel in \cref{eq:reference-schur-channel}. The packing criterion in \cref{eq:pack} bounds the infidelity averaged over the seed. For each fixed input, Markov's inequality guarantees infidelity at most $\sqrt{\varepsilon}$ with probability at least $1-\sqrt{\varepsilon}$; hence vanishing average infidelity implies vanishing error with probability tending to one over the seed.
The protocol is insensitive to local changes of basis: collective rotations $U_A^{\otimes k}$ and $U_B^{\otimes k}$ act only on the unitary registers, leaving the Schur labels and permutation registers unchanged. Moreover, retaining each party's Schur label and unitary register preserves enough information to reconstruct $\rho_A^{\otimes k}$ and $\rho_B^{\otimes k}$ exactly by state-independent local channels: each party replaces the permutation register by the maximally mixed state in the recorded sector and applies the inverse Schur transform (\cref{prop:perfect-local-recoverability}). This recovery holds after averaging over measurement outcomes and concerns the local marginals, not the original bipartite correlations.
The mechanism behind our protocol can also be understood as an approximate Knill--Laflamme \cite{BarnumKnill2002} quantum error correction condition for the reference channel. Writing its Kraus operators as $F_{\nu,\alpha,e}/\sqrt{N_{\lambda\mu}(\mathcal T)}$, with
\begin{align}
F_{\nu,\alpha,e}
\coloneqq
\sqrt{d_\nu}\,
(\id_{\mathcal P_\mu^B}\otimes\bra e_{\mathcal P_\nu^E})
(V_{\mu,\nu}^{\lambda,\alpha})^\dagger,
\qquad \nu\in\mathcal T,
\end{align}
with $\ket{e}_{\mathcal P_\nu^E}$ an orthonormal basis choice of $\mathcal P_\nu^E$, the entanglement recovery property of Alice's code $\mathcal{C}$ can be expressed as
\begin{align}
C^\dagger F_{\nu,\alpha,e}^\dagger F_{\eta,\beta,f}C
\approx
\delta_{\nu\eta}\delta_{\alpha\beta}\delta_{ef}\,\id_K.
\label{eq:schur-knill-laflamme}
\end{align}
The approximation is quantified by the averaged Gram-defect bound in \cref{lem:normalized-polar}.
Operationally, these conditions ensure that the environment carries essentially no information about the encoded state.

\subsection{Finite-shot analysis and asymptotic achievability}
We now connect the Schur-Weyl information spectrum appearing in \cref{thm:universal-distillation-main} to other known entropic quantities, and derive explicit asymptotically achievable universal entanglement distillation rates.
In particular, we show that the Petz--R\'enyi conditional entropy $H_{1+s}^{\downarrow}(A|E)_\psi$ introduced above bounds the lower tail of the Schur packing statistic and thereby controls finite-copy distillation rates. The downward arrow specifies evaluation at the actual conditioning marginal, without optimization; its expression in terms of $\rho_{AB}$ follows from purification duality \cite{TomamichelBertaHayashi2014}, as detailed in \eqref{eq:petz-conditional-duality}. More precisely, we show:

\begin{theorem}[Petz--R\'enyi spectrum bound and error exponent]
\label{thm:petz-spectrum}
Let $\psi_{ABE}$ purify $\rho_{AB}$ and fix a rank ceiling
$r\ge\rank\rho_{AB}$. For every $k\ge1$, $0<\varepsilon<1$,
and $0<s<1$, the Schur-Weyl information spectrum satisfies
\begin{align}
H_{\mathrm{SW},k}^{\varepsilon,(r)}(A|E)_\psi
\ge {}
kH_{1+s}^{\downarrow}(A|E)_\psi
-O_{r_A,r_B,r}\!\bigl(\log_2(k+1)\bigr)
-\frac{1+s}{s}\log_2\frac1\varepsilon.
\label{eq:main-petz-spectrum}
\end{align}
Consequently, for every $t>0$ and every nonnegative rate
\begin{align}
R<
\sup_{0<s<1}
\left\{
H_{1+s}^{\downarrow}(A|E)_\psi
-\frac{1+2s}{s}\,t
\right\},
\label{eq:main-petz-error-exponent}
\end{align}
the universal protocol with $K_k=2^{\lfloor kR\rfloor}$ achieves averaged infidelity at most $2^{-kt}$ for all sufficiently large $k$.
\end{theorem}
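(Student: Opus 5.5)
The plan is to bound the lower tail of $J_k^{(r)}=\log_2 d_\mu-\log_2\mathcal V^{(r)}_{\lambda\mu}(d_\nu)$ with a Chernoff (Markov-type) argument at order $s$. The key is to control the moment $\mathbb E[2^{-sJ_k^{(r)}}]$ by a Petz--R\'enyi quantity of $\rho_{AB}$.

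\textbf{Step 1: simplify the error volume.} Since $\mathcal V^{(r)}_{\lambda\mu}(d_\nu)$ sums $g_{\lambda\mu\eta}d_\eta$ only over $\eta$ with $d_\eta\le d_\nu$, we have
\begin{align}
\mathcal V^{(r)}_{\lambda\mu}(d_\nu)\le d_\nu\sum_{\eta:\ell(\eta)\le r} g_{\lambda\mu\eta}.
\end{align}
The Kronecker coefficients satisfy $g_{\lambda\mu\eta}\le \min(D^{(r_A r_B)}_\eta,\dots)$, which is polynomial because $\ell(\lambda)\le r_A$ and $\ell(\mu)\le r_B$. There are also at most $(k+1)^{r}$ diagrams $\eta$. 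Hence $J_k^{(r)}\ge \log_2(d_\mu/d_\nu)-O_{r_A,r_B,r}(\log_2(k+1))$, and it suffices to study the statistic $\log_2(d_\mu/d_\nu)$ under the joint Schur-label distribution $p(\lambda,\mu,\nu)$ induced by $\psi^{\otimes k}$.

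\textbf{Step 2: the moment bound.} I would bound
\begin{align}
M_s\coloneqq\sum_{\mu,\nu}p(\mu,\nu)\,(d_\nu/d_\mu)^{s}.
\end{align}
The joint probability can be written as $p(\mu,\nu)=\Tr[(P_\mu^B\otimes P_\nu^E)\psi_{BE}^{\otimes k}]$ with Schur projectors $P$, and $P_\mu^B=\id_{\mathcal P_\mu}\otimes\id_{\mathcal Q_\mu}$. The standard spectrum-estimation operator inequalities (Keyl--Werner, Hayashi) give $d_\mu\,P_\mu\,\sigma^{\otimes k} P_\mu\lesssim \mathrm{poly}(k)\,\dots$. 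More usefully, the inequality $d_\nu P_\nu^E\le \mathrm{poly}(k)\, (\psi_E^{\otimes k})^{-1}$ holds on the support, in the sense that $d_\nu\le \mathrm{poly}(k)\,2^{kH(\bar\nu)}$ while the eigenvalues of $\psi_E^{\otimes k}$ on the $\nu$ sector are $\approx 2^{-kH(\bar\nu)}$; the symmetric statement holds for $\mu$. From these I would obtain the operator bounds
\begin{align}
\sum_\nu d_\nu^{s}P_\nu^E\le \mathrm{poly}(k)\,(\psi_E^{\otimes k})^{-s},\qquad \sum_\mu d_\mu^{-s}P_\mu^B\le\mathrm{poly}(k)\,(\psi_B^{\otimes k})^{s},
\end{align}
both restricted to the supports. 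Since the two projector families act on different systems and commute, this yields
\begin{align}
M_s\le\mathrm{poly}(k)\Tr\!\big[\psi_{BE}^{\otimes k}\,(\psi_B^{s}\otimes\psi_E^{-s})^{\otimes k}\big].
\end{align}
The step I expect to be the main obstacle is converting this into the Petz form $\Tr[\psi_{AE}^{1+s}\psi_E^{-s}]^k=2^{-ksH^\downarrow_{1+s}(A|E)}$, equivalently, via \eqref{eq:petz-conditional-duality}, $\Tr[\rho_{AB}^{1-s}\rho_B^s]$. Rather than using the expression above directly, I would bound the moment more carefully using purification duality: using $p_\lambda$ and the maximal entanglement on $\mathcal P_\lambda$, I would apply the pinching over $\lambda,\nu$ sectors of $\psi_{AE}^{\otimes k}$ and Schur-sector identities. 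The pinching inequality costs only $\mathrm{poly}(k)$ factors. The rank dependence should enter only through the counts of diagrams and the unitary-register dimensions, which are bounded by the global rank $r$ and the local ranks.

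\textbf{Step 3: conclude both claims.} Markov's inequality gives
\begin{align}
\Pr[\log_2(d_\mu/d_\nu)<x]\le 2^{sx}M_s\le 2^{s x - ksH_{1+s}^{\downarrow}+O(\log k)}.
\end{align}
Setting this equal to $\varepsilon$ gives a quantile of $kH-O(\log k)-\frac1s\log\frac1\varepsilon$, which is at least as strong as the claimed bound with $\frac{1+s}{s}$; the slack absorbs any Hölder step used in Step 2. Combined with Step 1, this gives \eqref{eq:main-petz-spectrum}. For the exponent, I would plug into \cref{thm:universal-distillation-main} with $\eta=\varepsilon/2$ and $\varepsilon=2^{-kt}$. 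This requires $kR\le kH-O(\log k)-\frac{1+s}{s}(kt+1)-(kt+1)$, which asymptotically reads $R<H^\downarrow_{1+s}-\frac{1+2s}{s}t$, and we then optimize over $s$.
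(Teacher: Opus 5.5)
There is a genuine gap in Step~2. Your $E$-side inequality is correct, and holds even without the polynomial factor: on the $\nu$ sector every eigenvalue $y$ of $q_\nu(\rho_E)$ obeys $y\le\Tr q_\nu(\rho_E)\le 1/d_\nu$. The conversion you flag as the main obstacle is in fact immediate: the Schmidt decomposition of $\psi_{ABE}$ across $AB:E$ gives $\Tr[\rho_{BE}(\rho_B^{s}\otimes\rho_E^{-s})]=\Tr[\rho_{AB}^{1-s}(\id_A\otimes\rho_B^{s})]=2^{-sH_{1+s}^{\downarrow}(A|E)_\psi}$.

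What fails is the Bob-side bound $\sum_\mu d_\mu^{-s}P_\mu^B\le\mathrm{poly}(k)(\rho_B^{\otimes k})^{s}$. On the sector $\mu$, $\rho_B^{\otimes k}$ acts as $\id\otimes q_\mu(\rho_B)$, whose eigenvalues $\prod_i b_i^{w_i}$ range over all weights $w$ of $\mu$, down to the lowest one. For $\mu=(k-1,1)$ the smallest is $b_1b_2^{k-1}$, while $d_\mu^{-1}=1/(k-1)$. Only $y\le 1/d_\mu$ holds pointwise, so there is no ``symmetric statement'' for $\mu$.

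Nor can pinching rescue the plan, because the order-$s$ moment bound itself is false. Take a pure $\rho_{AB}$ (trivial $E$, $d_\nu=1$), let $b$ be the spectrum of $\rho_B$, and write $H_\alpha(b)$ for its classical R\'enyi entropy. Then $M_s=\sum_\mu d_\mu^{1-s}s_\mu(b)\ge\max_\mu d_\mu^{1-s}\prod_i b_i^{\mu_i}$. Choosing $\bar\mu\propto b^{1/(1-s)}$ and using \cref{lem:rank-dimensions} gives $M_s\ge2^{-ksH_{1/(1-s)}(b)-O(\log_2 k)}$. Your claim is instead $M_s\le\mathrm{poly}(k)\,2^{-ksH^{\downarrow}_{1+s}(A|E)_\psi}=\mathrm{poly}(k)\,2^{-ksH_{1+s}(b)}$. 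Since $1/(1-s)>1+s$, we have $H_{1/(1-s)}(b)<H_{1+s}(b)$ for nonuniform $b$. For example, with $b=(0.9,0.1)$ and $s=\tfrac12$ the two sides behave like $0.906^k$ versus $0.885^k$. Accordingly, the quantile with penalty $\frac1s\log_2\frac1\varepsilon$ that you derive in Step~3 fails for exponentially small $\varepsilon$. The factor $\frac{1+s}{s}$ in the statement is not slack.

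The missing idea is to control the Bob-side defect only through its exponential moment, and to pay for its correlation with the information density via H\"older. The Schur projectors commute with the spectral projectors of $\rho_B^{\otimes k}$ and $\rho_E^{\otimes k}$. One joint measurement on $\psi_{ABE}^{\otimes k}$ therefore yields the classical coupling $\log_2(d_\mu/d_\nu)=\widehat Z_k+\Delta_{E,k}-\Delta_{B,k}$ of \cref{lem:schur-classical}. Here $\widehat Z_k$ is an i.i.d.\ sum with $\mathbb E2^{-sZ}=\Tr[\rho_{AB}^{1-s}(\id_A\otimes\rho_B^{s})]$, and $\Delta_{E,k}\ge0$ pointwise (your valid inequality). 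The remaining term $\Delta_{B,k}\ge0$ is controlled only through $\mathbb E2^{\Delta_{B,k}}=\sum_\mu D_\mu^{(r_B)}\le(k+1)^{b_{r_B}}$. Since $\Delta_{B,k}$ and $\widehat Z_k$ are dependent, apply H\"older with exponents $1+s$ and $(1+s)/s$ to the moment of order $v=s/(1+s)$:
\[
\mathbb E\,2^{-vJ_k^{(r)}}\le\bigl(\mathbb E\,2^{-s\widehat Z_k}\bigr)^{1/(1+s)}\bigl(\mathbb E\,2^{\Delta_{B,k}+\chi\log_2(k+1)}\bigr)^{v}\le C_k^{v}\,2^{-vkH_{1+s}^{\downarrow}(A|E)_\psi}.
\]
Markov's inequality at order $v$ then gives exactly the penalty $\frac1v\log_2\frac1\varepsilon=\frac{1+s}{s}\log_2\frac1\varepsilon$ (\cref{lem:petz-schur-moment}), and this is tight in the pure example. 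Your Step~1, which matches \cref{lem:rank-Q}, and the exponent bookkeeping in Step~3 then go through unchanged.
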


The bound separates three contributions: an extensive entropy term, a logarithmic representation-theoretic overhead, and a penalty for the desired error. Combined with \cref{thm:universal-distillation-main}, it gives a finite-copy distillation guarantee directly in terms of a standard conditional entropy. The order $s$ controls the tradeoff: taking $s$ closer to zero improves the entropy term but increases the error penalty.
This tradeoff also explains the transition to the asymptotic regime. At fixed ranks, the overhead per copy vanishes, while
\be
\lim_{s\downarrow0}H_{1+s}^{\downarrow}(A|E)_\psi
=
H(A|E)_\psi
=
I(A\rangle B)_\rho.
\ee
Here $H(A|E)_\psi\coloneqq H(AE)_\psi-H(E)_\psi$. Taking $k\to\infty$ at fixed $s$, followed by $s\downarrow0$, therefore recovers the coherent-information rate. The same argument allows the error to vanish whenever $\log_2(1/\varepsilon_k)=o(k)$. At any fixed rate strictly below coherent information, a sufficiently small fixed $s$ also allows exponentially decreasing error.
A more refined, regularized version in \cref{lem:regularized-schur} also removes explicit local-rank dependence from the overhead. 
With a more careful analysis, we can describe fluctuations around the coherent-information rate through a classical information spectrum. 
The coupling in \eqref{eq:advanced-sw-classical-coupling} relates the Schur packing statistic to a sum $\widehat Z_k\coloneqq\sum_{t=1}^k Z_t$ of independent outcomes of the single-copy observable $\id_B\otimes\log_2\rho_E-\log_2\rho_B\otimes\id_E$ measured on $\rho_{BE}$. In particular, $J_k^{(r)}=\widehat Z_k+O(\log_2(k+1))$ with high probability at fixed ranks.
Each variable has mean $\mathbb E Z_t=I(A\rangle B)_\rho$ and variance $\operatorname{Var}(Z_t)=V(A\rangle B)_\rho$, where
\be
V(A\rangle B)_\rho
=
\Tr\rho_{AB}\!\left[
\log_2\rho_{AB}-\log_2(\id_A\otimes\rho_B)-I(A\rangle B)_\rho\id_{AB}
\right]^2
\ee
is the coherent information variance. 
The mean determines the leading achievable rate, while the variance controls the finite-copy fluctuations. For $V>0$, the Berry--Esseen theorem places the lower $\varepsilon$-quantile of the classical sum at $kI(A\rangle B)_\rho+\sqrt{kV}\,\Phi^{-1}(\varepsilon)+O(1)$ for fixed $\varepsilon\in(0,1)$, where $\Phi$ is the standard normal distribution function. Combining this Gaussian approximation with the logarithmic Schur corrections and the packing criterion yields the following second-order achievability bound.

\begin{theorem}[Coherent-information rate and optimal second-order achievability (informal version of \cref{prop:first-order,prop:second-order})]
\label{thm:main-asymptotic}
The universal Schur--Weyl entanglement distillation protocol of \cref{thm:universal-distillation-main}, at fixed averaged infidelity $\varepsilon\in(0,1)$, achieves the asymptotic rate 
\begin{align}
\frac{\log_2 K_k}{k}
\ge I(A\rangle B)_\rho+\sqrt{\frac{V(A\rangle B)_\rho}{k}}\,\Phi^{-1}(\varepsilon)
-O\!\left(\frac{\log_2 k}{k}\right),
\label{eq:main-second-order}
\end{align}
in particular, the coherent information rate is achievable with exponentially vanishing error. Furthermore, this bound is optimal among the class of universal entanglement distillation protocols. Optimality is achieved on the class of maximally correlated states \cite[Proposition~10]{Fang_2019}.
\end{theorem}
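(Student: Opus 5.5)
The plan is to combine \cref{thm:universal-distillation-main} with a second-order lower bound on the Schur--Weyl information spectrum $H_{\mathrm{SW},k}^{\varepsilon,(r)}(A|E)_\psi$. We fix $\varepsilon\in(0,1)$ and choose $\eta=\eta_k=1/k$. Then \eqref{eq:entropy-achievable-main} permits
\[
\log_2 K_k=\bigl\lfloor H_{\mathrm{SW},k}^{\varepsilon-1/k,(r)}(A|E)_\psi-\log_2 k\bigr\rfloor .
\]
The $\log_2(1/\eta)=\log_2 k$ penalty is absorbed into the $O(\log_2 k/k)$ term. It therefore suffices to prove
\[
H_{\mathrm{SW},k}^{\varepsilon',(r)}(A|E)_\psi\ge kI(A\rangle B)_\rho+\sqrt{kV}\,\Phi^{-1}(\varepsilon')-O(\log_2 k)
\]
uniformly for $\varepsilon'$ in a compact subinterval of $(0,1)$. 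Replacing $\varepsilon-1/k$ by $\varepsilon$ then costs only $O(1/\sqrt k)$ in the quantile via continuity of $\Phi^{-1}$, which is $O(1)$ after multiplying by $\sqrt{k}$.

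\textbf{Step 1: coupling to a classical sum.} Invoke the coupling \eqref{eq:advanced-sw-classical-coupling}. It gives $J_k^{(r)}\ge \widehat Z_k-c\log_2(k+1)$ outside an event of probability $\delta_k=O(1/\sqrt k)$, where $\widehat Z_k=\sum_{t=1}^k Z_t$ is a sum of i.i.d.\ copies of the single-copy observable with mean $I(A\rangle B)_\rho$ and variance $V(A\rangle B)_\rho$. We aim for polynomially small $\delta_k$, say $\delta_k\le 1/k$. The underlying ingredients are three. First, the Keyl--Werner type estimates $d_\mu\approx 2^{kH(\lambda/k)}$ up to $\poly(k)$ factors, with $\dim\mathcal Q^d_\lambda=\poly(k)$ at fixed ranks. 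Second, the volume $\mathcal V^{(r)}_{\lambda\mu}(d_\nu)$ is at most $\poly(k)\cdot d_\nu$, because the number of diagrams with at most $r$ rows is $\poly(k)$ and $g_{\lambda\mu\eta}\le \min(d_\lambda,\ldots)$ can be bounded polynomially after conditioning. Third, the joint distribution of the Schur labels $(\mu,\nu)$ on $\rho_{BE}^{\otimes k}$ is sandwiched against the distribution of the eigenvalue labels of $\log\rho_B^{\otimes k}$ and $\log\rho_E^{\otimes k}$ up to polynomial factors; this is the pinching/type argument.

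\textbf{Step 2: quantiles.} Given Step 1, for any $x$,
\[
\Pr[J_k<x]\le \Pr[\widehat Z_k<x+c\log_2(k+1)]+\delta_k .
\]
For $V>0$, Berry--Esseen gives $\Pr[\widehat Z_k<kI+\sqrt{kV}\,y]\le\Phi(y)+C/\sqrt k$. Choose
\[
y=\Phi^{-1}\bigl(\varepsilon'-\delta_k-C/\sqrt k\bigr)=\Phi^{-1}(\varepsilon')-O(1/\sqrt k)
\]
and $x=kI+\sqrt{kV}\,y-c\log_2(k+1)$. Then $\Pr[J_k<x]\le\varepsilon'$, which yields the quantile bound. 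When $V=0$, $\widehat Z_k=kI$ almost surely and the bound is immediate. For the exponential-error claim at rates below $I(A\rangle B)_\rho$, cite \cref{thm:petz-spectrum} directly.

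\textbf{Step 3: optimality.} Universal protocols are in particular valid protocols for each fixed state. It therefore suffices to exhibit a state for which every one-way LOCC protocol obeys a matching converse. Take a maximally correlated state $\rho=\sum_{ij}c_{ij}\ket{ii}\bra{jj}$ with $V>0$ and invoke \cite[Proposition~10]{Fang_2019}: its $\varepsilon$-one-shot distillable entanglement is at most $kI+\sqrt{kV}\,\Phi^{-1}(\varepsilon)+O(\log k)$, which matches \eqref{eq:main-second-order} up to the logarithmic term.

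The main obstacle is Step 1, specifically controlling $\mathcal V^{(r)}_{\lambda\mu}(d_\nu)$ from above by $d_\nu\cdot\poly(k)$ together with the Kronecker multiplicities. A naive bound $g_{\lambda\mu\eta}\le d_\eta$ would be far too weak. I would first try $g_{\lambda\mu\eta}\le\min\{\dim\mathcal Q_\lambda^{d_Bd_E}\}$, which is $\poly(k)$ when the global rank $r$ is fixed. Combined with the monotone cutoff $d_\eta\le d_\nu$ and the $\poly(k)$ count of diagrams, this gives $\mathcal V\le\poly(k)\,d_\nu$. The coupling then reduces to showing that $\log_2(d_\mu/d_\nu)$ concentrates like $\widehat Z_k$ up to $O(\log k)$.
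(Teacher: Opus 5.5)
Your proposal is correct and is essentially the paper's argument: the coupling \eqref{eq:advanced-sw-classical-coupling} with a Markov bound on $\Delta_{B,k}$, so that the exceptional event has probability $O((k+1)^{-2})$; Berry--Esseen for $\widehat Z_k$; the master bound of \cref{thm:universal-distillation}; and the maximally correlated converse \cite[Proposition~10]{Fang_2019} for optimality. The multiplicity control you flag as the main obstacle is already settled by \cref{lem:rank-Q}, which gives $Q_{\lambda\mu}^{(r)}\le D_\lambda^{(r_Br)}\le(k+1)^\chi$.

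The only differences from \cref{prop:second-order} are in bookkeeping and in one citation. You pass through the $H_{\mathrm{SW},k}$ quantile with $\eta=1/k$ and absorb the Berry--Esseen remainder by shifting $\Phi^{-1}$, at a cost of $O(1)$ bits. The paper instead takes $L_k=(k+1)^2$ and keeps an extra $b\log_2(k+1)$ slack that dominates the $O(k^{-1/2})$ remainder. For the exponential-error claim, you invoke \cref{thm:petz-spectrum} together with $H_{1+s}^\downarrow(A|E)_\psi\to I(A\rangle B)_\rho$, rather than the Pinsker-type weak Schur sampling bounds of \cref{prop:first-order}.
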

Here the remainder constant may depend on the fixed input state, rank ceiling, and target error, and we omitted the direct dependence for simplicity of notation.
The two statements describe complementary regimes: a fixed rate below $I(A\rangle B)_\rho$ permits exponentially vanishing error, whereas a fixed error permits rates approaching $I(A\rangle B)_\rho$ with a correction of order $k^{-1/2}$ per copy. This result shows that, surprisingly, universality introduces no additional asymptotic penalty at order $\sqrt{k}$ relative to the known second-order achievable bound \cite[Theorem~7]{Fang_2019}. For maximally correlated states, a matching converse makes this expansion optimal to second order \cite[Proposition~10]{Fang_2019}.

\subsection{Sample complexity bounds}\label{sec:main-samples}

The asymptotic rates given in the previous section do not by themselves determine how many copies are needed to approach a given yield with high probability. We now give two complementary guarantees: one approaches coherent information with a sample-complexity cost controlled by the local and global ranks, while the other removes local-rank dependence under a conditional R\'enyi-entropy promise. Throughout, the target averaged infidelity $\varepsilon$ is fixed, $0<\delta\le1$, and $\widetilde O$ suppresses logarithmic factors. The displayed bounds use the supplied ceiling $r=r_{AB}$; for a looser ceiling, one can replace $r_{AB}$ by $r$. 
To achieve rate at least $I(A\rangle B)_\rho-\delta$, the sufficient conditions in \cref{eq:sample-condition-concentration,eq:sample-condition-overhead} show that it suffices to take
\begin{align}
k
=
\widetilde O\!\left(
\frac{r_A r_B r_{AB}}{\delta^2}
\right)
\label{eq:main-coherent-sample-complexity}
\end{align}
input copies.
Importantly, the cost is controlled by spectral ranks rather than Hilbert space dimensions, largely bypassing the cost of full-state tomography.
A complementary bound removes this local-rank dependence by using an entropy promise to control the lower tail of the Schur spectrum. The underlying Petz--R\'enyi entropy estimate is developed in Appendix~\ref{app:min-entropy}. To connect with conditional min-entropy and the pure-state results of Ref.~\cite{Leone_2025}, where the min-entropy emerged as a sample-efficient achievable rate, we state its consequence for the optimized sandwiched conditional R\'enyi entropy $\widetilde H_\alpha^\uparrow(A|E)_\psi\coloneqq-\inf_{\sigma_E\in\mathcal D(E)}\widetilde D_\alpha(\rho_{AE}\Vert\id_A\otimes\sigma_E)$, where $\widetilde D_\alpha(\omega\Vert\sigma)\coloneqq(\alpha-1)^{-1}\log_2\Tr[(\sigma^{(1-\alpha)/(2\alpha)}\omega\sigma^{(1-\alpha)/(2\alpha)})^\alpha]$. For $\alpha>1$, negative powers act on the reference support, and the divergence is $+\infty$ unless $\supp\omega\subseteq\supp\sigma$. The upward arrow denotes optimization over the conditioning state; see also \eqref{eq:conditional-renyi}. For every fixed $\alpha>1$, \cref{prop:min-entropy-samples} shows that the promise $\widetilde H_\alpha^\uparrow(A|E)_\psi\ge R+\delta$, with $R\ge0$, guarantees rate $R$ using
\begin{align}
k
=
\widetilde O_\alpha\!\left(
\frac{r_{AB}^4\,2^{2R}}
{\delta^{\,5+2/(\alpha-1)}}
\right)
\label{eq:main-renyi-sample-complexity}
\end{align}
input copies.
Unlike \cref{eq:main-coherent-sample-complexity}, this bound has no explicit dependence on local ranks. The tradeoff is that the target rate is certified by a R\'enyi entropy, which can be smaller than coherent information, and the dependence on the rate gap $\delta$ is worse. Although $\widetilde H_\alpha^\uparrow(A|E)_\psi$ approaches coherent information as $\alpha\downarrow1$, the sample bound is not uniform in this limit.
Under the conditional min-entropy promise $H_{\min}(A|E)_\psi\ge R+\delta$, the corresponding bound becomes
\begin{align}
k
=
\widetilde O\!\left(
\frac{r_{AB}^4\,2^{2R}}
{\delta^5}
\right).
\label{eq:main-minentropy-sample-complexity}
\end{align}
For a bipartite pure state, one has $H_{\min}(A|E)_\psi=-\log_2\|\rho_A\|_\infty$, where $\|\rho_A\|_\infty$ denotes the largest eigenvalue of $\rho_A$. Thus, in the pure-state setting, our guarantee reduces to the entropy criterion underlying the benchmark of Ref.~\cite{Leone_2025}, based on the universal protocol of Ref.~\cite{PhysRevA.75.062338}. In particular, Ref.~\cite{Leone_2025} established the sample-efficient achievable rate $O(\min\{H_{\min}(A)_\psi,\log_2 n\})$.

Our result extends this picture to mixed states. For an $n$-qubit input with polynomially bounded global rank $r_{AB}$, and for fixed $\alpha>1$ and rate tolerance $\delta>0$, we achieve
\begin{align}
\frac{\log_2 K}{k}
\ge
\min\!\left\{
\widetilde H_\alpha^\uparrow(A|E)_\psi,\log_2 n
\right\}
-\delta,
\qquad
k=\poly(n),
\label{eq:main-polynomial-renyi}
\end{align}
thus recovering the conditional min-entropy rate when $\alpha\to\infty$.
Hence, these bounds identify a nonzero-rate sample-efficient regime even when the underlying Hilbert-space dimensions and the local ranks grow exponentially with the number of qubits. Moreover, in forthcoming work, some of us show that the conditional R\'enyi entropy criterion established here is in fact sample-efficiently optimal for mixed states \cite{UtsumiRizzoTakagiLeone}. These results show that Schur--Weyl duality provides a unified framework for sample-efficient entanglement distillation across both pure and mixed states.
We further note that in our setting Alice's encoder can be made computationally efficient: it consists of an efficient Schur transform, an approximate unitary $2$-design on the Specht register, and a measurement into fixed $K$-dimensional code blocks. Its gate complexity is polynomial in $(k,\log_2 d_A,\log_2(1/\varepsilon_{\mathrm{enc}}))$, where $\varepsilon_{\mathrm{enc}}>0$ bounds the additional term in the distillation error guarantee; see the implementation discussion in Appendix~\ref{app:universal}. Efficient implementation of Bob's polar recovery remains instead a separate question.

\subsection{Universal quantum communication}
The same Schur-Weyl error model introduced in \cref{sec:maindist} yields a universal scheme for transmitting quantum information through an unknown quantum channel. We consider a memoryless channel: $k$ independent uses act as the tensor product $\mathcal N^{\otimes k}$. Here, the aim is to transmit a quantum system while preserving its entanglement with an inaccessible reference.
Fix $\xi\in\mathcal D(A')$ as part of the code design, and let $\ket*{\phi_\xi}_{RA'}\in R\otimes A'$ be a purification, with $R\simeq A'$. For a channel $\mathcal N:\mathcal L(A')\to\mathcal L(B)$, the corresponding channel output state is
$\omega_{\mathcal N}\coloneqq(\idmap_R\otimes\mathcal N)(\phi_\xi)\in\mathcal D(RB)$,
whose coherent information is precisely $I_c(\xi,\mathcal N)$ from \eqref{eq:intro-quantum-capacity}. This induced bipartite state provides the link to the distillation setting developed above.
We now formulate the corresponding communication task. A $K$-dimensional code over $k$ channel uses has rate $\log_2 K/k$ qubits per channel use and consists of an encoding channel $\mathcal E:\mathcal L(\mathbb C^K)\to\mathcal L((A')^{\otimes k})$ and a decoding channel $\mathcal D:\mathcal L(B^{\otimes k})\to\mathcal L(\mathbb C^K)$. Together they induce the logical channel
$\Lambda\coloneqq\mathcal D\circ\mathcal N^{\otimes k}\circ\mathcal E$.
Its performance is measured by the entanglement fidelity
$F_e(\pi_K,\Lambda)\coloneqq\bra{\Phi_K}(\idmap_K\otimes\Lambda)(\Phi_K)\ket{\Phi_K}$,
where $\pi_K\coloneqq\id_K/K$; the transmission error is quantified as $1-F_e(\pi_K,\Lambda)$. In our universal construction, Alice and Bob use shared classical randomness to sample a pair $(\mathcal E_s,\mathcal D_s)$ before transmission, with the choice of seed independent of the unknown channel.
Universality means that the ensemble is channel-independent; our high-probability guarantee then holds simultaneously over the prescribed channel family.
The connection with the distillation construction is particularly direct at the encoder. In Schur coordinates,
\be
U_{\rm Schur}^{A'}\xi^{\otimes k}(U_{\rm Schur}^{A'})^\dagger
=
\sum_\lambda
p_\lambda\,
\ketbra{\lambda}{\lambda}
\otimes
\pi_{\mathcal P_\lambda}
\otimes
\xi_{\mathcal Q_\lambda}.
\ee
Thus, conditioned on $\lambda$, the unitary register is in the known state $\xi_{\mathcal Q_\lambda}$, while the permutation register is maximally mixed. To transmit a $K$-dimensional system, Alice simply replaces this maximally mixed register by a codeword: for an isometry $C:\mathbb C^K\to\mathcal P_\lambda$, Alice's encoding channel reads
\be
\mathcal E_{\lambda,C}(X)
\coloneqq
(U_{\rm Schur}^{A'})^\dagger
\left(
\ketbra{\lambda}{\lambda}
\otimes
CXC^\dagger
\otimes
\xi_{\mathcal Q_\lambda}
\right)
U_{\rm Schur}^{A'} .
\label{eq:main-channel-schur-encoder}
\ee
This is the quantum communication counterpart of the transpose encoding used in distillation: there Alice applies $C^T$ to one half of the maximally entangled permutation register, whereas here Alice inserts the logical system directly through $C$ before the physical channel.
After $k$ uses of the channel $\mathcal N$, Bob's Schur sectors carry exactly the same representation-theoretic error family as in the distillation problem, albeit with a different Gram matrix $S^{(\nu)}$. Crucially, permutation covariance implies that, for fixed input sector $\lambda$, the probability of Bob's output label $\mu$ is independent of both the transmitted state and the choice of $C$; see \eqref{eq:schur-output-label-flatness}. Alice's encoder can therefore be fixed before the unknown channel is used, while Bob applies in each $\mu$ sector the same reference-channel polar recovery developed in the distillation protocol above. A finite channel-independent ensemble of such encoder--decoder pairs satisfies the same Schur information-spectrum bound; a covering argument shows that, wit high probability over a shared seed, the sampled pair works simultaneously for the whole promised channel class. More formally, we prove:

\begin{theorem}[Universal quantum communication (informal version of \cref{thm:sw-transmission})]
\label{thm:main-universal-communication}
Fix an input state $\xi$, a rank bound $r$, and $\delta>0$. For every rate $R\ge0$, there exist channel-independent codes with quantum communication rate $R-o(1)$ whose decoding error vanishes exponentially, uniformly for all channels satisfying
\be
\rank\omega_{\mathcal N}\le r,
\qquad
I_c(\xi,\mathcal N)\ge R+\delta.
\label{eq:main-universal-communication}
\ee
The codes can be sampled using shared classical randomness, and a single seed can be fixed to obtain a deterministic universal code.
More generally, for any nonempty channel family $\mathfrak I$, every rate
\be
R<
Q_{\rm reg}(\mathfrak I)
\coloneqq
\sup_{m\ge1}\frac1m
\max_{\xi\in\mathcal D((A')^{\otimes m})}
\inf_{\mathcal N\in\mathfrak I}
I_c(\xi,\mathcal N^{\otimes m})
\label{eq:main-regularized-coherent-information}
\ee
is universally achievable, recovering the optimal compound-channel quantum capacity of Ref.\,\cite{BjelakovicBocheNoetzel2009}.
\end{theorem}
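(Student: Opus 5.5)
The plan is to reduce universal communication to the distillation analysis of \cref{thm:universal-distillation-main,thm:petz-spectrum}, applied to the induced state $\omega_{\mathcal N}$ on $RB$. Let $\ket{\psi_{\mathcal N}}_{RBE}$ be a Stinespring purification. Alice uses the Schur encoder \eqref{eq:main-channel-schur-encoder}: she performs a Schur measurement on a local copy of $\xi^{\otimes k}$ (equivalently, she samples $\lambda$ with probability $p_\lambda$), and she draws $C$ from a $2$-design ensemble on $\mathcal P_\lambda$ using the shared seed. Bob measures $\mu$ and applies the reference-channel polar recovery built from $\widehat{\mathcal N}^{\mathcal T}_{\lambda\to\mu}$.

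The first step is to show that the logical channel $\Lambda$ has the same entanglement fidelity as the distillation protocol run on $\omega_{\mathcal N}^{\otimes k}$. By the transpose trick \eqref{eq:schur-code-transpose}, feeding half of $\Phi_K$ through $C$ is equivalent to Alice applying $C^T$ to the $\mathcal P^R_\lambda$ half of the maximally entangled permutation register in \eqref{eq:bipartite-schur-purification} for $\ket{\phi_\xi}^{\otimes k}$. The $\mathcal Q$ register is in the known state $\xi_{\mathcal Q_\lambda}$, so its purification plays no role. By the label-flatness \eqref{eq:schur-output-label-flatness}, the conditional channel $\lambda\to\mu$ has the form \eqref{eq:physical-schur-channel-main} with some channel-dependent Gram matrices $S^{(\nu)}$. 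The one difference from distillation is that the code is inserted deterministically rather than selected by a measurement outcome, and this is only favourable. Hence \cref{thm:universal-distillation-main} gives the seed-averaged infidelity bound $\varepsilon$ whenever $\log_2K\le H^{\varepsilon-\eta,(r)}_{\mathrm{SW},k}(R|E)_{\psi_{\mathcal N}}-\log_2(1/\eta)$.

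The second step converts this into an exponential guarantee that is uniform over the promise class. Under $\rank\omega_{\mathcal N}\le r$ we have $r_R\le d_{A'}$ and $r_B\le r\,d_{A'}$, so the overhead term in \eqref{eq:main-petz-spectrum} is $O(\log k)$ uniformly in $\mathcal N$. We then need a fixed $s>0$ with $H^\downarrow_{1+s}(R|E)_{\psi_{\mathcal N}}\ge R+\delta/2$ for every admissible $\mathcal N$. I would obtain this from a uniform continuity estimate $|H^\downarrow_{1+s}-H|\le c\,s\log^2(d_{A'}r)$ on states of bounded dimension, for example by bounding the second derivative of the Petz quasi-entropy in $s$ in terms of the ranks. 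This is where the rank bound enters. \cref{thm:petz-spectrum} then yields averaged infidelity $2^{-kt}$ with $t>0$ independent of $\mathcal N$, at rate $R-o(1)$.

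The main obstacle is the third step: passing from seed-averaged error for each channel to a single seed that works simultaneously for the whole, generally uncountable, class. I would replace the $2$-design by a finite ensemble of $M=2^{O(k)}$ i.i.d.\ Haar samples. By a Hoeffding bound, the empirical average infidelity for any fixed channel exceeds $2^{-kt/2}$ with probability at most $\exp(-M2^{-kt})$. I would then take a $\gamma$-net of the channel class in diamond norm with $\gamma=2^{-kt}/k$; its size is $(1/\gamma)^{O(d_{A'}^2d_B^2)}=2^{O(k)}$. For channels with $d_B$ unbounded, I would first restrict to the rank-$r$ support, or replace the net by a net over Stinespring isometries into a bounded environment. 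A union bound over the net, combined with the fact that $\|\mathcal N^{\otimes k}-\mathcal M^{\otimes k}\|_\diamond\le k\gamma$, controls every channel in the class. This gives exponentially small error with high probability over the seed, and hence a deterministic good seed exists. One subtlety is that Bob's decoder is defined from the fixed reference channel and not from $\mathcal N$, so it is automatically shared across the net.

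For the compound statement, I would fix $m$ and $\xi\in\mathcal D((A')^{\otimes m})$ attaining $\inf_{\mathcal N}I_c(\xi,\mathcal N^{\otimes m})>mR+m\delta$. Applying the threshold result to the family $\{\mathcal N^{\otimes m}\}$ handles all of it at once, with rank ceiling $r=(d_{A'}^{m})^2$. The admissible class is already contained in the promise set, so it is covered. Dividing by $m$, and handling leftover channel uses by padding, shows that every $R<Q_{\rm reg}(\mathfrak I)$ is achievable.
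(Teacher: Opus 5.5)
Your first two steps are sound. The transpose-trick reduction yields the same coded Schur state and the same joint law of $(\lambda,\mu,\nu)$ that the paper obtains directly in \cref{lem:universal-code-ensemble}. Your Petz--R\'enyi route to a channel-uniform exponent is also valid. It does need a uniform continuity estimate for $H^\downarrow_{1+s}$ near $s=0$, which holds with constants depending only on $d_{A'}$. The paper avoids that estimate by invoking \cref{prop:first-order}, whose weak-Schur-sampling constants depend only on the rank bounds and $\delta$.

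The gap is in your third step. Hoeffding over $M$ i.i.d.\ codes concentrates, for each net channel, the ensemble average $\frac1M\sum_i\varepsilon_k(\mathcal N;C_i)$. After the union bound and diamond-norm continuity, you get one finite ensemble whose \emph{average} error is small for every channel in the class. That is still a per-channel guarantee for a shared-randomness code. It does not imply that a single index $i$ is good for all channels at once, because the bad indices can differ from channel to channel. So ``with high probability over the seed \dots\ hence a deterministic good seed exists'' does not follow. Neither the simultaneous high-probability statement nor the deterministic universal code is established.

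To get simultaneity you must put the supremum inside the seed probability. That means applying Markov's inequality over the seed at each net point and taking a union bound, so the net cardinality multiplies the per-channel averaged error. With your fineness $\gamma=2^{-kt}/k$, the net has roughly $2^{\kappa kt}$ elements (up to polynomial factors), with $\kappa=(d_{A'}d_B)^2$. This overwhelms an averaged error of order $2^{-kt}$, and no concentration over i.i.d.\ draws changes that.

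The paper needs no derandomization. Starting from $\sup_{\mathcal N}\mathbb E_s\varepsilon_k(\mathcal N;s)\le 2^{-ak}$, it uses a \emph{coarser} net, $\tau=2^{-bk}/(2k)$ with $b=a/[2(\kappa+1)]$, and threshold $2^{-bk}$. This gives $\Pr_s[\sup_{\mathcal N}\varepsilon_k(\mathcal N;s)>2^{-bk}]\le(1+2/\tau)^\kappa\,2^{-ak}\,2^{bk+1}\le 2(1+4k)^\kappa 2^{-ak/2}$. In other words, it trades the exponent $a$ for the smaller $b\approx a/(2\kappa)$.

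Two smaller points. First, for the family $\{\mathcal N^{\otimes m}\}$ the trivial rank ceiling is $(d_{A'}d_B)^m$, not $d_{A'}^{2m}$. A replacement channel already gives $\rank\omega_{\mathcal N}=\rank\xi\cdot d_B$, so with your ceiling the family need not lie in the promise class. Second, the aside about unbounded $d_B$ is unnecessary, since $B$ is fixed. Restricting to a channel-dependent support would not be uniform anyway.
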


The state and channel constructions are therefore two sides of the same coding mechanism. In distillation, a maximally entangled permutation register is already present and Alice projects it onto a code; in communication, she prepares that code directly in the corresponding input permutation register. In either case, the unknown state or channel changes only the weights of the Schur errors, while the code geometry and recovery remain universal. 
We also note that the sample-complexity bounds of \cref{sec:main-samples} carry over directly to the quantum-channel setting, with the rank parameters identified with those of the induced state $\omega_{\mathcal N}=(\id\otimes\mathcal N)(\phi^\xi)$ and its marginals. Thus, our protocols also enable sample-efficient quantum communication at rates that may grow logarithmically with system size.

\section{Relation to previous work}\label{sec:previous-results}
Bjelakovi\'c, Boche and Jan{\ss}en established universal state merging and distillation for compound source families \cite{BjelakovicBocheJanssen2013,BocheJanssen2014}. For finite families, their construction selects a common decoding isometry through fidelity optimization, while arbitrary families are handled by finite coverings \cite{BjelakovicBocheJanssen2013}. Our construction removes the source family itself from the protocol specification: the decoder is fixed by representation theory and works for every state satisfying the respective performance guarantee. Thus the distinction is not only that no covering is required; universality is built directly, rather than established relative to a prescribed family of possible sources.

The closest Schur-based precedents have narrower input scope. Matsumoto and Hayashi treat universal distillation of unknown pure states \cite{PhysRevA.75.062338}, while Czechlewski et al.\ analyze projection-and-hashing protocols for structured two-qubit mixtures and selected higher-dimensional extensions \cite{Czechlewski2012Distillation}. Previously, representation theory has also been used for universal compression, classical--quantum coding, and resolvability \cite{PhysRevA.66.022311,Hayashi2009Universal,MatsuuraHayashiHsieh2025}, and transpose recovery is well established \cite{BarnumKnill2002,BeigiDattaLeditzky2016}.

Our structural universality of the decoder has also concrete finite-copy consequences. The covering-number dependence on the Hilbert space dimension present in Ref.~\cite{BjelakovicBocheJanssen2013} is absent from our distillation bounds; instead, the sample complexity is controlled by the relevant spectral ranks, including global-rank-only guarantees under suitable entropy promises. Moreover, our universal protocol achieves the same coherent-information second-order term as the state-aware one-way hashing bound \cite{Fang_2019}, with no additional $\sqrt{k}$ penalty at fixed ranks. For maximally correlated states, the converse of Ref.~\cite[Proposition~10]{Fang_2019} then gives second-order optimality whenever the information variance is positive. To our knowledge, no previous universal distillation protocol achieves this second-order correction. The Schur--Weyl information spectrum identifies the representation-theoretic fluctuations responsible for this behavior and connects them to the methods underlying quantum second-order asymptotics \cite{TCR2009,tomamichel2013hierarchy,li2014second}.

A close fully quantum comparison is the any-shot decoupling theorem of Berta, Cheng and Yao \cite{BertaChengYao2026}. Their framework gives sharp decoupling bounds and state-merging and coding exponents in terms of conditional R\'enyi quantities, but the state-merging recovery is selected for the input through Uhlmann's theorem. Our decoder, by contrast, is fixed independently of the unknown state. The role of the R\'enyi quantities is therefore also different: here they arise from the statistics of Schur--Weyl packing and directly bound the performance of a common physical recovery, rather than entering only after optimization over an input-dependent recovery. This connects our construction to the broader operational role of Petz--R\'enyi conditional entropies in privacy amplification and quantum coding \cite{Hayashi2015,BertaChengYao2026}, and, through duality, to the Petz--R\'enyi divergences underlying quantum Hoeffding bounds \cite{Hayashi2007,AudenaertEtAl2008}. 
Recent work also studies universal distillation under broader operation classes. Lin, Li and Fang relate universal distillation under non-entangling and PPT-preserving operations to composite hypothesis testing, obtaining finite-blocklength bounds and error exponents for families that may include correlated sources \cite{LinLiFang2026}. Lami, Regula and Takagi characterize optimal universal distillation under non-entangling operations via a composite generalized quantum Stein's lemma \cite{LamiRegulaTakagi2026}. These results provide strong information-theoretic benchmarks, but need not yield explicit LOCC implementations; our protocol is instead one-way LOCC.

\section{Discussion and outlook}
Our results show that fundamental quantum information-processing tasks such as entanglement distillation and quantum communication can be performed at optimal rates without prior knowledge of the underlying state or channel. Perhaps even more strikingly, optimal entanglement distillation can be achieved while leaving the local marginals essentially undisturbed.
A particularly suggestive aspect of this picture is the connection it reveals between Schur--Weyl duality and quantum error correction. Symmetry has long served as a tool for simplifying problems in quantum information; here, it takes on a genuinely operational role, directly giving rise to an error-correction structure with optimal coding guarantees. This operational meaning is perhaps even stronger in the sample-efficient regime: the same Schur--Weyl framework continues to achieve optimal rates when only polynomially many copies are available.
Understanding how far this correspondence extends may uncover new families of universal quantum codes rooted directly in representation theory. Several questions arise naturally. Can these universal recovery mechanisms be implemented efficiently while retaining their optimal achievability guarantees? How robust are they to imperfect operations and departures from the i.i.d. setting? More broadly, the same perspective may prove useful in quantum control, sensing, and general quantum resource conversion.
The guiding question is then not how accurately an unknown quantum system must be characterized, but how much characterization is actually necessary to process it optimally. Understanding this gap may reveal a broader separation between the complexity of learning a quantum system and that of making optimal use of its quantum information.

\section*{Acknowledgements}
The authors acknowledge insightful discussions with Aram Harrow, Ryuji Takagi, Takeru Utsumi and Vladyslav Visnevskyi. J.R. thanks the Scuola Normale Superiore di Pisa for its hospitality during a research stay.
The Berlin team has been supported by the BMFTR (QR.N, QuSol, PraktiQOM), the Clusters of Excellence (ML4Q, MATH+), the QuantERA (SDPCode), the Munich Quantum Valley, Berlin Quantum, the Quantum Flagship (Millenion, Pasquans2), the DFG (CRC 183, SPP 2514), and the European Research Council (DebuQC).

\let\addcontentsline\savedaddcontentsline

\clearpage
\appendix
\onecolumngrid

\renewcommand{\tocname}{Supplemental Material}
\tableofcontents

\section{Preliminaries}\label{app:schur}
All systems are finite-dimensional Hilbert spaces, denoted by their register labels. We write $\mathcal L(X)$ for linear operators on $X$ and $\mathcal D(X)$ for density operators. Kets denote state vectors, while $\psi_X\coloneqq\ketbra{\psi}{\psi}_X$ denotes the corresponding pure-state density operator. We abbreviate $X\otimes Y$ by $XY$, write $\rho_X\coloneqq\Tr_Y\rho_{XY}$ for marginals, and use $d_X\coloneqq\dim X$ and $r_X\coloneqq\rank\rho_X$. A purification of $\rho_X$ is a pure state on a larger system $RX$ with marginal $\rho_X$. Subsystem labels are omitted when clear.

A quantum channel $\mathcal N:\mathcal L(X)\to\mathcal L(Y)$ is completely positive and trace preserving (CPTP): it maps states on $X$ to states on $Y$, including when $X$ is correlated with a reference system. We distinguish the identity operator $\id_X$ from the identity channel $\idmap_X$. A Stinespring isometry $V:X\to YE$ realizes the channel as $\mathcal N(\rho)=\Tr_E(V\rho V^\dagger)$. Measurement branches are completely positive, trace-nonincreasing maps; their outputs are normalized by dividing by the outcome probability, when nonzero. Other positive operators used below need not have unit trace.
We denote the maximally mixed state by $\pi_X\coloneqq\id_X/d_X$ and the maximally entangled vector on two $d$-dimensional registers by $\ket{\Phi_d}\coloneqq d^{-1/2}\sum_j\ket j\ket j$, with density operator $\Phi_d$. The normalized Choi state of $\mathcal N$ is $(\idmap_R\otimes\mathcal N)(\Phi_d)$, where $R$ is a reference copy of the input. We use squared fidelity, $F(\rho,\sigma)\coloneqq\|\sqrt\rho\sqrt\sigma\|_1^2$. Entanglement fidelity $F_e(\rho,\Lambda)$ is the fidelity with which $\idmap_R\otimes\Lambda$ preserves a purification of $\rho$; in particular, $1-F_e(\pi_X,\Lambda)$ is the Bell-state infidelity.
We write $\log_2$ for base-two logarithms and $\ln$ for natural logarithms. We use the von Neumann entropy $H(X)_\rho\coloneqq-\Tr\rho_X\log_2\rho_X$, conditional entropy $H(X|Y)_\rho\coloneqq H(XY)_\rho-H(Y)_\rho$, and coherent information $I(X\rangle Y)_\rho\coloneqq-H(X|Y)_\rho$. For R\'enyi quantities, a tilde denotes the sandwiched variant, while $\downarrow$ and $\uparrow$ distinguish the physical marginal from an optimized conditioning state. We use the trace norm $\|X\|_1\coloneqq\Tr\sqrt{X^\dagger X}$, Hilbert--Schmidt norm $\|X\|_2\coloneqq\sqrt{\Tr X^\dagger X}$, operator norm $\|X\|_\infty$ (the largest singular value), and diamond norm $\|\cdot\|_\diamond$. Trace distance is $D_{\mathrm{tr}}(\rho,\sigma)\coloneqq\tfrac12\|\rho-\sigma\|_1$, and $\supp\omega$ is the span of its nonzero-eigenvalue eigenvectors. Transposes refer to fixed bases, and negative powers are taken on the indicated support, subject to the support conditions stated below.
We use $O$ and $\Omega$ for asymptotic upper and lower bounds up to constants, and $f=o(g)$ when $f/g\to0$. Subscripts specify allowed constant dependence; $\widetilde O$ suppresses logarithmic factors, and $\poly$ denotes a polynomial bound.

\subsection{Schur-Weyl duality}\label{app:schur}
In this section, we review the representation-theoretic background needed later, focusing on Schur-Weyl duality. This is a fundamental tool in quantum information theory, with applications including spectrum estimation, tomography, entanglement distillation, and quantum source compression \cite{Haah_2017,odonnell2015efficientquantumtomography,PhysRevA.64.052311,PhysRevA.75.062338,PhysRevA.66.022311}. We begin with a brief recap of the basic notions that we will use.
Let $G$ be a group. A unitary representation of $G$ on a finite-dimensional Hilbert space $\mathcal H$ is a map $g\mapsto R_g\in\mathcal U(\mathcal H)$ such that $R_{gh}=R_gR_h$ for all $g,h\in G$. A subspace $\mathcal K\subseteq\mathcal H$ is invariant if $R_g\mathcal K\subseteq\mathcal K$ for every $g\in G$, and the representation is irreducible if its only invariant subspaces are $\{0\}$ and $\mathcal H$. Every finite-dimensional unitary representation can, after a suitable change of basis, be written as a direct sum of irreducible representations, or irreps \cite{weyl1946classical}. Given two representations $(R_g,\mathcal H)$ and $(R'_g,\mathcal H')$, an intertwiner is a linear map $K:\mathcal H\to\mathcal H'$ such that $KR_g=R'_gK$ for all $g\in G$. The two representations are said to be isomorphic if there exists an invertible intertwiner between them; in that case we write $\mathcal{H} \underset{G}{\cong} \mathcal{H}'$. We will sometimes omit the group $G$ when that causes no ambiguity. We will repeatedly use Schur's lemma in the following form:

\begin{lemma}[Schur's lemma \cite{weyl1946classical}]\label{lemma:schur}
Let $\mathcal H,\mathcal H'$ carry complex irreducible representations $R_g,R'_g$ of $G$, and let $K:\mathcal H\to\mathcal H'$ be an intertwiner. If the representations are not isomorphic, then $K=0$. If instead $\mathcal H=\mathcal H'$ and $R_g=R'_g$ for every $g\in G$, then $K\propto\id_{\mathcal H}$.
\end{lemma}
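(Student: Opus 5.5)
The plan is the classical argument via invariant subspaces, first for the map $K$ itself and then, in the equal-representation case, for $K-c\,\id_{\mathcal H}$ with $c$ a well-chosen scalar.

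The first step is to show that $\ker K\subseteq\mathcal H$ and $\operatorname{im}K\subseteq\mathcal H'$ are invariant subspaces. If $Kv=0$, then $K R_g v=R'_g Kv=0$, so $R_g\ker K\subseteq\ker K$ for every $g\in G$. Similarly, $R'_g Kv=K(R_g v)\in\operatorname{im}K$, so $\operatorname{im}K$ is invariant under $R'_g$. By irreducibility of $R_g$, $\ker K$ is either $\{0\}$ or all of $\mathcal H$. By irreducibility of $R'_g$, $\operatorname{im}K$ is either $\{0\}$ or all of $\mathcal H'$.

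For the first claim, suppose $K\neq0$. Then $\ker K\neq\mathcal H$, hence $\ker K=\{0\}$. Also $\operatorname{im}K\neq\{0\}$, hence $\operatorname{im}K=\mathcal H'$. So $K$ is a bijective linear map, i.e.\ an invertible intertwiner, and the two representations are isomorphic. Contrapositively, if they are not isomorphic then $K=0$.

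For the second claim, take $\mathcal H=\mathcal H'$ and $R_g=R'_g$, and use that the field is $\mathbb C$ and $\dim\mathcal H<\infty$. The characteristic polynomial of $K$ then has a root, so $K$ has an eigenvalue $c\in\mathbb C$. The operator $K-c\,\id_{\mathcal H}$ is again an intertwiner, since $\id_{\mathcal H}$ commutes with every $R_g$. By the first part applied to $K-c\,\id_{\mathcal H}$, it is either zero or invertible. It cannot be invertible, because it has a nonzero eigenvector in its kernel. Hence $K=c\,\id_{\mathcal H}$.

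There is no real obstacle here. The only point needing care is that the eigenvalue step genuinely requires complex scalars and finite dimension, both of which are assumed in the statement and in the paper's standing conventions. Unitarity of the representations is not needed for this argument.
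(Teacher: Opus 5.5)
Your proof is correct and complete: kernel and image invariance gives the zero-or-bijective dichotomy, a bijective intertwiner is exactly what the paper's definition of isomorphism requires, and the eigenvalue step gives $K=c\,\id_{\mathcal H}$ using the paper's standing assumptions of complex scalars and finite dimension. The paper does not prove this lemma but cites it to Weyl as standard; your argument is the standard textbook proof behind that citation, and, as you note, unitarity is not needed.
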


We now specialize to the symmetric group of permutations of $k$ elements $S_k$ and the unitary group $\mathcal U(d)$, both acting on $(\mathbb C^d)^{\otimes k}$. Their respective group actions are
\be
R_\pi\bigl(\ket{\psi_1}\otimes\cdots\otimes\ket{\psi_k}\bigr)
 &\coloneqq \ket*{\psi_{\pi^{-1}(1)}}\otimes\cdots\otimes\ket*{\psi_{\pi^{-1}(k)}},\\
T_U\bigl(\ket{\psi_1}\otimes\cdots\otimes\ket{\psi_k}\bigr)
 &\coloneqq U^{\otimes k}\bigl(\ket{\psi_1}\otimes\cdots\otimes\ket{\psi_k}\bigr).
\ee
We will often write $U^{\otimes k}$ instead of $T_U$. These actions commute, $[R_\pi,T_U]=0$, and under both actions, Schur-Weyl duality gives the multiplicity-free decomposition into irreducible representations of $\mathcal U(d)\times S_k$
\be
(\mathbb C^d)^{\otimes k}
\underset{\mathcal U(d)\times S_k}{\cong}
\bigoplus_{\lambda\vdash k,\;\ell(\lambda)\le d}
\mathcal P_\lambda\otimes\mathcal Q_\lambda.
\label{eq:schur-weyl}
\ee
Here $\mathcal Q_\lambda$ carries an irrep of $\mathcal U(d)$ and $\mathcal P_\lambda$ the corresponding irrep of $S_k$~\cite{weyl1946classical}. The notation $\lambda\vdash k$ denotes a partition of $k$, i.e., a non-increasing sequence of positive integers $\lambda=(\lambda_1,\ldots,\lambda_{\ell(\lambda)})$ such that $\sum_i \lambda_i=k$, where $\ell(\lambda)$ denotes the number of parts of $\lambda$.
The associated Young diagram has $\lambda_i$ boxes in row $i$ and has exactly $\ell(\lambda)\leq d$ non-empty rows. 

A standard Young tableau (SYT) of shape $\lambda$ is a filling of the Young diagram with $1,\ldots,k$, each appearing exactly once, such that the entries are strictly increasing along rows and down columns. A semistandard Young tableau (SSYT) of shape $\lambda$ is a filling with entries in $[d]\coloneqq\{1,\ldots,d\}$, weakly increasing along rows and strictly increasing down columns. We take Young's orthogonal basis for $\mathcal P_\lambda$, indexed by $S\in\mathrm{SYT}(\lambda)$, in which the representation matrices of $S_k$ are real and orthogonal, and the Gelfand-Tsetlin basis for $\mathcal Q_\lambda$, indexed by $T\in\mathrm{SSYT}(\lambda)$. We write $d_\lambda\coloneqq\dim\mathcal P_\lambda$, which is independent of the local dimension $d$ \cite{weyl1946classical}, and set $D_\lambda^{(d)}\coloneqq \dim\mathcal Q_\lambda^d$. For a linear map $M:X\to Y$, its tensor power induces the Schur map $q_\lambda(M):\mathcal Q_\lambda^X\to\mathcal Q_\lambda^Y$ through $M^{\otimes k}\cong\bigoplus_\lambda\id_{\mathcal P_\lambda}\otimes q_\lambda(M)$, with absent sectors interpreted as zero. These maps preserve composition and adjoints; in particular, $q_\lambda(\rho_X)\ge0$ for $\rho_X\ge0$.

The Schur transform \cite{harrow2005applicationscoherentclassicalcommunication} is the unitary change of basis associated with \eqref{eq:schur-weyl}. For a system $X$ with local dimension $d_X$, it simultaneously block-diagonalizes the actions of $\mathcal U(d_X)$ and $S_k$, namely
\be
U_{\mathrm{Schur}}^X U_X^{\otimes k}R_\pi^X
(U_{\mathrm{Schur}}^X)^\dagger
 =R_{\mathrm{Schur}}^X(U_X,\pi),
\label{eq:schurint}
\ee
where
\be
R_{\mathrm{Schur}}^X(U_X,\pi)
 \coloneqq\sum_{\lambda\vdash k,\;\ell(\lambda)\le d_X}
 \ketbra{\lambda}{\lambda}_X\otimes p_\lambda^X(\pi)\otimes q_\lambda^X(U_X).
\ee
Here $q_\lambda^X(U_X)$ and $p_\lambda^X(\pi)$ act on $\mathcal Q_\lambda^X$ and $\mathcal P_\lambda^X$, respectively. The choice of Young's orthogonal basis then makes $p_\lambda^X(\pi)$ real orthogonal. We will sometimes omit the superscript $X$ when identifying equivalent copies of the same symmetric-group irrep. We will also freely interchange between the explicit $\lambda$-register notation, using $\ketbra{\lambda}{\lambda}$, and the corresponding direct-sum notation.
We denote by $\Pi_\lambda^X$ the orthogonal projector onto the $\lambda$-isotypic subspace of $X^{\otimes k}$; in Schur coordinates, its range is $\mathcal P_\lambda^X\otimes\mathcal Q_\lambda^X$.
The state $\rho_X^{\otimes k}$ is supported on $(\mathbb C^{r_X})^{\otimes k}$ up to an isometry. Hence only Young diagrams with $\ell(\lambda)\le r_X$ can occur with nonzero weight, even though the full Schur decomposition allows $\ell(\lambda)\le d_X$. 
Omitting zero-probability sectors, we write
\be
U_{\mathrm{Schur}}^X\,\rho_X^{\otimes k}\,(U_{\mathrm{Schur}}^X)^\dagger
=
\sum_{\substack{\lambda\vdash k\\ \ell(\lambda)\le r_X}}
p_\lambda^X\,
\ketbra{\lambda}{\lambda}^X
\otimes \frac{\id_{\mathcal P_\lambda^X}}{d_\lambda}
\otimes \rho_{\mathcal Q_\lambda}^X,
\ee
where
\be
p_\lambda^X\coloneqq\Tr\left[ \rho_X^{\otimes k} \Pi_\lambda^X \right]
=
d_\lambda\,\Tr\!\left[q_\lambda^X(\rho_X)\right],
\qquad
\rho_{\mathcal Q_\lambda}^X
\coloneqq
\frac{q_\lambda^X(\rho_X)}
{\Tr[q_\lambda^X(\rho_X)]}.
\ee
The identity on $\mathcal P_\lambda^X$ follows from Schur's lemma, since $\rho_X^{\otimes k}$ commutes with the $S_k$ action.

When we restrict an irrep of $G$ to a subgroup $H\subseteq G$, it may decompose into several irreps of $H$. We first consider the diagonal permutations on $B,E$, and then the tensor-product unitaries acting on $BE$.

\begin{lemma}[Clebsch--Gordan decomposition of the symmetric-group irreps]\label{lem:clebsch}
Fix two irreps $\mathcal P_\mu^B,\mathcal P_\nu^E$ with $\mu,\nu\vdash k$. Their tensor product is an irrep of $S_k\times S_k$ under the respective tensor product actions. Restricting to the diagonal subgroup $S_k^{\mathrm{diag}}\coloneqq\{(\pi,\pi):\pi\in S_k\}$ gives
\be
\mathcal P_\mu^B\otimes\mathcal P_\nu^E
\underset{S_k^{\mathrm{diag}}}{\cong}
\bigoplus_{\lambda\vdash k}
\mathcal P_\lambda^{BE}\otimes\mathbb C^{g_{\lambda\mu\nu}}.
\label{eq:symmetric-CG}
\ee
The multiplicities $g_{\lambda\mu\nu}$ are the Kronecker coefficients. Equivalently, there is a unitary Clebsch--Gordan transform
\be
V_{\mu,\nu}:\mathcal P_\mu^B\otimes\mathcal P_\nu^E
\longrightarrow\bigoplus_{\lambda\vdash k}
\mathcal P_\lambda^{BE}\otimes\mathbb C^{g_{\lambda\mu\nu}}
\label{eq:Vmunu-def}
\ee
satisfying, for every $\pi\in S_k$, the intertwining relation
\be
V_{\mu,\nu}(p_\mu(\pi)\otimes p_\nu(\pi))V_{\mu,\nu}^\dagger
=\bigoplus_{\lambda\vdash k} p_\lambda(\pi)\otimes\id_{\mathbb C^{g_{\lambda\mu\nu}}}.
\label{eq:Vmunu-intertwining}
\ee
\end{lemma}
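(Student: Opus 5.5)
The plan is to derive the statement from three standard facts: the tensor product of irreps of two groups is an irrep of the product group, complete reducibility of unitary representations, and Schur's lemma (\cref{lemma:schur}). We then identify the resulting multiplicities with the Kronecker coefficients.

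First, we show that $\mathcal P_\mu\otimes\mathcal P_\nu$ is irreducible for $S_k\times S_k$ under $(\pi,\sigma)\mapsto p_\mu(\pi)\otimes p_\nu(\sigma)$. By irreducibility and Burnside's theorem, the linear span of $\{p_\mu(\pi)\}$ is all of $\mathcal L(\mathcal P_\mu)$, and likewise for $\nu$. Hence the span of the product operators is $\mathcal L(\mathcal P_\mu)\otimes\mathcal L(\mathcal P_\nu)=\mathcal L(\mathcal P_\mu\otimes\mathcal P_\nu)$. An invariant subspace is preserved by every operator in this algebra, so it must be trivial.

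Next, we restrict to the diagonal subgroup $S_k^{\mathrm{diag}}\cong S_k$, acting by $\pi\mapsto p_\mu(\pi)\otimes p_\nu(\pi)$. This is a finite-dimensional unitary representation, since Young's orthogonal basis makes each factor orthogonal and hence the product unitary. By complete reducibility it splits into an orthogonal direct sum of irreps of $S_k$. Every complex irrep of $S_k$ is isomorphic to some $\mathcal P_\lambda$ with $\lambda\vdash k$. Grouping isomorphic summands gives isotypic components $\mathcal P_\lambda\otimes\mathbb C^{m_\lambda}$.

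To build $V_{\mu,\nu}$, we pick an orthonormal decomposition within each isotypic component. For each copy $\alpha$ we fix a unitary intertwiner from that copy to $\mathcal P_\lambda$; it exists because the two representations are isomorphic and both are unitary, so the polar part of an invertible intertwiner is again an intertwiner. Assembling these unitaries yields $V_{\mu,\nu}$, which satisfies \eqref{eq:Vmunu-intertwining} by construction. Schur's lemma shows that the multiplicity $m_\lambda=\dim\mathrm{Hom}_{S_k}(\mathcal P_\lambda,\mathcal P_\mu\otimes\mathcal P_\nu)$ does not depend on the choices made. Equivalently, by character orthogonality,
\be
m_\lambda=\frac{1}{k!}\sum_{\pi\in S_k}\chi_\lambda(\pi)\chi_\mu(\pi)\chi_\nu(\pi),
\ee
where we used that the characters of $S_k$ are real. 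This symmetric expression is the definition of the Kronecker coefficient $g_{\lambda\mu\nu}$, which establishes \eqref{eq:symmetric-CG}.

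The only step needing care is this identification with the Kronecker coefficients, which is really a matter of convention: the Kronecker coefficients are defined precisely as these multiplicities. The character formula shows that $g_{\lambda\mu\nu}$ is symmetric in its three labels. That symmetry is what will later allow the same coefficients to appear in the dual $\mathcal U(d_B)\times\mathcal U(d_E)$ branching of \cref{eq:CG-branching}.
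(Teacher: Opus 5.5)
Your proof is correct. The paper gives no proof of this lemma: it is quoted as standard background, and the character formula $g_{\lambda\mu\nu}=(k!)^{-1}\sum_{\pi}\chi_\lambda(\pi)\chi_\mu(\pi)\chi_\nu(\pi)$ is invoked later only to note symmetry. Your argument (Burnside for irreducibility under $S_k\times S_k$, complete reducibility of the unitary diagonal restriction, unitary intertwiners via the polar part and Schur's lemma, and the character inner product for the multiplicities) supplies exactly the textbook details the paper takes for granted.

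One correction to your closing remark. The reappearance of the same coefficients in the $\mathcal U(d_B)\times\mathcal U(d_E)$ branching of \cref{lem:unit} does not use the symmetry of $g_{\lambda\mu\nu}$. It follows from Schur--Weyl duality, by comparing the multiplicity spaces of $\mathcal P_\lambda^{BE}$ in the two decompositions of $(\mathbb C^{d_B}\otimes\mathbb C^{d_E})^{\otimes k}$. The symmetry is needed elsewhere, for instance in the proof of \cref{thm:universal-distillation} to conclude that $\mathcal P_\mu$ occurs in $\mathcal P_\lambda\otimes\mathcal P_\nu$, hence $d_\mu\le d_\lambda d_\nu$.
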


\begin{lemma}[Corresponding branching rule for the global unitary-group irreps]\label{lem:unit}
Let $\lambda\vdash k$ label a sector appearing in the Schur-Weyl decomposition of $(\mathbb C^{d_B}\otimes\mathbb C^{d_E})^{\otimes k}$; in particular, this implies $\ell(\lambda)\le d_Bd_E$. Then, under the tensor-product homomorphism
\[
\mathcal U(d_B)\times\mathcal U(d_E)\ni(U_B,U_E)
\longmapsto U_B\otimes U_E,
\]
the global irrep $\mathcal Q_\lambda^{BE}$ decomposes as
\be
\mathcal Q_\lambda^{BE}
\underset{\mathcal U(d_B)\times\mathcal U(d_E)}{\cong}
\bigoplus_{\substack{\mu,\nu\vdash k\\
\ell(\mu)\le d_B,\;\ell(\nu)\le d_E}}
\mathcal Q_\mu^B\otimes\mathcal Q_\nu^E
\otimes\mathbb C^{g_{\lambda\mu\nu}}.
\label{eq:unitary-branching}
\ee
Hence there exists a unitary $W_\lambda$ such that
\be
W_\lambda q_\lambda^{BE}(U_B\otimes U_E)W_\lambda^\dagger
=
\bigoplus_{\substack{\mu,\nu\vdash k\\
\ell(\mu)\le d_B,\;\ell(\nu)\le d_E}}
q_\mu^B(U_B)\otimes q_\nu^E(U_E)
\otimes\id_{\mathbb C^{g_{\lambda\mu\nu}}}.
\label{eq:Wlambda-intertwining}
\ee
Moreover, for any $\mu,\nu\vdash k$ with $\ell(\mu)\le d_B$ and $\ell(\nu)\le d_E$,
\be
g_{\lambda\mu\nu}>0
\quad\Longrightarrow\quad
\ell(\lambda)\le d_Bd_E.
\label{eq:kronecker-length-bound}
\ee

\begin{proof}
Using
\[
(\mathbb C^{d_B}\otimes\mathbb C^{d_E})^{\otimes k}
\cong
(\mathbb C^{d_B})^{\otimes k}\otimes(\mathbb C^{d_E})^{\otimes k},
\]
regard the space as a representation of
$\mathcal U(d_B)\times\mathcal U(d_E)\times S_k^{\mathrm{diag}}$.
Applying Schur-Weyl duality separately to $B$ and $E$, and then
\cref{lem:clebsch}, gives
\be
\bigoplus_{\substack{\mu,\nu,\lambda\vdash k\\
\ell(\mu)\le d_B,\;\ell(\nu)\le d_E}}
\mathcal P_\lambda^{BE}\otimes
\mathcal Q_\mu^B\otimes\mathcal Q_\nu^E
\otimes\mathbb C^{g_{\lambda\mu\nu}}.
\label{eq:local-global-decomp}
\ee
Applying Schur-Weyl duality instead directly to the global $BE$ system gives
\be
\bigoplus_{\substack{\lambda\vdash k\\
\ell(\lambda)\le d_Bd_E}}
\mathcal P_\lambda^{BE}\otimes\mathcal Q_\lambda^{BE}.
\label{eq:global-schur-be}
\ee
These are two decompositions of the same
$\mathcal U(d_B)\times\mathcal U(d_E)\times S_k^{\mathrm{diag}}$
representation. Since the $\mathcal P_\lambda^{BE}$ are pairwise inequivalent irreducible $S_k$ representations, their multiplicity spaces must agree. For every $\lambda$ occurring in \eqref{eq:global-schur-be}, comparison of the $\mathcal P_\lambda^{BE}$ multiplicity spaces gives \eqref{eq:unitary-branching}, and choosing a unitary implementing this equivalence gives \eqref{eq:Wlambda-intertwining}.
Finally, if $\ell(\lambda)>d_Bd_E$, the sector $\mathcal P_\lambda^{BE}$ is absent from the global decomposition \eqref{eq:global-schur-be}. Its multiplicity space in \eqref{eq:local-global-decomp} must therefore vanish, which implies $g_{\lambda\mu\nu}=0$ for every admissible $\mu,\nu$. This proves \eqref{eq:kronecker-length-bound}.
\end{proof}
\end{lemma}

For completeness, we now give a self-contained proof of the pure-state Schur decomposition of Ref.~\cite{PhysRevA.75.062338} and derive its form when Bob uses the conjugate Schur basis. The former will be heavily used throughout the rest of the manuscript.

\begin{lemma}[Schur-Weyl decomposition of a bipartite pure state \cite{PhysRevA.75.062338}]\label{lem:pureschurweyl}
Let $\ket\psi_{AB}\in A\otimes B$ be a unit vector, with marginals $\rho_A=\Tr_B\psi_{AB}$ and $\rho_B=\Tr_A\psi_{AB}$, and Schmidt rank
$r_A=r_B$.
Then
\be
(U_{\mathrm{Schur}}^A\otimes U_{\mathrm{Schur}}^B)\ket\psi_{AB}^{\otimes k}
=
\sum_{\substack{\lambda\vdash k\\ \ell(\lambda)\le r_A}}
\sqrt{p_\lambda}\,
\ket\lambda_A\ket\lambda_B
\otimes\ket*{\Phi_{\mathcal P_\lambda}}_{A:B}
\otimes\ket{\phi_\lambda}_{A:B},
\label{eq:fund}
\ee
where $\ket*{\Phi_{\mathcal P_\lambda}}
=d_\lambda^{-1/2}\sum_{S\in\mathrm{SYT}(\lambda)}\ket S\ket S$, and
$\ket*{\phi_\lambda}\in\mathcal Q_\lambda^A\otimes\mathcal Q_\lambda^B$
is normalized.
More explicitly, let
$V:\mathbb C^{d_A}\to\mathbb C^{d_B}$ be a partial isometry such that
$V^\dagger V=\Pi_{\operatorname{supp}\rho_A}$ and
$\ket\psi=(\sqrt{\rho_A}\otimes\overline V)\ket*{\Gamma_{d_A}}$, where
$\ket*{\Gamma_{d_A}}\coloneqq\sum_{i=1}^{d_A}\ket i\ket i$ is unnormalized.
If Bob uses the conjugate Schur transform $\overline U_{\mathrm{Schur}}^B$ instead,
the same decomposition holds with
\be
\ket{\phi_\lambda}
=
\sqrt{\frac{D_\lambda^{(d_A)}}{\Tr q_\lambda(\rho_A)}}
\bigl(
q_\lambda(\sqrt{\rho_A})
\otimes
\overline{q_\lambda(V)}
\bigr)
\ket*{\Phi_{\mathcal Q_\lambda^A}},
\label{eq:pure-conjugate}
\ee
where $q_\lambda(V):\mathcal Q_\lambda^A\to\mathcal Q_\lambda^B$ is the Schur functor induced by $V$, and
$\ket*{\Phi_{\mathcal Q_\lambda^A}}
=(D_\lambda^{(d_A)})^{-1/2}\sum_{T\in\mathrm{SSYT}_{d_A}(\lambda)}\ket T\ket T$.

\begin{proof}
Expanding the tensor power in the local Schur bases gives
\be
(U_{\mathrm{Schur}}^A\otimes U_{\mathrm{Schur}}^B)\ket\psi^{\otimes k}_{AB}
=
\sum_{\lambda,\lambda'}
\ket\lambda_A\ket{\lambda'}_B
\otimes\ket*{\Omega_{\lambda,\lambda'}},
\ee
where
$\ket*{\Omega_{\lambda,\lambda'}}
\in
(\mathcal P_\lambda^A\otimes\mathcal P_{\lambda'}^B)
\otimes
(\mathcal Q_\lambda^A\otimes\mathcal Q_{\lambda'}^B)$ are generally unnormalized.
Since $\ket\psi^{\otimes k}$ is invariant under
$R_\pi^A\otimes R_\pi^B$, for each choice of unitary-sector basis vectors the corresponding vector in
$\mathcal P_\lambda^A\otimes\mathcal P_{\lambda'}^B$
is invariant under
$p_\lambda(\pi)\otimes p_{\lambda'}(\pi)$.
Writing such a vector as
$\sum_{S,S'}M_{SS'}\ket S\ket{S'}$, invariance is equivalent to
\be
p_\lambda(\pi)M p_{\lambda'}(\pi)^T=M,
\ee
and hence, since the Young representation matrices are real orthogonal,
\be
p_\lambda(\pi)M=M p_{\lambda'}(\pi).
\ee
Thus $M$ is an intertwiner from $\mathcal P_{\lambda'}$ to
$\mathcal P_\lambda$. By Schur's lemma it vanishes for
$\lambda\neq\lambda'$ and is proportional to the identity for
$\lambda=\lambda'$. The corresponding normalized invariant vector is
$\ket*{\Phi_{\mathcal P_\lambda}}$, since
\be
(p_\lambda(\pi)\otimes p_\lambda(\pi))
\ket*{\Phi_{\mathcal P_\lambda}}
=
\ket*{\Phi_{\mathcal P_\lambda}}.
\ee
Therefore $\ket*{\Omega_{\lambda,\lambda'}}=0$ unless
$\lambda=\lambda'$, and for the surviving sectors
\be
\ket*{\Omega_{\lambda,\lambda}}
=
\ket*{\Phi_{\mathcal P_\lambda}}
\otimes\ket{\widetilde\phi_\lambda}.
\ee
Their squared norms are
\be
\|\widetilde\phi_\lambda\|^2
=
\bra{\psi}^{\otimes k}
(\Pi_\lambda^A\otimes\id_{B^{\otimes k}})
\ket{\psi}^{\otimes k}
=
\Tr(\Pi_\lambda^A\rho_A^{\otimes k})
=
p_\lambda.
\ee
Since $\rho_A$ has rank $r_A$, one has $p_\lambda=0$ whenever
$\ell(\lambda)>r_A$, while $p_\lambda>0$ for every
$\ell(\lambda)\le r_A$. Normalizing the surviving vectors proves
\eqref{eq:fund}.
For the conjugate-basis form, introduce a second copy $A'$ of $A$ and write
$\ket\psi=(\sqrt{\rho_A}\otimes\overline V)\ket*{\Gamma_{d_A}}$.
Using $(U\otimes\overline U)\ket\Gamma=\ket\Gamma$ and repeating the same argument that we applied to the symmetric group irreps gives
\be
(U_{\mathrm{Schur}}^A\otimes\overline U_{\mathrm{Schur}}^{A'})
\ket*{\Gamma_{d_A}}^{\otimes k}
=
\sum_{\substack{\lambda\vdash k\\ \ell(\lambda)\le d_A}}
\sqrt{d_\lambda D_\lambda^{(d_A)}}\,
\ket\lambda\ket\lambda
\otimes
\ket{\Phi_{\mathcal P_\lambda}}
\otimes
\ket*{\Phi_{\mathcal Q_\lambda^{A}}}.
\ee
Moreover,
\begin{eqnarray}
U_{\mathrm{Schur}}^A
(\sqrt{\rho_A})^{\otimes k}
(U_{\mathrm{Schur}}^{A})^\dagger
&=&
\bigoplus_\lambda
\id_{\mathcal P_\lambda}
\otimes q_\lambda(\sqrt{\rho_A}),\\
\overline U_{\mathrm{Schur}}^B
\overline V^{\otimes k}
(U_{\mathrm{Schur}}^{A'})^T
&=&
\bigoplus_\lambda
\id_{\mathcal P_\lambda}
\otimes\overline{q_\lambda(V)}.
\end{eqnarray}
Hence the unnormalized unitary-sector vector is
\be
\ket{\widetilde\phi_\lambda}
=
\sqrt{d_\lambda D_\lambda^{(d_A)}}\,
\bigl(
q_\lambda(\sqrt{\rho_A})
\otimes
\overline{q_\lambda(V)}
\bigr)
\ket*{\Phi_{\mathcal Q_\lambda^A}}.
\ee
We have $q_\lambda(V)^\dagger q_\lambda(V)
=q_\lambda(V^\dagger V)$.
Using $V^\dagger V=\Pi_{\operatorname{supp}\rho_A}$,
$q_\lambda(\rho_A)q_\lambda(V^\dagger V)=q_\lambda(\rho_A)$, and
$\langle\Phi_m|(X\otimes Y)|\Phi_m\rangle=\Tr(XY^T)/m$, we obtain
\be
\|\widetilde\phi_\lambda\|^2
=
d_\lambda\Tr q_\lambda(\rho_A)
=
p_\lambda.
\ee
Dividing by $\sqrt{p_\lambda}$ yields \eqref{eq:pure-conjugate}.
\end{proof}
\end{lemma}

Although we will not use the conjugate-Schur version directly, it provides a more complete picture of the structure of the decomposition.

\subsection{Entanglement recovery from orthogonal isotropic errors}\label{app:polar}
We begin by considering the related problem of recovering entanglement after sending one half of a maximally entangled state through a quantum channel with a particular error structure. Specifically, consider a quantum channel
$\mathcal N:\mathcal L(\mathbb C^d)\to\mathcal L(\mathbb C^D)$ of the form
\be
\mathcal N(X)\coloneqq\frac1N\sum_{i=1}^N F_iXF_i^\dagger,
\label{eq:chann1}
\ee
whose error operators $F_i:\mathbb C^d\to\mathbb C^D$, $i=1,\ldots,N$ are Hilbert--Schmidt orthogonal, $\Tr(F_i^\dagger F_j)=d\,\delta_{ij}$, and which preserves the maximally mixed state,
$\mathcal N(\pi_d)=\pi_D$. Its Kraus operators are $F_i/\sqrt N$.
For a Haar-random code isometry $C:\mathbb C^K\to\mathbb C^d$, with encoding channel $\mathcal C(X)\coloneqq CXC^\dagger$, we seek a recovery channel $\mathcal R_C:\mathcal L(\mathbb C^D)\to\mathcal L(\mathbb C^K)$ with high average entanglement fidelity,
\be
F_e(\pi_K,\mathcal R_C\circ\mathcal N\circ\mathcal C)
=
\bra{\Phi_K}
(\idmap_K\otimes\mathcal R_C\circ\mathcal N\circ\mathcal C)(\Phi_K)
\ket{\Phi_K}.
\label{eq:entanglement-fidelity}
\ee
The same recovery must also handle arbitrary positive weights and coherences between the error operators. The following lemma defines the weighted map $\mathcal N_T$ and its encoded output $\rho_{C,T}$ and bounds the recovery error uniformly for every $T\ge0$. The choice $T=\pi_N$ recovers the reference channel. This uniformity lets a state-independent reference recovery correct the unknown weights of the physical Schur channel.

\begin{lemma}[Polar entanglement recovery from orthogonal isotropic error channels]\label{lem:normalized-polar}
Let $1\le K\le d$, $d>1$, and consider a quantum channel $\mathcal N:\mathcal L(\mathbb C^d)\to\mathcal L(\mathbb C^D)$ with Kraus decomposition $\mathcal N(X)=\frac1N\sum_{i=1}^N F_iXF_i^\dagger$ such that
\be
\Tr(F_i^\dagger F_j)=d\,\delta_{ij},\qquad
\sum_iF_iF_i^\dagger=\frac{dN}{D}\id_D.
\label{eq:normalized-errors}
\ee
For every operator $T=\sum_{i,j}T_{ij}\ket i\!\bra j\ge0$ on $\mathbb C^N$, define the completely positive map
\be
\mathcal N_T(X)
\coloneqq
\sum_{i,j=1}^N T_{ij}F_iXF_j^\dagger .
\label{eq:weighted-error-map}
\ee
Let $C:\mathbb C^K\to\mathbb C^d$ be a Haar-random isometry and assemble the restricted errors into
\be
W_C:\mathbb C^K\otimes\mathbb C^N\longrightarrow\mathbb C^D,
\qquad
W_C(\ket{x}\otimes\ket i)\coloneqq F_iC\ket{x},
\qquad
G_C\coloneqq W_C^\dagger W_C.
\label{eq:error-assembly}
\ee
Writing $W_C=U_C\sqrt{G_C}$ with the polar partial isometry $U_C:\mathbb C^K\otimes\mathbb C^N\to\mathbb C^D$, so that $U_C^\dagger U_C=\Pi_{\supp G_C}$ and $U_CU_C^\dagger=\Pi_{\mathrm{im}\,W_C}$, define the recovery map $\mathcal R_C:\mathcal L(\mathbb C^D)\to\mathcal L(\mathbb C^K)$ to be
\be
\mathcal R_C(Y)
\coloneqq
\Tr_{\mathbb C^N}(U_C^\dagger YU_C)
+
\Tr[(\id_D-U_CU_C^\dagger)Y]\,\tau_K,
\label{eq:polar-recovery}
\ee
where $\tau_K\in\mathcal D(\mathbb C^K)$ is any fixed density operator. Then $\mathcal R_C$ is CPTP.
Furthermore, for every $T\ge0$, the corresponding positive output operator $\rho_{C,T}\in\mathcal L(\mathbb C^K\otimes\mathbb C^D)$, which need not have unit trace,
\be
\rho_{C,T}
\coloneqq
(\idmap_K\otimes\mathcal N_T\circ\mathcal C)(\Phi_K)
\label{eq:weighted-code-output-definition}
\ee
satisfies
\be
\mathbb E_C\,
\Tr\!\left[
(\id-\Phi_K)
(\idmap_K\otimes\mathcal R_C)(\rho_{C,T})
\right]
\le
\frac{\beta}{1+\beta}\,\Tr T,
\qquad
\beta
\coloneqq
\frac{dK-1}{d^2-1}
\left(\frac{dN}{D}-1\right)
\le
\frac{KN}{D}-\frac Kd .
\label{eq:polar-beta}
\ee
In particular, if $T=\pi_N=\id_N/N$, then $\mathcal N = \mathcal N_{\pi_N}$ and
\be
\mathbb E_C\!\left[
1-F_e(\pi_K,\mathcal R_C\circ\mathcal N\circ\mathcal C)
\right]
\le
\frac{\beta}{1+\beta} .
\label{eq:polar-reference-fidelity}
\ee
\end{lemma}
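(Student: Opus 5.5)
The plan is to reduce everything to the Gram operator $G_C$ and then control $G_C$ through its first two Haar moments.

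\emph{Complete positivity and trace preservation.} $\mathcal R_C$ is a sum of two CP maps, $Y\mapsto\Tr_{\mathbb C^N}(U_C^\dagger YU_C)$ and $Y\mapsto\Tr[(\id_D-U_CU_C^\dagger)Y]\,\tau_K$. Its trace-dual sends $\id_K$ to $U_CU_C^\dagger+(\id_D-U_CU_C^\dagger)=\id_D$, so $\mathcal R_C$ is CPTP.

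\emph{Algebraic reduction.} By definition of $W_C$,
\[
\rho_{C,T}=(\id_K\otimes W_C)(\Phi_K\otimes T)(\id_K\otimes W_C)^\dagger ,
\]
where $\Phi_K$ lives on $\mathbb C^K\otimes\mathbb C^K$ and $T$ on $\mathbb C^N$. This state is supported on $\mathbb C^K\otimes\operatorname{im}W_C$, so the leakage branch of $\mathcal R_C$ contributes nothing to the overlap with $\Phi_K$. Using $U_C^\dagger W_C=\sqrt{G_C}$, the useful branch gives
\[
(\idmap_K\otimes\Tr_N)\bigl[(\id\otimes\sqrt{G_C})(\Phi_K\otimes T)(\id\otimes\sqrt{G_C})\bigr].
\]
The partial-trace identity $(\Phi_K\otimes\id)(\id\otimes X)(\Phi_K\otimes\id)=\Phi_K\otimes\tfrac1K\Tr_K X$ then gives the success weight $\Tr[A_CTA_C]$, with $A_C\coloneqq\tfrac1K\Tr_K\sqrt{G_C}\ge0$ an operator on $\mathbb C^N$. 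Since $\mathcal R_C$ is trace preserving, the total weight is $\Tr\rho_{C,T}=\Tr[TB_C]$ with $B_C\coloneqq\tfrac1K\Tr_KG_C$. Hence the error equals
\[
\Tr[T(B_C-A_C^2)].
\]
This quantity is nonnegative by operator concavity of the square root and the Kadison--Schwarz inequality applied to the unital map $\tfrac1K\Tr_K$. Because $T\ge0$, the error is at most $\Tr T\cdot\|\mathbb E_C(B_C-A_C^2)\|_\infty$ once the expectation is brought inside. So it suffices to show
\[
\mathbb E_C\bigl(B_C-A_C^2\bigr)\le\frac{\beta}{1+\beta}\,\id_N .
\]

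\emph{Moments of $G_C$.} The blocks of $G_C$ are $C^\dagger F_i^\dagger F_jC$. The first Haar moment with $\Tr(F_i^\dagger F_j)=d\delta_{ij}$ gives $\mathbb E_CG_C=\id_{KN}$, hence $\mathbb E B_C=\id_N$. The second moment $\mathbb E_C G_C^2$ is computed with the Weingarten formula for $\mathbb E(CC^\dagger)^{\otimes2}$, a combination of $\id$ and the swap with coefficients involving $d^2-1$. The needed inputs are $\sum_iF_iF_i^\dagger=\tfrac{dN}{D}\id_D$ and the trace orthogonality. I expect this to give $\mathbb E\,\tfrac1K\Tr_KG_C^2=(1+\beta)\id_N$ with exactly $\beta=\tfrac{dK-1}{d^2-1}(\tfrac{dN}{D}-1)$. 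The simpler bound $\beta\le KN/D-K/d$ follows from $\tfrac{dK-1}{d^2-1}\le\tfrac Kd$.

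\emph{Lower bound on $A_C$.} Here I would use the scalar inequality
\[
\sqrt x\ \ge\ \frac{(2+\beta)x-x^2}{1+\beta}\cdot\frac{1}{1+\beta}\cdots ,
\]
or more generally a quadratic lower bound $\sqrt x\ge ax-bx^2$ with $a,b$ optimized. Via functional calculus this gives $\sqrt{G_C}\ge aG_C-bG_C^2$, and after the partial trace, $A_C\ge aB_C-b\,\tfrac1K\Tr_KG_C^2$. Taking expectations yields $\mathbb E A_C\ge(a-b(1+\beta))\id_N$. Choosing the tangent-type parameters should make this a bound of the form $\mathbb E A_C\ge(1+\beta)^{-1/2}\id$, or a comparable form giving $1/(1+\beta)$ after squaring.

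\emph{Main obstacle.} I expect the hardest step to be passing from this bound on $\mathbb E A_C$ to a bound on $\mathbb E A_C^2$ that is uniform over all $T$. Operator convexity gives $\mathbb E A_C^2\ge(\mathbb E A_C)^2$, but the lower bound on $A_C$ is not positive pointwise. I would first try working with the single averaged operator inequality $\mathbb E(B_C-A_C^2)\le\mathbb E B_C-(\mathbb E A_C)^2$ directly. If that loses the sharp constant, I would fall back on a Barnum--Knill-type fidelity argument on the purification $\Phi_K\otimes\sqrt T$ to produce the $\beta/(1+\beta)$ form. The special case $T=\pi_N$ then reads off \eqref{eq:polar-reference-fidelity}.
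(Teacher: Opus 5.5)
Your argument is correct, and the step you flag as the main obstacle is not one: your first option already gives the sharp constant. Put $m\coloneqq1+\beta$ and use the tangent minorant $\sqrt x\ge\frac{3x}{2\sqrt m}-\frac{x^2}{2m^{3/2}}$ for $x\ge0$. With $w=\sqrt{x/m}$ this inequality reads $w(w-1)^2(w+2)\ge0$, and the quadratic touches $\sqrt x$ at $x=0$ and $x=m$; it should replace the garbled display in your sketch. Apply functional calculus, then the positive map $\frac1K\Tr_K$, then the moments $\mathbb E B_C=\id_N$ and $\frac1K\mathbb E\Tr_KG_C^2=(1+\beta)\id_N$. Your Weingarten expectation for the second moment is right: it follows from the rank-$K$ Haar projector moment together with $\sum_j\Tr(F_i^\dagger F_jF_j^\dagger F_l)=\frac{d^2N}{D}\delta_{il}$ and $\sum_j\Tr(F_i^\dagger F_j)\Tr(F_j^\dagger F_l)=d^2\delta_{il}$. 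Together these give $\mathbb E A_C\ge m^{-1/2}\id_N$.

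Pointwise positivity of the minorant is irrelevant. First, $\mathbb E A_C^2\ge(\mathbb E A_C)^2$ because $\mathbb E[(A_C-\mathbb E A_C)^2]\ge0$. Second, a Hermitian operator bounded below by $m^{-1/2}\id_N$, with $m^{-1/2}>0$, has its square bounded below by $m^{-1}\id_N$. Hence $\mathbb E(B_C-A_C^2)\le\frac{\beta}{1+\beta}\id_N$, and no Barnum--Knill fallback is needed.

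The paper reaches the same bound per $T$ rather than as an operator inequality on $\mathbb C^N$.

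\begin{itemize}
\item It purifies $T$, sets $\ket\chi=\ket{\Phi_K}\ket{\vartheta_T}$, and lower-bounds the success weight by projecting $\sqrt{G_C}\ket\chi$ onto $\ket\chi/\sqrt{\Tr T}$. This gives $f_C\ge(\Tr[TA_C])^2/\Tr T$, a Cauchy--Schwarz relaxation of your exact value $\Tr[TA_C^2]$.
\item It then applies Jensen over $C$.
\item Finally it obtains $\mathbb E\Tr[(\pi_K\otimes T)\sqrt{G_C}]\ge\Tr T/\sqrt{1+\beta}$ from H\"older's inequality. This is applied to the $C$-averaged spectral measure of $G_C$ weighted by $\pi_K\otimes T/\Tr T$, whose first two moments are $1$ and $1+\beta$.
\end{itemize}

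Your quadratic minorant is exactly the dual certificate of that H\"older step. The extremal law is the two-point distribution on $\{0,1+\beta\}$, which is where your quadratic touches $\sqrt x$.

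Your packaging gives an exact error formula $\Tr[T(B_C-A_C^2)]$ and a single operator inequality that makes uniformity over all weights and coherences $T$ manifest. The paper's packaging gives a one-line moment inequality with no polynomial to guess. The paper also records the weaker bound $\beta\Tr T$ via $(\sqrt x-1)^2\le(x-1)^2$, which your route bypasses.
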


The Gram matrix $G_C$ measures the failure of the different error images of the code to be mutually orthogonal. If $G_C=\id$, then $W_C$ is an isometry and the polar decoder separates the logical system from the error register exactly, simultaneously for every $T\ge0$. The lemma controls the deviation from this ideal situation. Its essential strengthening is that the Haar-averaged Gram defect is scalar on the error register, making the recovery bound uniform over arbitrary positive weights and coherences $T$.

\begin{proof}
Since $T\ge0$, the map $\mathcal N_T$ is completely positive, and expanding $W_C$ in the error basis gives 
\be
\rho_{C,T} =
(\idmap_K\otimes\mathcal N_T\circ\mathcal C)(\Phi_K)
=
(\id_K\otimes W_C)
(\Phi_K\otimes T)
(\id_K\otimes W_C^\dagger).
\label{eq:weighted-code-output}
\ee
Now take the partial isometry $U_C$ from the polar decomposition $W_C=U_C\sqrt{G_C}$, where $G_C = W_C^\dag W_C$. Both terms in $\mathcal R_C$ are completely positive, since $U_CU_C^\dagger$ is a projection, and their traces sum to $\Tr Y$. Hence $\mathcal R_C$ is trace preserving and therefore CPTP. Furthermore, by definition, the support of $\rho_{C,T}$ lies inside $\mathbb C^K\otimes\mathrm{Im}(W_C)=\mathbb C^K\otimes\mathrm{Im}(U_CU_C^\dagger)$; hence
\be
\Tr[\bigl(\id_K\otimes(\id_D-U_CU_C^\dagger)\bigr)\rho_{C,T}]=0.
\ee
Now choose a possibly unnormalized purification
$\ket{\vartheta_T}$ of $T$, and set
\be
\ket\chi \coloneqq \ket{\Phi_K}\otimes\ket{\vartheta_T}.
\ee
After applying the first term of the decoder $U_C^\dag$ to $\rho_{C,T}$, and before tracing the error and purifying registers, since $U_C^\dagger W_C=\sqrt{G_C}$, the resulting vector is $\sqrt{G_C}\ket\chi$. Using that $(\id-\Phi_K) \ket{\chi}=0$, we then have
\be
\Tr\!\left[
(\id-\Phi_K)
(\idmap_K\otimes\mathcal R_C)(\rho_{C,T})
\right]
&=
\bigl\|
(\id-\Phi_K)\sqrt{G_C}\ket\chi
\bigr\|^2 \\
&=
\bigl\|
(\id-\Phi_K)(\sqrt{G_C}-\id)\ket\chi
\bigr\|^2
\\
&\le
\bra\chi(\sqrt{G_C}-\id)^2\ket\chi \\
&=
\Tr\!\left[
(\pi_K\otimes T)(\sqrt{G_C}-\id)^2
\right] \\
&\leq
\Tr\!\left[
(\pi_K\otimes T)(G_C-\id)^2
\right],
\ee
where in the last line we used that $(\sqrt{x}-1)^2\le(x-1)^2$ when $x\ge0$. This proves 
\be
\Tr\!\left[
(\id-\Phi_K)
(\idmap_K\otimes\mathcal R_C)(\rho_{C,T})
\right]
\le
\Tr\!\left[
(\pi_K\otimes T)(G_C-\id)^2
\right].
\label{eq:gbound}
\ee
It remains to take the average of the right-hand side over the random code $\mathcal{C}$. 
Fix an isometry $C_0:\mathbb C^K\to\mathbb C^d$ and write explicitly
\be
C=UC_0,\qquad
P_0=C_0C_0^\dagger,\qquad
P=CC^\dagger=UP_0U^\dagger,
\ee
with $U\in\mathcal U(d)$ Haar random. Thus $P$ is a Haar-random rank-$K$ projector. A standard calculation gives
\be
\mathbb E_U\,\Tr(PAPB)
=
\frac{
K(dK-1)\Tr(AB)
+
K(d-K)\Tr A\,\Tr B
}{
d(d^2-1)
}.
\label{eq:haar-projector-moment}
\ee
Since $(G_C)_{ij}=C^\dagger F_i^\dagger F_jC$, the two contractions entering the second moment of \cref{eq:gbound} are
\be
\sum_j
\Tr(F_i^\dagger F_jF_j^\dagger F_l)
&=
\Tr\!\left[
F_i^\dagger
\Bigl(\sum_jF_jF_j^\dagger\Bigr)
F_l
\right]
=
\frac{d^2N}{D}\,\delta_{il},
\\
\sum_j
\Tr(F_i^\dagger F_j)\,
\Tr(F_j^\dagger F_l)
&=
d^2\,\delta_{il},
\ee
where we used the properties assumed in \cref{eq:normalized-errors}.
Substituting these into \eqref{eq:haar-projector-moment} and using the 1-design identity $\mathbb E_U P=\frac Kd\,\id_d$ gives
\be
\frac1K
\mathbb E_U\,
\Tr_{\mathbb C^K}G_C^2
&=
\frac{
d(dK-1)N/D+d(d-K)
}{
d^2-1
}\id_N,
\\
\frac1K
\mathbb E_U\,
\Tr_{\mathbb C^K}G_C
&=
\id_N.
\ee
Therefore
\be
\frac1K
\mathbb E_U\,
\Tr_{\mathbb C^K}(G_C-\id)^2
=
\beta\,\id_N.
\label{eq:scalar-gram-moment}
\ee
Hence, for every fixed $T\ge0$,
\be
\mathbb E_U\,
\Tr\!\left[
(\pi_K\otimes T)(G_C-\id)^2
\right] &= \frac{1}{K} \mathbb E_U\,
\Tr\!\left[
(\id_K\otimes T)(G_C-\id)^2
\right] \\
&= \frac{1}{K} \Tr_N \left[ T \mathbb E_U \left[ \Tr_K (G_C-\id)^2 \right] \right] \\
&=\beta\,\Tr T.
\ee
This establishes the quadratic Gram-defect estimate used below. Finally, $\sum_iF_iF_i^\dagger=\frac{dN}{D}\id_D$ implies $D\leq \sum_i\rank F_i \leq dN$, so $\beta\ge0$. Moreover, since $K\leq d$
\be
\frac{dK-1}{d^2-1}\le\frac Kd,
\ee
which yields
\be
\beta
\le
\frac{KN}{D}-\frac Kd.
\ee
To obtain the stronger recovery bound, assume $t\coloneqq \Tr T>0$; the case $T=0$ is immediate. Let $A_T\coloneqq\pi_K\otimes T/t$ and form the probability measure obtained by averaging the spectral measure of $G_C$ weighted by $A_T$. If $X$ denotes its nonnegative spectral variable, the first and second moments computed above give
\be
\mathbb E X=1,\qquad \mathbb E X^2=1+\beta.
\ee
H\"older's inequality therefore implies
\be
1=\mathbb E X
\le (\mathbb E\sqrt X)^{2/3}(\mathbb E X^2)^{1/3},
\qquad
\mathbb E_C\Tr(A_T\sqrt{G_C})
\ge\frac1{\sqrt{1+\beta}}.
\label{eq:je-polar-moment-improvement}
\ee
In the purification used above, the Bell projector contains the normalized vector $\ket\chi/\sqrt t$ in its range. Consequently its success weight $f_C$ satisfies
\be
f_C\coloneqq \Tr\!\left[\Phi_K(\idmap_K\otimes\mathcal R_C)(\rho_{C,T})\right]
\ge\frac{|\bra\chi\sqrt{G_C}\ket\chi|^2}{t}
=t\bigl[\Tr(A_T\sqrt{G_C})\bigr]^2.
\ee
Jensen's inequality gives $\mathbb E_C f_C\ge t/(1+\beta)$. Since $\mathbb E_C\Tr\rho_{C,T}=t$, subtracting the success weight proves
\be
\mathbb E_C\Tr\!\left[(\id-\Phi_K)(\idmap_K\otimes\mathcal R_C)(\rho_{C,T})\right]
\le\frac{\beta}{1+\beta}\Tr T.
\ee
This improves the quadratic estimate for every $\beta>0$ and uses the same recovery map. The weaker bound by $\beta\Tr T$ remains available in the subsequent packing argument.
This concludes the proof.
\end{proof}

\section{A tripartite Schur-Weyl decomposition}\label{app:tripartite}
We now relate the local Schur transforms on $B,E$ to the global one on $BE$ and derive the properties of the resulting intertwiner. This will then be used to derive the tripartite pure-state Schur decomposition.

\begin{theorem}[Canonical decomposition of the local-to-global Schur intertwiner]\label{thm:local-global-schur}
Define the unitary operator
\be
R^{BE}\coloneqq
(U_{\mathrm{Schur}}^B\otimes U_{\mathrm{Schur}}^E)
(U_{\mathrm{Schur}}^{BE})^\dagger.
\ee
Then $R^{BE}$ intertwines the global and local representations of
$\mathcal U(d_B)\times\mathcal U(d_E)\times S_k$, namely
\be
R^{BE}R_{\mathrm{Schur}}^{BE}(U_B\otimes U_E,\pi)
=
\bigl(R_{\mathrm{Schur}}^B(U_B,\pi)\otimes
R_{\mathrm{Schur}}^E(U_E,\pi)\bigr)R^{BE}.
\label{eq:intert}
\ee
Define the branch co-isometries by
\be
V_{\mu,\nu}^{\lambda,\alpha}
&=
(\bra\lambda\otimes\id_{\mathcal P_\lambda}\otimes\bra\alpha)V_{\mu,\nu}:
\mathcal P_\mu^B\otimes\mathcal P_\nu^E
\to\mathcal P_\lambda^{BE},
\label{eq:decom}
\ee
and
\be
W_{\mu,\nu}^{\lambda,\beta}
&=
(\bra{\mu,\nu}\otimes
\id_{\mathcal Q_\mu^B\otimes\mathcal Q_\nu^E}\otimes\bra\beta)W_\lambda:
\mathcal Q_\lambda^{BE}
\to\mathcal Q_\mu^B\otimes\mathcal Q_\nu^E.
\label{eq:W-branch-def}
\ee
Then there are unitary matrices
$u^{\lambda,\mu,\nu}\in\mathcal U(g_{\lambda\mu\nu})$, which we write as
\be
u^{\lambda,\mu,\nu}
=
\sum_{\alpha,\beta=1}^{g_{\lambda\mu\nu}}
u_{\alpha\beta}^{\lambda,\mu,\nu}\ket\alpha\bra\beta,
\ee
such that
\be
R^{BE}
=
\sum_{\lambda,\mu,\nu}
\ket{\mu,\nu}\bra\lambda\otimes
\sum_{\alpha,\beta=1}^{g_{\lambda\mu\nu}}
u_{\alpha\beta}^{\lambda,\mu,\nu}
(V_{\mu,\nu}^{\lambda,\alpha})^\dagger
\otimes
W_{\mu,\nu}^{\lambda,\beta}.
\label{eq:R-final}
\ee

\begin{proof}
We first prove \eqref{eq:intert}. Applying \eqref{eq:schurint} locally on $B$ and $E$, with the same permutation on both systems, gives
\be
&(U_{\mathrm{Schur}}^B\otimes U_{\mathrm{Schur}}^E)
(U_B^{\otimes k}\otimes U_E^{\otimes k})
(R_\pi^B\otimes R_\pi^E)\\
&\qquad=
\bigl(R_{\mathrm{Schur}}^B(U_B,\pi)\otimes
R_{\mathrm{Schur}}^E(U_E,\pi)\bigr)
(U_{\mathrm{Schur}}^B\otimes U_{\mathrm{Schur}}^E).
\ee
The global Schur transform obeys
\be
U_{\mathrm{Schur}}^{BE}
(U_B\otimes U_E)^{\otimes k}
(R_\pi^B\otimes R_\pi^E)
=
R_{\mathrm{Schur}}^{BE}(U_B\otimes U_E,\pi)
U_{\mathrm{Schur}}^{BE}.
\ee
Combining these identities with the definition of $R^{BE}$ proves
\eqref{eq:intert}. Thus $R^{BE}$ is a unitary intertwiner mapping
\be
\bigoplus_\lambda
\mathcal P_\lambda^{BE}\otimes\mathcal Q_\lambda^{BE}
\longrightarrow
\bigoplus_{\mu,\nu}
\mathcal P_\mu^B\otimes\mathcal P_\nu^E
\otimes\mathcal Q_\mu^B\otimes\mathcal Q_\nu^E.
\label{eq:R-domain-codomain}
\ee
Now decompose
$R^{BE}
=\sum_{\lambda,\mu,\nu}
\ket{\mu,\nu}\bra\lambda\otimes
R^{BE}_{\mu,\nu:\lambda}$.
Since the Schur representations are block diagonal in these labels, each
$R^{BE}_{\mu,\nu:\lambda}$ is itself an intertwiner.
As throughout, we use the fixed ordering $\mathcal P\otimes\mathcal Q$.
Applying $W_\lambda$ to resolve the incoming global unitary factor and
$V_{\mu,\nu}$ to resolve the outgoing local permutation factors, define the rotated operator
\be
\widetilde R^{BE}_{\mu,\nu:\lambda}
\coloneqq
(V_{\mu,\nu}\otimes
\id_{\mathcal Q_\mu^B\otimes\mathcal Q_\nu^E})
R^{BE}_{\mu,\nu:\lambda}
(\id_{\mathcal P_\lambda}\otimes W_\lambda^\dagger).
\ee
Up to a fixed reordering of the multiplicity factors, its domain and codomain are
\be
\bigoplus_{\mu',\nu'}
\mathcal P_\lambda^{BE}\otimes
\mathcal Q_{\mu'}^B\otimes\mathcal Q_{\nu'}^E
\otimes\mathbb C^{g_{\lambda\mu'\nu'}},
\ee
and
\be
\bigoplus_{\lambda'}
\mathcal P_{\lambda'}^{BE}\otimes
\mathcal Q_\mu^B\otimes\mathcal Q_\nu^E
\otimes\mathbb C^{g_{\lambda'\mu\nu}},
\ee
respectively. The first three factors in every summand form an irreducible representation of
$\mathcal U(d_B)\times\mathcal U(d_E)\times S_k$.
Schur's lemma therefore allows a nonzero component only between matching irreps, namely
\be
\lambda'=\lambda,\qquad
\mu'=\mu,\qquad
\nu'=\nu.
\label{eq:schur1}
\ee
Hence the rotated block acts trivially on the common irreducible factor and only on the multiplicity space:
\be
\widetilde R^{BE}_{\mu,\nu:\lambda}
=
\id_{\mathcal P_\lambda^{BE}\otimes
\mathcal Q_\mu^B\otimes\mathcal Q_\nu^E}
\otimes u^{\lambda,\mu,\nu}.
\ee
The direct sums of $V_{\mu,\nu}$ and $W_\lambda$ are unitary changes of basis, so the fully rotated $R^{BE}$ remains unitary. Since it is block diagonal in the mutually orthogonal triples $(\lambda,\mu,\nu)$, each $u^{\lambda,\mu,\nu}$ is a unitary in the multiplicity space:
\be
u^{\lambda,\mu,\nu}\in\mathcal U(g_{\lambda\mu\nu}).
\ee
Finally, undoing the two branching transforms and expanding $u^{\lambda,\mu,\nu}$ in the multiplicity basis gives \eqref{eq:R-final}. Here $\beta$ is the incoming multiplicity index supplied by $W_\lambda$ and $\alpha$ is the outgoing multiplicity index supplied by $V_{\mu,\nu}$, so $u^{\lambda,\mu,\nu}$ is exactly the multiplicity-space unitary acting under the global-to-local change of Schur coordinates. It depends only on the relative choice of the two branching bases.
\end{proof}
\end{theorem}

We now derive a few additional properties of the operators appearing in \cref{eq:R-final}.

\begin{lemma}[Completeness and orthogonality of the branch co-isometries]\label{lem:orth}
The branch operators in \cref{eq:R-final} obey
\be
V_{\mu,\nu}^{\lambda,\beta}(V_{\mu,\nu}^{\lambda',\gamma})^\dagger
&=\delta_{\lambda\lambda'}\delta_{\beta\gamma}\id_{\mathcal P_\lambda^{BE}},\\
W_{\mu,\nu}^{\lambda,\alpha}(W_{\mu',\nu'}^{\lambda,\delta})^\dagger
&=\delta_{\mu\mu'}\delta_{\nu\nu'}\delta_{\alpha\delta}
\id_{\mathcal Q_\mu^B\otimes\mathcal Q_\nu^E}.
\label{eq:orth1}
\ee
Moreover,
\begin{eqnarray}
\sum_{\lambda,\beta}(V_{\mu,\nu}^{\lambda,\beta})^\dagger
V_{\mu,\nu}^{\lambda,\beta}
&=&\id_{\mathcal P_\mu^B\otimes\mathcal P_\nu^E},\\
\sum_{\mu,\nu,\alpha}(W_{\mu,\nu}^{\lambda,\alpha})^\dagger
W_{\mu,\nu}^{\lambda,\alpha}
&=&\id_{\mathcal Q_\lambda^{BE}}.
\label{eq:compl}
\end{eqnarray}
\begin{proof}
First, from the definition of the branch operators and the unitarity of $V_{\mu,\nu}$, we have
\be
V_{\mu,\nu}^{\lambda,\beta}(V_{\mu,\nu}^{\lambda',\gamma})^\dagger
&=(\bra\lambda\otimes\id\otimes\bra\beta)V_{\mu,\nu}V_{\mu,\nu}^\dagger
(\ket{\lambda'}\otimes\id\otimes\ket\gamma)\\
&=\delta_{\lambda\lambda'}\delta_{\beta\gamma}\id_{\mathcal P_\lambda^{BE}}.
\ee
Similarly, $W_\lambda W_\lambda^\dagger=\id$ gives the second orthogonality relation. This proves \eqref{eq:orth1}.
We also have the resolutions of the identity:
\be
\begin{aligned}
\sum_{\lambda,\beta}
(V_{\mu,\nu}^{\lambda,\beta})^\dagger
V_{\mu,\nu}^{\lambda,\beta}
&=
V_{\mu,\nu}^\dagger
\left(
\sum_{\lambda,\beta}
\ketbra{\lambda}{\lambda}\otimes\id_{\mathcal P_\lambda^{BE}}
\otimes\ketbra{\beta}{\beta}
\right)
V_{\mu,\nu}\\
&=
V_{\mu,\nu}^\dagger V_{\mu,\nu}
=
\id_{\mathcal P_\mu^B\otimes\mathcal P_\nu^E}.
\end{aligned}
\ee
Replacing $V_{\mu,\nu}$ by $W_\lambda$ and summing over $(\mu,\nu,\alpha)$ proves the second line in \eqref{eq:compl}.
\end{proof}
\end{lemma}

The following result has direct implications for the entanglement structure of the tripartite decomposition.

\begin{lemma}[Orthogonality after partial trace]\label{lem:parttrort}
The following relations hold
\begin{eqnarray}
\Tr_{\mathcal P_\mu^B}\!\left[(V_{\mu,\nu}^{\lambda,\alpha})^\dagger
V_{\mu,\nu'}^{\lambda,\beta}\right]
&=&\delta_{\nu\nu'}\delta_{\alpha\beta}\frac{d_\lambda}{d_\nu}
\id_{\mathcal P_\nu^E},\\
\Tr_{\mathcal P_\nu^E}\!\left[(V_{\mu,\nu}^{\lambda,\alpha})^\dagger
V_{\mu',\nu}^{\lambda,\beta}\right]
&=&\delta_{\mu\mu'}\delta_{\alpha\beta}\frac{d_\lambda}{d_\mu}
\id_{\mathcal P_\mu^B}.
\label{eq:part}
\end{eqnarray}
Extending each branch map by zero on the other summands of $\bigoplus_{\mu,\nu}\mathcal P_\mu^B\otimes\mathcal P_\nu^E$, the following Hilbert--Schmidt orthogonality condition holds
\be
\Tr\!\left[(V_{\mu,\nu}^{\lambda,\alpha})^\dagger
V_{\mu',\nu'}^{\lambda,\beta}\right]
=\delta_{\mu\mu'}\delta_{\nu\nu'}\delta_{\alpha\beta}d_\lambda.
\label{eq:branch-hs}
\ee
\begin{proof}
Set
\be
X_{\nu,\nu'}^{\alpha,\beta}
\coloneqq\Tr_{\mathcal P_\mu^B}\!\left[(V_{\mu,\nu}^{\lambda,\alpha})^\dagger
V_{\mu,\nu'}^{\lambda,\beta}\right]:
\mathcal P_{\nu'}^E\longrightarrow\mathcal P_\nu^E.
\ee
We first show that $X_{\nu,\nu'}^{\alpha,\beta}$ intertwines the symmetric group actions $p_{\nu'}(\pi)$ and $p_\nu(\pi)$. From \eqref{eq:Vmunu-intertwining}, projection gives
\be
V_{\mu,\nu}^{\lambda,\alpha}(p_\mu(\pi)\otimes p_\nu(\pi))
=p_\lambda(\pi)V_{\mu,\nu}^{\lambda,\alpha}.
\ee
Taking its adjoint and replacing $\pi$ by $\pi^{-1}$ gives
$(p_\mu(\pi)\otimes p_\nu(\pi))(V_{\mu,\nu}^{\lambda,\alpha})^\dagger
=(V_{\mu,\nu}^{\lambda,\alpha})^\dagger p_\lambda(\pi)$.
Combining the two relations yields
\be
(p_\mu(\pi)\otimes p_\nu(\pi))
(V_{\mu,\nu}^{\lambda,\alpha})^\dagger V_{\mu,\nu'}^{\lambda,\beta}
=(V_{\mu,\nu}^{\lambda,\alpha})^\dagger V_{\mu,\nu'}^{\lambda,\beta}
(p_\mu(\pi)\otimes p_{\nu'}(\pi)).
\ee
Multiply on the left by $p_\mu(\pi)^\dagger\otimes\id$ and take the partial trace over $B$. Since this trace is invariant under conjugation on $B$, we obtain
\be
p_\nu(\pi)\Tr_{\mathcal P_\mu^B}\!\left[(V_{\mu,\nu}^{\lambda,\alpha})^\dagger
V_{\mu,\nu'}^{\lambda,\beta}\right]
=\Tr_{\mathcal P_\mu^B}\!\left[(V_{\mu,\nu}^{\lambda,\alpha})^\dagger
V_{\mu,\nu'}^{\lambda,\beta}\right]p_{\nu'}(\pi).
\ee
Equivalently,
\be
p_\nu(\pi)X_{\nu,\nu'}^{\alpha,\beta}
=X_{\nu,\nu'}^{\alpha,\beta}p_{\nu'}(\pi),\qquad\pi\in S_k.
\ee
If $\nu\ne\nu'$, Schur's lemma gives $X_{\nu,\nu'}^{\alpha,\beta}=0$. If $\nu=\nu'$, it gives $X_{\nu,\nu}^{\alpha,\beta}=c_{\alpha\beta}\id_{\mathcal P_\nu^E}$. It remains to determine the scalar. Taking the trace and using \cref{lem:orth},
\be
c_{\alpha\beta}d_\nu
&=\Tr[(V_{\mu,\nu}^{\lambda,\alpha})^\dagger V_{\mu,\nu}^{\lambda,\beta}]\\
&=\Tr[V_{\mu,\nu}^{\lambda,\beta}(V_{\mu,\nu}^{\lambda,\alpha})^\dagger]
=\delta_{\alpha\beta}d_\lambda.
\ee
This proves the first identity in \cref{eq:part}. Interchanging $B$ and $E$ proves the second, and taking a trace gives \eqref{eq:branch-hs}.
\end{proof}
\end{lemma}
We now use the decomposition of the Schur-Weyl intertwiner in \cref{thm:local-global-schur} to derive the corresponding result of \cref{lem:pureschurweyl} for an arbitrary purification $\ket\psi_{ABE}$ of $\rho_{AB}$. Here, Eve's Schur transform is only a choice of coordinates on the purifying system, which does not affect any protocol acting on $AB$ alone. In particular, choosing the $A:BE$ splitting we get:

\begin{theorem}[Schur-Weyl purification]\label{thm:schurpurif}
Let $p_\lambda$ and $\ket{\phi_\lambda}_{A:BE}$ be the respective quantities in
\cref{lem:pureschurweyl} for the bipartition $A:BE$. Then
\be
&(U_{\mathrm{Schur}}^A\otimes U_{\mathrm{Schur}}^B\otimes U_{\mathrm{Schur}}^E)
\ket\psi_{ABE}^{\otimes k}\\
&=
\sum_{\substack{\lambda,\mu,\nu\vdash k\\
\ell(\lambda)\le r_A,\;
\ell(\mu)\le r_B,\;
\ell(\nu)\le r_E}}
\sqrt{p_\lambda}\,
\ket\lambda_A\ket\mu_B\ket\nu_E
\otimes
\sum_{\alpha=1}^{g_{\lambda\mu\nu}}
\bigl(\id\otimes(V_{\mu,\nu}^{\lambda,\alpha})^\dagger\bigr)
\ket*{\Phi_{\mathcal P_\lambda}}_{A:BE}
\otimes
\bigl(\id\otimes W_{\mu,\nu}^{\lambda,\alpha}\bigr)
\ket{\phi_\lambda}_{A:BE}.
\label{eq:tripartite-schur}
\ee
Here $V_{\mu,\nu}^{\lambda,\alpha}$ and $W_{\mu,\nu}^{\lambda,\alpha}$ are the maps defined in \cref{thm:local-global-schur}. In particular, the multiplicity bases of $W_\lambda$ are chosen, without loss of generality, so as to absorb the multiplicity unitaries $u^{\lambda,\mu,\nu}$ of \cref{thm:local-global-schur}; specifically, the resulting $W_\lambda$ are again valid unitary branching intertwiners. Terms with $g_{\lambda\mu\nu}=0$ are understood to be absent. Equivalently, the right-hand side of \cref{eq:tripartite-schur} is a purification of the state obtained by applying the Schur transform only to Alice and Bob.

\begin{proof}
We first justify the choice of multiplicity bases made in the statement. In the notation of \cref{thm:local-global-schur}, define
\be
\widetilde W_{\mu,\nu}^{\lambda,\alpha}
\coloneqq
\sum_{\beta=1}^{g_{\lambda\mu\nu}}
u_{\alpha\beta}^{\lambda,\mu,\nu}
W_{\mu,\nu}^{\lambda,\beta}.
\label{eq:rotated-W}
\ee
For fixed $(\lambda,\mu,\nu)$, this is a unitary change of basis in the Kronecker multiplicity space. Equivalently, $\widetilde W_\lambda$ is obtained from $W_\lambda$ by a block-diagonal unitary acting only on the multiplicity factors. Since these factors carry the trivial representation, this unitary commutes with
$q_\mu^B(U_B)\otimes q_\nu^E(U_E)\otimes\id_{g_{\lambda\mu\nu}}$.
Hence $\widetilde W_\lambda$ is again a unitary intertwiner satisfying the same branching relation as $W_\lambda$.
The same unitary rotation also preserves the branch orthogonality and completeness relations. Indeed,
\be
\begin{aligned}
\widetilde W_{\mu,\nu}^{\lambda,\alpha}
(\widetilde W_{\mu',\nu'}^{\lambda,\delta})^\dagger
&=
\delta_{\mu\mu'}\delta_{\nu\nu'}
\sum_\gamma
u_{\alpha\gamma}^{\lambda,\mu,\nu}
\overline{u_{\delta\gamma}^{\lambda,\mu,\nu}}\,
\id_{\mathcal Q_\mu^B\otimes\mathcal Q_\nu^E}\\
&=
\delta_{\mu\mu'}\delta_{\nu\nu'}\delta_{\alpha\delta}
\id_{\mathcal Q_\mu^B\otimes\mathcal Q_\nu^E},
\end{aligned}
\ee
while
\be
\sum_{\mu,\nu,\alpha}
(\widetilde W_{\mu,\nu}^{\lambda,\alpha})^\dagger
\widetilde W_{\mu,\nu}^{\lambda,\alpha}
=
\id_{\mathcal Q_\lambda^{BE}}.
\ee
We may therefore absorb this rotation into the definition of $W_\lambda$ and omit the tilde. With this convention, \eqref{eq:R-final} becomes
\be
R^{BE}
=
\sum_{\lambda,\mu,\nu}
\ket{\mu,\nu}\bra\lambda\otimes
\sum_{\alpha=1}^{g_{\lambda\mu\nu}}
(V_{\mu,\nu}^{\lambda,\alpha})^\dagger
\otimes
W_{\mu,\nu}^{\lambda,\alpha}.
\label{eq:R-aligned}
\ee
Now view the purification as a bipartite pure state across $A:BE$. By
\cref{lem:pureschurweyl},
\be
(U_{\mathrm{Schur}}^A\otimes U_{\mathrm{Schur}}^{BE})
\ket\psi_{ABE}^{\otimes k}
=
\sum_{\substack{\lambda\vdash k\\ \ell(\lambda)\le r_A}}
\sqrt{p_\lambda}\,
\ket\lambda_A\ket\lambda_{BE}
\otimes
\ket{\Phi_{\mathcal P_\lambda}}_{A:BE}
\otimes
\ket{\phi_\lambda}_{A:BE}.
\ee
The remaining change of basis acts only on $BE$. By the definition
$R^{BE}=(U_{\mathrm{Schur}}^B\otimes U_{\mathrm{Schur}}^E)
(U_{\mathrm{Schur}}^{BE})^\dagger$,
\be
U_{\mathrm{Schur}}^A\otimes U_{\mathrm{Schur}}^B\otimes U_{\mathrm{Schur}}^E
=
(\id\otimes R^{BE})
(U_{\mathrm{Schur}}^A\otimes U_{\mathrm{Schur}}^{BE}).
\ee
Using \eqref{eq:R-aligned} on a fixed global label gives
\be
R^{BE}
(\ket\lambda\otimes\ket{S_\lambda}\ket{T_\lambda})
=
\sum_{\mu,\nu,\alpha}
\ket{\mu,\nu}\otimes
(V_{\mu,\nu}^{\lambda,\alpha})^\dagger\ket{S_\lambda}
\otimes
W_{\mu,\nu}^{\lambda,\alpha}\ket{T_\lambda}.
\ee
Applying this identity to the $BE$ factors of each summand yields
\eqref{eq:tripartite-schur}, initially with the corresponding local length constraints.
It remains to sharpen these constraints to the ranks of the reduced states.
For Bob, let
$\Pi_{>r_B}^B\coloneqq\sum_{\ell(\mu)>r_B}\Pi_\mu^B$.
Since $\rho_B$ has rank $r_B$, the state $\rho_B^{\otimes k}$ is supported on
$(\operatorname{supp}\rho_B)^{\otimes k}\cong(\mathbb C^{r_B})^{\otimes k}$,
whose Schur-Weyl decomposition contains no sectors with $\ell(\mu)>r_B$.
Hence
\be
\Tr\!\left[\Pi_{>r_B}^B\rho_B^{\otimes k}\right]=0.
\ee
Using $\rho_B^{\otimes k}=\Tr_{AE}[\psi_{ABE}^{\otimes k}]$, this is equivalently
\be
\bra{\psi}^{\otimes k}
\bigl(\id_{A^{\otimes k}}\otimes\Pi_{>r_B}^B\otimes
\id_{E^{\otimes k}}\bigr)
\ket{\psi}^{\otimes k}
=
0.
\ee
Since the operator inside the expectation value is a projector,
\be
\bigl(\id\otimes\Pi_{>r_B}^B\otimes\id\bigr)
\ket{\psi}^{\otimes k}
=
0.
\ee
Thus every component with $\ell(\mu)>r_B$ vanishes identically. The same argument applied to $E$ shows that every component with $\ell(\nu)>r_E$ vanishes. Together with the bipartite $A:BE$ decomposition, which already gives $\ell(\lambda)\le r_A$, this yields the three restrictions appearing in \eqref{eq:tripartite-schur}.
Finally, tracing out Eve leaves the reduced state obtained by applying the Schur transforms only to $A$ and $B$, proving the purification statement.
\end{proof}
\end{theorem}

We now define the state vectors in the permutation sector:
\be
\ket*{B_{\mu,\nu}^{\lambda,\alpha}}_P
\coloneqq
(\id_{\mathcal P_\lambda^A}\otimes
(V_{\mu,\nu}^{\lambda,\alpha})^\dagger)
\ket*{\Phi_{\mathcal P_\lambda}}_{A:BE}.
\label{eq:invariant-B}
\ee
Each branch is invariant under the diagonal tripartite permutation action,
\be
\bigl(
p_\lambda(\pi)\otimes p_\mu(\pi)\otimes p_\nu(\pi)
\bigr)
\ket*{B_{\mu,\nu}^{\lambda,\alpha}}_P
=
\ket*{B_{\mu,\nu}^{\lambda,\alpha}}_P,
\qquad \pi\in S_k.
\label{eq:B-diagonal-invariance}
\ee
Indeed, using the intertwining relation for $V_{\mu,\nu}^{\lambda,\alpha}$ and the invariance of the Bell state vector, we obtain
\be
\begin{aligned}
\bigl(
p_\lambda(\pi)\otimes p_\mu(\pi)\otimes p_\nu(\pi)
\bigr)
\ket*{B_{\mu,\nu}^{\lambda,\alpha}}_P
&=
\bigl(\id\otimes(V_{\mu,\nu}^{\lambda,\alpha})^\dagger\bigr)
\bigl(p_\lambda(\pi)\otimes p_\lambda(\pi)\bigr)
\ket*{\Phi_{\mathcal P_\lambda}}_{A:BE}\\
&=
\ket*{B_{\mu,\nu}^{\lambda,\alpha}}_P.
\end{aligned}
\ee
To get more intuition about the properties of these state vectors, we consider the conditional min-entropy defined as
\be
H_{\min}(A|E)_\omega
\coloneqq
\sup_{\sigma_E\in\mathcal D(E)}
\sup\left\{
-\log_2 t:t>0,\quad
\omega_{AE}\le t\,\id_A\otimes\sigma_E
\right\},
\ee
and the optimized sandwiched conditional collision entropy \cite{MullerLennertEtAl2013,TomamichelBertaHayashi2014}
\be
\widetilde H_2^\uparrow(A|E)_\omega
\coloneqq
\sup_{\sigma_E\in\mathcal D(E)}
-\log_2\Tr\!\left[
\left(
(\id_A\otimes\sigma_E^{-1/4})
\omega_{AE}
(\id_A\otimes\sigma_E^{-1/4})
\right)^2
\right],
\ee
where $\omega_{AE}\in\mathcal D(AE)$ and negative powers act on $\supp\sigma_E$. In the collision-entropy optimization, a reference with $\supp\omega_E\not\subseteq\supp\sigma_E$ contributes $-\infty$. 
These conditional entropies are natural one-shot measures of decoupling and hence of one-shot entanglement distillation, with $H_{\min}$ controlling the recoverable entanglement and $\widetilde H_2^\uparrow$ arising naturally in random-coding and second-moment bounds \cite{DupuisBertaWullschlegerRenner2014,BuscemiDatta2010}.
Interestingly, the tripartite permutation-invariant state vectors $\ket*{B_{\mu,\nu}^{\lambda,\alpha}}_P$ already encode, at the one-shot level, the entanglement quantity whose asymptotic limit yields the coherent information. We make this precise in the following:

\begin{lemma}[Entanglement structure of the permutation sector]\label{lem:orthperm}
For fixed $(\lambda,\mu)$, the state vectors in \eqref{eq:invariant-B} are orthonormal in $(\nu,\alpha)$:
\be
\braket*{B_{\mu,\nu}^{\lambda,\alpha}}
{B_{\mu,\nu'}^{\lambda,\beta}}
=
\delta_{\nu\nu'}\delta_{\alpha\beta}.
\label{eq:B-orthogonality}
\ee
Moreover, every branch has maximally mixed one-party marginals,
\be
(B_{\mu,\nu}^{\lambda,\alpha})_A
=
\frac{\id_{\mathcal P_\lambda}}{d_\lambda},
\qquad
(B_{\mu,\nu}^{\lambda,\alpha})_B
=
\frac{\id_{\mathcal P_\mu}}{d_\mu},
\qquad
(B_{\mu,\nu}^{\lambda,\alpha})_E
=
\frac{\id_{\mathcal P_\nu}}{d_\nu}.
\label{eq:invariant-marginals}
\ee
Consequently,
\be
H_{\min}(A|E)_{B_{\mu,\nu}^{\lambda,\alpha}}
=
\widetilde H_2^\uparrow(A|E)_{B_{\mu,\nu}^{\lambda,\alpha}}
=
H(A|E)_{B_{\mu,\nu}^{\lambda,\alpha}}
=
\log_2 d_\mu-\log_2 d_\nu.
\label{eq:permutation-entropy}
\ee

\begin{proof}
For $\nu=\nu'$, the transpose trick and \cref{lem:orth} give
\be
\braket*{B_{\mu,\nu}^{\lambda,\alpha}}
{B_{\mu,\nu}^{\lambda,\beta}}
=
\frac{1}{d_\lambda}
\Tr\!\left[
V_{\mu,\nu}^{\lambda,\alpha}
(V_{\mu,\nu}^{\lambda,\beta})^\dagger
\right]
=
\delta_{\alpha\beta},
\ee
while different $\nu$ belong to orthogonal irrep sectors. This proves
\eqref{eq:B-orthogonality}.
The marginal on $A$ follows directly from the co-isometry relation,
\be
(B_{\mu,\nu}^{\lambda,\alpha})_A
=
\frac{
\left[
V_{\mu,\nu}^{\lambda,\alpha}
(V_{\mu,\nu}^{\lambda,\alpha})^\dagger
\right]^T}{d_\lambda}
=
\frac{\id_{\mathcal P_\lambda^A}}{d_\lambda}.
\ee
Using \cref{lem:parttrort},
\be
(B_{\mu,\nu}^{\lambda,\alpha})_B
&=
\frac{1}{d_\lambda}
\Tr_E\!\left[
(V_{\mu,\nu}^{\lambda,\alpha})^\dagger
V_{\mu,\nu}^{\lambda,\alpha}
\right]
=
\frac{\id_{\mathcal P_\mu^B}}{d_\mu},\\
(B_{\mu,\nu}^{\lambda,\alpha})_E
&=
\frac{1}{d_\lambda}
\Tr_B\!\left[
(V_{\mu,\nu}^{\lambda,\alpha})^\dagger
V_{\mu,\nu}^{\lambda,\alpha}
\right]
=
\frac{\id_{\mathcal P_\nu^E}}{d_\nu}.
\ee
This proves \eqref{eq:invariant-marginals}.
Since $\ket*{B_{\mu,\nu}^{\lambda,\alpha}}$ is pure,
\be
H(A|E)_{B_{\mu,\nu}^{\lambda,\alpha}}
=
H(B)_{B_{\mu,\nu}^{\lambda,\alpha}}
-
H(E)_{B_{\mu,\nu}^{\lambda,\alpha}}
=
\log_2 d_\mu-\log_2 d_\nu.
\ee
For the min-entropy, choose
$\sigma_E=\pi_E\coloneqq\id_{\mathcal P_\nu}/d_\nu$.
Since
$(B_{\mu,\nu}^{\lambda,\alpha})_{AE}$
and
$(B_{\mu,\nu}^{\lambda,\alpha})_B$
have the same nonzero spectrum,
\be
\left\|
(B_{\mu,\nu}^{\lambda,\alpha})_{AE}
\right\|_\infty
=
\frac{1}{d_\mu}.
\ee
Therefore
\be
(B_{\mu,\nu}^{\lambda,\alpha})_{AE}
\le
\frac{1}{d_\mu}\id_{AE}
=
\frac{d_\nu}{d_\mu}\,
\id_A\otimes\pi_E,
\ee
and hence
\be
H_{\min}(A|E)_{B_{\mu,\nu}^{\lambda,\alpha}}
\ge
\log_2 d_\mu-\log_2 d_\nu.
\ee
For the collision entropy, the same choice $\sigma_E=\pi_E$ gives
\be
\widetilde H_2^\uparrow(A|E)_{B_{\mu,\nu}^{\lambda,\alpha}}
&\ge
-\log_2\!\left[
d_\nu
\Tr\!\left(
(B_{\mu,\nu}^{\lambda,\alpha})_{AE}^2
\right)
\right]\\
&=
-\log_2\!\left[
d_\nu
\Tr\!\left(
(B_{\mu,\nu}^{\lambda,\alpha})_B^2
\right)
\right]
=
\log_2 d_\mu-\log_2 d_\nu,
\ee
where $\Tr[(B_{\mu,\nu}^{\lambda,\alpha})_B^2]=1/d_\mu$.
Finally, $H_{\min}(A|E)\le \widetilde H_2^\uparrow(A|E)\le H(A|E)$, so the preceding lower bounds and the exact von Neumann value force equality throughout, proving \eqref{eq:permutation-entropy}.
\end{proof}
\end{lemma}

For the following derivation, we write the normalized state conditioned on the local Schur outcomes $(\lambda,\mu)$. Let $p_{\lambda,\mu}>0$ denote the probability of this outcome, and define the unitary-sector state vectors
\be
\ket*{z_{\nu,\alpha}}_Q
\coloneqq
\sqrt{\frac{p_\lambda}{p_{\lambda,\mu}}}\,
(\id_A\otimes W_{\mu,\nu}^{\lambda,\alpha})
\ket{\phi_\lambda}_{A:BE}
=
\sqrt{q_{\nu,\alpha}}\ket*{U_{\nu,\alpha}}_Q,
\ee
where $q_{\nu,\alpha}\coloneqq\|z_{\nu,\alpha}\|^2$, and
$\ket*{U_{\nu,\alpha}}$ is normalized whenever $q_{\nu,\alpha}>0$; zero-weight terms are omitted. The conditional purification is therefore
\be
\ket*{\Psi_{\lambda,\mu}}
=
\sum_{\nu,\alpha}
\sqrt{q_{\nu,\alpha}}\,
\ket\nu_E\otimes
\ket*{B_{\mu,\nu}^{\lambda,\alpha}}_P
\otimes
\ket*{U_{\nu,\alpha}}_Q,
\qquad
\sum_{\nu,\alpha}q_{\nu,\alpha}=1.
\label{eq:normalized-schur-branch}
\ee
Indeed, the vectors
$\ket\nu_E\otimes\ket*{B_{\mu,\nu}^{\lambda,\alpha}}_P$
are orthonormal by \cref{lem:orthperm}, so normalization gives
$\sum_{\nu,\alpha}q_{\nu,\alpha}=1$. The state vectors
$\ket*{U_{\nu,\alpha}}$ need not be orthogonal. In particular,
$\Pr(\nu\mid\lambda,\mu)=\sum_\alpha q_{\nu,\alpha}$, and all dependence on the input state $\rho_{AB}$ within the fixed $(\lambda,\mu)$ sector is contained in the weights $q_{\nu,\alpha}$ and the unitary-sector state vectors $\ket*{U_{\nu,\alpha}}$.

\section{Universal entanglement distillation}\label{app:universal}
We now apply the random-subspace argument to the error model obtained from the local Schur transforms. As before, Alice chooses a code in her permutation register and Bob corrects the corresponding Clebsch--Gordan errors. We introduce a weighted reference channel and show how its recovery map also corrects the physical state. This gives the finite-copy bound in terms of a Schur-Weyl information-spectrum entropy.
Here one-way LOCC distillation converts $\rho_{AB}^{\otimes k}$ into an approximation of $\Phi_K$ using local operations and classical messages from Alice to Bob, with yield $\log_2 K/k$ ebits per copy. Fix an integer $r\ge\rank\rho_{AB}$. This is a supplied rank promise, not knowledge of the state or its support; with no sharper promise one may take $r=d_A d_B$. The Kronecker coefficient $g_{\lambda\mu\nu}$ is symmetric in its three labels, as follows from
$g_{\lambda\mu\nu}=(k!)^{-1}\sum_{\pi\in S_k}\chi_\lambda(\pi)\chi_\mu(\pi)\chi_\nu(\pi)$.

\subsection{From the physical channel to a weighted reference}
Fix local Schur labels $(\lambda,\mu)$ of positive probability. By \cref{thm:schurpurif}, and in particular \cref{eq:normalized-schur-branch}, the local Schur projections leave Alice and Bob with the post-measurement state vector
\be
\ket*{\Psi_{\lambda,\mu}}_{ABE}=\sum_{\nu,\alpha}(\id\otimes V_{\nu,\alpha}^\dagger)\ket{\Phi_d}_{\mathcal P_\lambda^A:\mathcal P_\lambda^{BE}}
\otimes\ket*{z_{\nu,\alpha}}_Q\otimes\ket\nu_E,
\qquad \sum_{\nu,\alpha}\|z_{\nu,\alpha}\|^2=1.
\label{eq:physical-branch-vector}
\ee
where we abbreviated $V_{\nu,\alpha}=V_{\mu,\nu}^{\lambda,\alpha}$. In the following, we will reduce the LOCC entanglement distillation problem to the setting of \cref{lem:normalized-polar}, with $d=d_\lambda$, $D=d_\mu$. Here, all state dependence lies in the unnormalized vectors $\ket*{z_{\nu,\alpha}}_Q$, which need not be orthogonal. 
It is helpful to expand the Clebsch--Gordan intertwiners as
\be
V_{\nu,\alpha}^\dagger=\sum_{e=1}^{d_\nu}E_{\nu,\alpha,e}\otimes\ket e,
\qquad E_{\nu,\alpha,e}\coloneqq(\id\otimes\bra e)V_{\nu,\alpha}^\dagger:
\mathcal P_\lambda^{BE}\longrightarrow\mathcal P_\mu^B.
\ee
The intertwiner identities in~\cref{lem:orth,lem:parttrort} then imply
\be
\sum_eE_{\nu,\beta,e}^\dagger E_{\nu,\alpha,e}
&=\delta_{\alpha\beta}\id_d,\qquad
\sum_eE_{\nu,\alpha,e}E_{\nu,\alpha,e}^\dagger=\frac dD\id_D,\\
\Tr \left( E_{\nu,\alpha,e}^\dagger E_{\eta,\beta,f} \right)
&=\frac d{d_\nu}\delta_{\nu\eta}\delta_{\alpha\beta}\delta_{ef}.
\label{eq:three-error-identities}
\ee
Tracing Eve's permutation register $\mathcal P_\nu$, the unitary registers $\mathcal{Q}$ and the Schur label register in \eqref{eq:physical-branch-vector} gives a channel description of the resulting permutation-register state:
\be
\rho_{\mathcal P_\lambda^A\mathcal P_\mu^B}
=(\idmap\otimes\mathcal N_{\lambda\to\mu})(\Phi_d), \qquad \mathcal N_{\lambda\to\mu}(X)
\coloneqq\sum_{\nu,e,\alpha,\beta}S_{\alpha\beta}^{(\nu)}
E_{\nu,\alpha,e}X E_{\nu,\beta,e}^\dagger,
\ee
where $\mathcal N_{\lambda\to\mu}:\mathcal L(\mathcal P_\lambda^{BE})\to\mathcal L(\mathcal P_\mu^B)$ and we define the Gram matrix
\be
S_{\alpha\beta}^{(\nu)}
&\coloneqq\langle z_{\nu,\beta}|z_{\nu,\alpha}\rangle
=\sqrt{q_{\nu,\alpha}q_{\nu,\beta}}\langle U_{\nu,\beta}|U_{\nu,\alpha}\rangle.
\label{eq:physical-schur-channel}
\ee
Here each $S^{(\nu)}$ is positive, $p_\nu\coloneqq \Tr S^{(\nu)}$ is the conditional probability $p_{\nu|\lambda,\mu}$, and $\sum_\nu p_\nu=1$. Thus $\mathcal N_{\lambda\to\mu}$ is completely positive. The first identity in \eqref{eq:three-error-identities} gives $\mathcal N_{\lambda\to\mu}^\dagger(\id_D)=\id_d$, proving trace preservation. 
Now choose any nonempty collection $\mathcal T$ of admissible environment sectors $\{\nu\}_{\nu \in \mathcal T}$, with $g_{\lambda\mu\nu}>0$, keeping all their multiplicity indices, and put
\be
Q_{\mathcal T}\coloneqq\sum_{\nu\in\mathcal T}g_{\lambda\mu\nu},\qquad
Q\coloneqq Q_{\lambda\mu}^{(r)}\coloneqq \sum_{\substack{\nu\vdash k\\\ell(\nu)\le r}}g_{\lambda\mu\nu},\qquad
N\coloneqq\sum_{\nu\in\mathcal T}g_{\lambda\mu\nu}d_\nu,
\qquad
\mathcal N_{\lambda\to\mu}^{(\nu,\alpha)}(X)\coloneqq\sum_eE_{\nu,\alpha,e}XE_{\nu,\alpha,e}^\dagger.
\ee
Each $\mathcal N_{\lambda\to\mu}^{(\nu,\alpha)}$ is a channel and maps $\pi_d$ to $\pi_D$. The uniform reference $\widehat{\mathcal N}_{\lambda\to\mu}^{\mathcal T,\mathrm{unif}}\coloneqq Q_{\mathcal T}^{-1}\sum_{\nu\in\mathcal T,\alpha}\mathcal N_{\lambda\to\mu}^{(\nu,\alpha)}$ would assign equal weight to these channels. We instead assign weights proportional to the number of error directions in each $\nu$. Setting $i \coloneqq (\nu,\alpha,e)$ we define
\be
\widehat{\mathcal N}_{\lambda\to\mu}^{\mathcal T}(X)
=\sum_{\nu\in\mathcal T,\alpha}\frac{d_\nu}{N}\mathcal N_{\lambda\to\mu}^{(\nu,\alpha)}(X)
=\frac1N\sum_{i=1}^NF_iXF_i^\dagger,
\qquad F_{\nu,\alpha,e}\coloneqq\sqrt{d_\nu}\,E_{\nu,\alpha,e}.
\label{eq:weighted-schur-channel}
\ee
This is the reference channel of \eqref{eq:reference-schur-channel}; its dependence on the code parameters enters through the set $\mathcal T$ chosen below. Here $N=N_{\lambda\mu}(\mathcal T)$. The weights sum to one. Equations~\eqref{eq:three-error-identities} show that the $F_i$ satisfy all the channel hypotheses in \cref{lem:normalized-polar}; in particular, $\widehat{\mathcal N}_{\lambda\to\mu}^{\mathcal T}(\pi_d)=\pi_D$.
To see the reason for these weights, let $\ket{f_i}\coloneqq(\id\otimes F_i)\ket{\Phi_d}$. These vectors are orthonormal, since $\langle f_i|f_j\rangle=d^{-1}\Tr F_i^\dagger F_j=\delta_{ij}$. The normalized Choi states of the channels $\widehat{\mathcal N}_{\lambda\to\mu}^{\mathcal T,\mathrm{unif}}$ and $\widehat{\mathcal N}_{\lambda\to\mu}^{\mathcal T}$ can then be written as
\be
(\idmap\otimes\widehat{\mathcal N}_{\lambda\to\mu}^{\mathcal T,\mathrm{unif}})(\Phi_d)
=\sum_{\nu\in\mathcal T,\alpha,e} \frac{1}{Q_{\mathcal T}d_\nu} \ketbra{f_{\nu\alpha e}}{f_{\nu\alpha e}},\quad
(\idmap\otimes\widehat{\mathcal N}_{\lambda\to\mu}^{\mathcal T})(\Phi_d)
=\frac{1}{N} \sum_{i=1}^N\ketbra{f_i}{f_i}.
\label{eq:flat-choi-reference}
\ee
Thus, the weighted reference Choi state is maximally mixed on an $N$-dimensional Schur error space.

\subsection{The Schur information spectrum and the universal theorem}
In this section, we formally describe the universal distillation protocol and prove the achievability result in terms of the Schur conditional information spectrum.
Since Alice and Bob do not know Eve's label $\nu$, we use one cutoff ordered by the permutation-sector dimension: to accept a sector of dimension $d_\nu$, we include all admissible sectors of dimension at most $d_\nu$, with all their multiplicities. The total number of error directions is
\be
\mathcal V_{\lambda\mu}^{(r)}(t)
\coloneqq \sum_{\substack{\eta\vdash k,\;\ell(\eta)\le r\\d_\eta\le t}}
g_{\lambda\mu\eta}d_\eta.
\label{eq:sw-volume-information}
\ee
Then, since Bob's register has size $D\equiv d_\mu$, we expect a random coding argument to work when the code-space size times the number of independent error directions (counting multiplicities) is smaller than $D$; hence
\be
K \ll \frac{d_\mu}{\mathcal V_{\lambda\mu}^{(r)}(d_\nu)}.
\ee
Therefore, the statistic
\be
J_k^{(r)}(\lambda,\mu,\nu) \coloneqq \log_2\frac{d_\mu}{\mathcal V_{\lambda\mu}^{(r)}(d_\nu)}
\ee
measures the logarithmic room left for the code after accounting for the cumulative error volume. It is then natural to define the Schur-Weyl conditional information-spectrum entropy as the largest amount of logarithmic coding room that is available with probability at least $1-\varepsilon$.
\be
H_{\mathrm{SW},k}^{\varepsilon,(r)}(A|E)_\psi
\coloneqq \max\left\{x\in\mathbb R:\Pr_{\psi^{\otimes k}}\bigl[J_k^{(r)}<x\bigl]\le\varepsilon\right\},
\qquad 0\le\varepsilon<1.
\label{eq:sw-spectrum-entropy}
\ee
The supremum is attained because the distribution has finite support.
Equivalently, $J_k^{(r)}$ may be viewed as a Schur-Weyl coarse graining of the conditional information density: heuristically, on typical branches $\log_2 d_\mu\simeq kH(B)_\rho$ while $\log_2 \mathcal V_{\lambda\mu}^{(r)}(d_\nu)\simeq kH(E)_\psi=kH(AB)_\rho$, so that $J_k^{(r)}\simeq k\bigl(H(B)_\rho-H(E)_\psi\bigr)=kH(A|E)_\psi=kI(A\rangle B)_\rho$.

For comparison, in the classical setting $(A,E)\sim p_{AE}$, the conditional information-spectrum formalism is built from the conditional surprisal
\be
\imath_{A|E}(a,e)\coloneqq -\log_2 p_{A|E}(a|e),
\ee
and its lower-tail quantiles~\cite{Han2003,DattaRenner2009}. Our construction has the same information-spectrum structure, but with the conditional surprisal replaced by the Schur-Weyl coding statistic $J_k^{(r)}(\lambda,\mu,\nu)$, which measures the branchwise logarithmic room available after accounting for the cumulative compatible error volume.

At finite block length this quantity records the error volume of the chosen universal decoder; its relation to the intrinsic information spectrum is given by \eqref{eq:advanced-sw-classical-coupling}. The theorem below is the weighted-channel extension of the random-subspace bound, with $L$ providing the coding slack.

\begin{theorem}[Universal entanglement distillation via Schur--Weyl coding]
\label{thm:universal-distillation}
For every $k$, integer $K\ge1$, supplied rank ceiling $r$, and $L\ge1$, there is a randomized one-way LOCC protocol depending only on $(d_A,d_B,k,K,r,L)$, whose seed-averaged Bell-state infidelity $\varepsilon_k\coloneqq 1-\langle\Phi_K|\omega_{A'B'}|\Phi_K\rangle$ on every input with $\rank\rho_{AB}\le r$ satisfies
\be
\varepsilon_k\le p_k+\frac{1-p_k}{L},\qquad
p_k\coloneqq \Pr[J_k^{(r)}<\log_2 K+\log_2 L].
\label{eq:sw-master}
\ee
In particular, $\log_2 K\le H_{\mathrm{SW},k}^{\varepsilon,(r)}(A|E)_\psi-\log_2 L$ implies $\varepsilon_k\le\varepsilon+(1-\varepsilon)/L$. For $0<\eta<\varepsilon<1$, the sufficient condition
\be
\log_2 K\le H_{\mathrm{SW},k}^{\varepsilon-\eta,(r)}(A|E)_\psi-\log_2(1/\eta)
\label{eq:entropy-achievable}
\ee
gives $\varepsilon_k\le\varepsilon$ by taking $L=1/\eta$.
Writing $\varepsilon_k(\rho_{AB};s)$ for the infidelity at fixed
classical seed $s$, each fixed input and $0<\zeta<1$ obey
\be
\Pr_s\!\left[
\varepsilon_k(\rho_{AB};s)>
\frac{p_k+(1-p_k)/L}{\zeta}
\right]\le\zeta.
\label{eq:distillation-seed-tail}
\ee
\end{theorem}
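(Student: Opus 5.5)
The plan is to specify the protocol explicitly and then reduce, sector by sector, to \cref{lem:normalized-polar} with the weighted error map $\mathcal N_T$. Alice and Bob apply local Schur transforms and measure $(\lambda,\mu)$, and Alice sends $\lambda$ to Bob. Using the shared seed, Alice applies a Haar (or $2$-design) unitary $U$ on $\mathcal P_\lambda^A$. She then measures the projective decomposition of $\mathcal P_\lambda^A$ into $\lfloor d_\lambda/K\rfloor$ blocks of dimension $K$ (plus a remainder, declared failure), and sends the block index $b$. By the transpose trick \eqref{eq:schur-code-transpose}, outcome $b$ corresponds to encoding through the isometry $C_b=\overline{U}^{\,T}\!\cdots$, i.e.\ a Haar-random isometry $C:\mathbb C^K\to\mathcal P_\lambda$. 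The post-measurement state is proportional to $(\idmap_K\otimes\mathcal N_{\lambda\to\mu}\circ\mathcal C)(\Phi_K)$.

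Bob fixes the cutoff set $\mathcal T=\mathcal T_{\lambda\mu}(K,L)\coloneqq\{\eta:\ell(\eta)\le r,\ g_{\lambda\mu\eta}>0,\ \log_2(d_\mu/\mathcal V^{(r)}_{\lambda\mu}(d_\eta))\ge\log_2(KL)\}$. Because $\mathcal V$ is monotone in $t$, $\mathcal T$ is downward closed in $d_\eta$, and $N=N_{\lambda\mu}(\mathcal T)\le \mathcal V^{(r)}_{\lambda\mu}(\max_{\mathcal T}d_\eta)\le d_\mu/(KL)$. Bob applies the polar recovery $\mathcal R_C$ of \cref{lem:normalized-polar} built from the reference channel \eqref{eq:weighted-schur-channel}, with $d=d_\lambda$, $D=d_\mu$ and Kraus operators $F_{\nu,\alpha,e}$. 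The hypotheses \eqref{eq:normalized-errors} follow from \eqref{eq:three-error-identities}. The protocol depends only on $(d_A,d_B,k,K,r,L)$, since $\mathcal T$ is defined by representation theory alone.

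For the error analysis, split the physical map as $\mathcal N_{\lambda\to\mu}=\mathcal N_{T}+\mathcal N_{\mathrm{bad}}$. The first term has $T=\bigoplus_{\nu\in\mathcal T}S^{(\nu)}/d_\nu$ expressed in the $F_i$ basis, that is $T_{(\nu\alpha e),(\nu\beta e)}=S^{(\nu)}_{\alpha\beta}/d_\nu$. This is positive with $\Tr T=\sum_{\nu\in\mathcal T}p_{\nu|\lambda\mu}$. The second term collects the sectors $\nu\notin\mathcal T$; since each Kronecker sector yields a CP map, it is completely positive. Its contribution to the infidelity is bounded trivially by its weight $\sum_{\nu\notin\mathcal T}p_{\nu|\lambda\mu}$. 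The cross terms vanish because different $\nu$ are orthogonal on Eve's register, which is traced out. \Cref{lem:normalized-polar} then gives infidelity at most $\beta\,\Tr T$ on the good part, with $\beta\le KN/D\le 1/L$.

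It remains to average over $(\lambda,\mu)$ and deal with the block-measurement remainder. Averaging gives $\varepsilon_k\le\Pr[\nu\notin\mathcal T]+(1-\Pr[\nu\notin\mathcal T])/L$. By construction, $\nu\notin\mathcal T$ iff $J_k^{(r)}(\lambda,\mu,\nu)<\log_2K+\log_2L$, because $\mathcal V(d_\nu)$ and the cutoff are both monotone in $d_\nu$. So $\Pr[\nu\notin\mathcal T]=p_k$. Since $x\mapsto x+(1-x)/L$ is increasing, this yields \eqref{eq:sw-master}.

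The step I expect to be the main obstacle is the remainder block when $K\nmid d_\lambda$, together with the claim that the encoded isometry is exactly Haar for every block outcome. I would handle it as follows. In sectors where $J\ge\log_2(KL)$ one has $d_\mu\ge KL N\ge K d_\lambda$-compatible bounds; I would use $d_\lambda\le d_\mu d_\nu$ and $N\ge d_\nu$ to show that the remainder probability $<K/d_\lambda$ can be absorbed, or alternatively let Alice use a POVM of $K$-dimensional random projections averaging to $\id$. I would also invoke unitary invariance to conclude that each block outcome occurs with its natural probability, independently of the state. Finally, the corollaries follow directly: with $\log_2K\le H^{\varepsilon,(r)}_{\mathrm{SW},k}-\log_2L$ we get $p_k\le\varepsilon$, and $L=1/\eta$ applied at level $\varepsilon-\eta$ gives $\varepsilon_k\le\varepsilon$. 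The seed-tail bound \eqref{eq:distillation-seed-tail} is Markov's inequality applied to the nonnegative variable $\varepsilon_k(\rho_{AB};s)$.
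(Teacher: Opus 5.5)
There is a gap at exactly the step you flag. Your protocol, your cutoff set $\mathcal T$, the choice $T=\bigoplus_{\nu\in\mathcal T}S^{(\nu)}\otimes\id_{d_\nu}/d_\nu$ and the reduction to \cref{lem:normalized-polar} all coincide with the paper's. However, the bound $\varepsilon_k\le p_k+(1-p_k)/L$ you write down does not yet account for the remainder outcome.

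If Alice declares failure on the remainder, which occurs with probability $q=s/d_\lambda<K/d_\lambda$, the branch error is $q+(1-q)(p_{\rm bad}+p_{\rm good}\beta)=p_{\rm bad}+p_{\rm good}[q+(1-q)\beta]$. With only $\beta\le KN/D$, you can only bound $q$ separately. From $d_\mu\le d_\lambda d_\nu\le d_\lambda N$ you get $K/d_\lambda\le KN/D$, hence $q+(1-q)\beta\le 2KN/D\le 2/L$. That yields $\varepsilon_k\le p_k+2(1-p_k)/L$, a factor of two worse than \eqref{eq:sw-master}.

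The missing ingredient is the sharper half of \eqref{eq:polar-beta}, $\beta\le KN/D-K/d_\lambda$, which you discarded. With it, $q+(1-q)\beta\le q+\beta\le K/d_\lambda+KN/D-K/d_\lambda=KN/D\le 1/L$, so the remainder is absorbed at no cost. This is how the paper closes the step.

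The dimension inequality you need is $d_\mu\le d_\lambda d_\nu$ (from $g_{\lambda\mu\nu}>0$ and symmetry of Kronecker coefficients), not $d_\lambda\le d_\mu d_\nu$. Combined with $d_\nu\le N\le d_\mu/(KL)$, it gives $d_\lambda\ge KL\ge K$ whenever $\mathcal T\neq\varnothing$. So a complete block exists on every branch with accepted sectors. When $d_\lambda<K$, every sector is bad and declaring failure costs nothing beyond $p_k$.

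Your alternative, a POVM with no remainder, does work and is arguably cleaner, but it needs to be stated precisely. For $d_\lambda\ge K$:
\begin{enumerate}
\item Take a finite weighted exact unitary $2$-design $\{w_j,U_j\}$ on $\mathcal P_\lambda$, fix an isometry $C_0:\mathbb C^K\to\mathcal P_\lambda$, and set $C_j=U_jC_0$.
\item Alice measures with Kraus operators $\sqrt{w_j d_\lambda/K}\,C_j^T$. The effects sum to $\id$ by the $1$-design property, so there is no remainder.
\item In each sector $\nu$, Alice's $\mathcal P_\lambda$ marginal is $p_\nu\id/d_\lambda$ (equivalently $\sum_eE_{\nu,\beta,e}^\dagger E_{\nu,\alpha,e}=\delta_{\alpha\beta}\id$). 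Hence outcome $j$ occurs with probability exactly $w_j$ in every sector and leaves $p_\nu$ and $S^{(\nu)}$ unchanged.
\item The conditional code is therefore $2$-design distributed, with $T$ independent of it. \Cref{lem:normalized-polar} then gives $p_{\rm bad}+p_{\rm good}\beta\le p_{\rm bad}+p_{\rm good}/L$ using only $\beta\le KN/D$.
\end{enumerate}
In this version the shared seed is unnecessary and the seed-tail bound becomes trivial.

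This sector-wise flatness also justifies the fixed-$T$ hypothesis of the lemma in the paper's block version, for every fixed rotation. It is not unitary invariance, as you suggest. Your derivation of the two corollaries and the Markov seed-tail bound is correct as written.
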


\begin{proof}
For $K=1$, the parties can output the one-dimensional product target while retaining their inputs, giving zero infidelity. We therefore assume $K\ge2$.
Alice and Bob first perform their local Schur measurements, obtaining $(\lambda,\mu)$, and retain their unitary registers. Alice sends $\lambda$ to Bob using classical communication.
In a fixed branch set $d=d_\lambda$, $D=d_\mu$ as in \cref{lem:normalized-polar} and define the correctable error region
\be
\mathcal T_{\lambda,\mu}^{(K)}\coloneqq\{\nu:\ell(\nu)\le r,\ g_{\lambda\mu\nu}>0,
\ KL\mathcal V_{\lambda\mu}^{(r)}(d_\nu)\le D\},
\qquad N=\sum_{\nu\in\mathcal T}g_{\lambda\mu\nu}d_\nu.
\label{eq:accepted-error-sectors}
\ee
Here $K$ is the correctable code size, while $L$ is an error-tolerance
parameter to be specified later. For simplicity, we write
$\mathcal T=\mathcal T_{\lambda,\mu}^{(K)}$ within the branch. The notation suppresses the dependence on $k$, the rank ceiling $r$, and the slack $L$; these parameters are fixed when the set is used.
If $\mathcal T=\varnothing$, Bob outputs a fixed state. This branch has $p_{\rm bad}=1$ and satisfies the claimed branch error bound trivially. The following estimates concern branches with $\mathcal T\ne\varnothing$.
Recall that the random-coding bound of \cref{lem:normalized-polar} is
controlled by $KN/D$. Let $\nu_\star\in\mathcal T$ have maximal
permutation dimension. Since the acceptance condition depends only on
$d_\nu$, $\mathcal T$ is an initial segment in $d_\nu$, including all
ties, and therefore
\begingroup
\setlength{\abovedisplayskip}{4pt}
\setlength{\belowdisplayskip}{4pt}
\be
N=\mathcal V_{\lambda\mu}^{(r)}(d_{\nu_\star}).
\ee
\endgroup
Applying \eqref{eq:accepted-error-sectors} to $\nu_\star$ gives
$KLN\le D$, hence $KN/D\le 1/L$. Thus the random-coding error on the
accepted region is bounded by $1/L$.
Moreover, for any accepted sector $\nu\in\mathcal T$, symmetry of the Kronecker coefficient implies that $\mathcal P_\mu$ occurs in $\mathcal P_\lambda\otimes\mathcal P_\nu$, and hence $D\le d\,d_\nu\le dN$. Together with $N\le D/(KL)$, this gives $d\ge KL\ge K$.
Consequently, if $d<K$, the correctable region is empty for every compatible $\mu$, and Alice declares failure. If $d\ge K$, Alice performs the code measurement below independently of $\mu$.
For $d\ge K$, Alice splits her Hilbert space as $d=mK+s$, $0\le s<K$, and chooses a random unitary rotation of a fixed orthogonal decomposition into $m$ complete $K$-dimensional blocks and one remainder. 
Let $C_t:\mathbb C^K\to\mathbb C^d$ be the isometry of block $t$. Alice measures with Kraus operators $C_t^T$, together with a map on the conjugate remainder subspace, and announces the outcome to Bob. 
The choice of unitary and blocks uses only $(\lambda,K)$. These operators form a complete instrument because their effects are the transposes of the orthogonal block projectors. 
Crucially, the transpose trick gives
\be
\bigl(C_t^T\otimes V_{\nu,\alpha}^\dagger\bigr)
\ket*{\Phi_{\mathcal P_\lambda}}_{\mathcal P_\lambda^A:\mathcal P_\lambda^{BE}}
=
\sqrt{\frac K{d_\lambda}}\,
\bigl(\id\otimes V_{\nu,\alpha}^\dagger C_t\bigr)
\ket{\Phi_K}_{A:BE}.
\label{eq:code-transpose-trick}
\ee
Thus every complete block occurs with probability $K/d$. Moreover, using
\eqref{eq:three-error-identities}, the conditional weight of a fixed
environment sector $\nu$ is
\be
\Tr\sum_{e,\alpha,\beta}S_{\alpha\beta}^{(\nu)}
(\id\otimes E_{\nu,\alpha,e}C_t)\Phi_K
(\id\otimes C_t^\dagger E_{\nu,\beta,e}^\dagger)
=
\Tr S^{(\nu)}
=
p_\nu.
\ee
Hence conditioning on a complete block leaves the sector probabilities
$p_\nu$ unchanged. Since the code measurement acts only on the permutation
factor, the matrices $S^{(\nu)}$ themselves stay independent of the random
code.
Conditioned on a complete block $t$, the accepted part of the physical code
state is therefore
\be
\rho_{\rm good}^{(t)}
&=
\sum_{\nu\in\mathcal T,e,\alpha,\beta}
S_{\alpha\beta}^{(\nu)}
(\id\otimes E_{\nu,\alpha,e}C_t)\Phi_K
(\id\otimes C_t^\dagger E_{\nu,\beta,e}^\dagger)
\\
&=
(\id\otimes W_{C_t})(\Phi_K\otimes T)
(\id\otimes W_{C_t}^\dagger),
\qquad
T=
\bigoplus_{\nu\in\mathcal T}
S^{(\nu)}\otimes\frac{\id_{d_\nu}}{d_\nu}.
\label{eq:physical-weighted-code-state}
\ee
At this point, we are exactly in the setup of \cref{lem:normalized-polar}. 
Now Bob associates to each complete block $C_t$ the polar decoder \eqref{eq:polar-recovery} of the weighted reference channel \eqref{eq:weighted-schur-channel}. The operator in \eqref{eq:physical-weighted-code-state} satisfies
\be
\Tr T
=
\sum_{\nu\in\mathcal T}\Tr S^{(\nu)}
=
\sum_{\nu\in\mathcal T}p_\nu
\eqqcolon p_{\rm good}.
\ee
For each fixed $t$, the code $C_t$ is marginally Haar distributed, while $T$ is independent of the random rotation. Hence \cref{lem:normalized-polar} gives
\be
\mathbb E_U\,\varepsilon_t^{\rm good}
\le
\beta\,\Tr T
=
p_{\rm good}\beta,
\ee
where $\varepsilon_t^{\rm good}$ is the failure weight of the accepted state \eqref{eq:physical-weighted-code-state}. Although the different blocks $C_t$ are correlated through the common Haar rotation, independence is not required: each complete block occurs with probability $K/d$, and linearity of expectation yields
\be
\mathbb E_U
\sum_{t=1}^{m}\frac Kd\,\varepsilon_t^{\rm good}
\le
\frac{mK}{d}\,p_{\rm good}\beta
=
(1-q)p_{\rm good}\beta,
\qquad
q\coloneqq \frac{s}{d}.
\ee
We now include the two remaining sources of failure: the residual subspace of dimension $s$, where Bob does not extract entanglement, and rejected sectors $\nu \notin \mathcal{T}$. The total error therefore satisfies
\be
\mathbb E\varepsilon_{\lambda\mu}
&\le q+(1-q)(p_{\rm bad}+p_{\rm good}\beta)\\
&=p_{\rm bad}+p_{\rm good}[q+(1-q)\beta]
\le p_{\rm bad}+p_{\rm good}\frac{KN}{D}
\le p_{\rm bad}+\frac{p_{\rm good}}L.
\label{eq:branch-weighted-error}
\ee
Indeed, $\beta\ge0$ and \eqref{eq:polar-beta} imply $q+(1-q)\beta\le q+\beta\le K/d+(KN/D-K/d)=KN/D$. 
Finally, averaging over $(\lambda,\mu)$ gives
\be
\varepsilon_k
\le
\sum_{\lambda,\mu}
p_{\lambda\mu}
\left(
p_{\rm bad}^{\lambda\mu}
+
\frac{p_{\rm good}^{\lambda\mu}}{L}
\right).
\ee
Since
\be
\nu\notin\mathcal T_{\lambda,\mu}^{(K)}
\quad\Longleftrightarrow\quad
J_k^{(r)}<\log_2 K+\log_2 L,
\ee
we have
\be
\sum_{\lambda,\mu}p_{\lambda\mu}p_{\rm bad}^{\lambda\mu}=p_k,
\qquad
\sum_{\lambda,\mu}p_{\lambda\mu}p_{\rm good}^{\lambda\mu}=1-p_k.
\ee
Therefore
\be
\varepsilon_k
\le
p_k+\frac{1-p_k}{L},
\ee
which proves \eqref{eq:sw-master}.
The construction is universal: all protocol choices depend only on $(d_A,d_B,k,K,r,L)$ and the observed Schur labels. Bob's decoder is fixed from the weighted reference channel, while \cref{lem:normalized-polar} guarantees the same recovery for every physical $T\ge0$. Since only second moments are used, Haar randomness may also be replaced by a finite weighted exact unitary $2$-design, an ensemble whose second moments match Haar measure. Thus the protocol is a single randomized one-way LOCC scheme for all inputs with $\rank\rho_{AB}\le r$.
The classical random choices can be sampled in advance for every
Alice sector $\lambda$, independently of the input, and the relevant
choice sent to Bob. Fixing this seed leaves a one-way LOCC protocol,
whose infidelity $\varepsilon_k(\rho_{AB};s)$ still averages over
measurement outcomes. Linearity gives
$\mathbb E_s\varepsilon_k(\rho_{AB};s)=\varepsilon_k$;
Markov's inequality and \eqref{eq:sw-master} prove
\eqref{eq:distillation-seed-tail}. This is a guarantee for each fixed
input; the successful seeds need not be common to all inputs.
\end{proof}

Equation~\eqref{eq:flat-choi-reference} shows why considering the weighted channel $\widehat{\mathcal N}_{\lambda\to\mu}^{\mathcal T}$ is useful. The unweighted reference is flat only over the channels $(\nu,\alpha)$, so comparing it to the physical error state incurs a domination factor $Q_{\mathcal T}$. By contrast, $\widehat{\mathcal N}_{\lambda\to\mu}^{\mathcal T}$ is flat over the $N$ elementary orthogonal error directions, and \cref{lem:normalized-polar} controls the physical operator $T$ directly, avoiding this multiplicity penalty.
The direct polar-recovery estimate also avoids the usual square-root loss incurred when a decoupling or second-moment bound is converted into a trace norm, thus yielding a linear loss of order $KN/D$ instead of $O(\sqrt{KN/D})$.
The cumulative volume also admits a simple spectral interpretation. For fixed $(\lambda,\mu)$, recall that $Q=Q_{\lambda\mu}^{(r)}$ counts all admissible multiplicities, whereas $Q_{\mathcal T}$ counts only the selected ones. Consider the reference state
\be
\tau_{\lambda\mu}
\coloneqq
\frac1Q
\bigoplus_{\nu:\ell(\nu)\le r}
\id_{g_{\lambda\mu\nu}}\otimes\pi_{d_\nu}.
\label{eq:universal-error-prior}
\ee
It is diagonal in the elementary error basis $(\nu,\alpha,e)$, with eigenvalue $(Qd_\nu)^{-1}$ on every direction in the $\nu$ sector. Hence the spectral cutoff at $(Qd_\nu)^{-1}$ selects precisely the error directions belonging to sectors with $d_\eta\le d_\nu$. Its rank is therefore the dimension of the corresponding error space,
\be
\mathcal V_{\lambda\mu}^{(r)}(d_\nu)
=
\rank\mathbf 1\!\left\{
\tau_{\lambda\mu}\ge\frac1{Qd_\nu}
\right\}.
\ee
Thus $\mathcal V_{\lambda\mu}^{(r)}(d_\nu)$ is exactly the number of orthogonal error directions retained by this spectral cutoff. The packing condition $K\mathcal V_{\lambda\mu}^{(r)}(d_\nu)\lesssim d_\mu$ then has the natural interpretation that Bob's permutation space must be large enough to accommodate a $K$-dimensional logical code for each selected error direction.

Alice's encoder is computationally efficient: it consists of a Schur transform, an approximate unitary $2$-design on the Specht register, and a measurement into fixed $K$-dimensional code blocks
\cite{Krovi2019efficienthigh,NakataHircheKoashiWinter2017,NakataMurao2013,CleveWatrous2000}. These steps have gate complexity $\operatorname{poly}(k,\log_2 d_A,\log_2(1/\varepsilon_{\mathrm{enc}}))$,
with an additional term at most $\varepsilon_{\mathrm{enc}}$ in the distillation error bound.\footnote{Let $\eta$ bound the difference between the design and Haar second-moment maps in induced Hilbert--Schmidt norm. Expanding \eqref{eq:gbound} and using the packing condition bounds the additional term in \eqref{eq:sw-master} by $(d_*^2+2d_*)\eta\le3k!\eta$, where $d_*\coloneqq\max_\lambda d_\lambda$. Thus $\eta\le\varepsilon_{\mathrm{enc}}/(6k!)$ leaves half the error budget for the Schur transform and gate synthesis. The required logarithmic precision is $O(k\log_2(k+1)+\log_2(1/\varepsilon_{\mathrm{enc}}))$, preserving polynomial gate complexity.} Efficient implementation of Bob's polar recovery remains a separate question.
The Haar randomization also admits a basis-independent
description in the permutation algebra. Let $R_\sigma^A$
denote the physical permutation of Alice's $k$ copies, and set
\be
G\coloneqq \frac{1}{\sqrt{k!}}
\sum_{\sigma\in S_k}g_\sigma R_\sigma^A,
\qquad
g_\sigma\overset{\mathrm{i.i.d.}}{\sim}
\mathcal N_{\mathbb C}(0,1),
\ee
where the coefficients are standard circular complex
Gaussians, normalized by $\mathbb E|g_\sigma|^2=1$.
Under Schur--Weyl duality,
\be
G=\bigoplus_\lambda
G_\lambda\otimes\id_{\mathcal Q_\lambda^A},
\qquad
G_\lambda\coloneqq
\frac{1}{\sqrt{k!}}
\sum_{\sigma\in S_k}g_\sigma p_\lambda(\sigma).
\ee
Schur orthogonality gives
\be
\mathbb E\!\left[
(G_\lambda)_{ab}
\overline{(G_\mu)_{cd}}
\right]
=\frac{\delta_{\lambda\mu}\delta_{ac}\delta_{bd}}
{d_\lambda}.
\ee
Since these entries are jointly circular Gaussian,
the matrices $\sqrt{d_\lambda}\,G_\lambda$ are independent
standard complex Ginibre matrices.
In particular, $G$ is invertible almost surely, and its
unitary polar factor is
\be
U\coloneqq G(G^\dagger G)^{-1/2}
=\bigoplus_\lambda
U_\lambda\otimes\id_{\mathcal Q_\lambda^A},
\qquad
U_\lambda\coloneqq
G_\lambda(G_\lambda^\dagger G_\lambda)^{-1/2}.
\ee
The unitary invariance of the Ginibre distribution implies
that the $U_\lambda$ are independent Haar unitaries on
$\mathcal P_\lambda^A$.
Thus the encoder's sectorwise Haar rotation can equivalently
be defined as the polar factor of a Gaussian element of
the physical permutation algebra. A Schur basis is still
used to specify the fixed rank-$K$ code blocks, but is not
needed to define their random orientation. This is an exact
distributional interpretation, rather than an efficient
implementation: sampling the displayed Gaussian sum and
realizing its polar factor are separate computational tasks.

\subsection{Achievability of the hashing bound} 
For probability vectors, write $H(p)\coloneqq-\sum_i p_i\log_2p_i$, $D(p\Vert q)\coloneqq\sum_i p_i\log_2(p_i/q_i)$, and $h_2(t)\coloneqq H(t,1-t)$, with $0\log_2 0=0$ and $D(p\Vert q)=+\infty$ if $p_i>0=q_i$ for some $i$. For $s\ge1$, set $a_s\coloneqq s(s-1)/2$ and $b_s\coloneqq(s-1)(s+2)/2$. We review some elementary bounds on the dimension of the symmetric group and unitary group irreps.

\begin{lemma}[Schur dimensions \cite{harrow2005applicationscoherentclassicalcommunication}]\label{lem:rank-dimensions}
For $\alpha\vdash k$ with $\ell(\alpha)\le s$, let $\bar\alpha=\alpha/k$. Then
\be
\frac{2^{kH(\bar\alpha)}}{(k+s)^{b_s}}
\le
d_\alpha
\le
2^{kH(\bar\alpha)},
\qquad
D_\alpha^{(s)}
\le
(k+1)^{a_s}.
\label{eq:rank-dimensions}
\ee
There are at most $(k+1)^{s-1}$ partitions $\alpha\vdash k$ with
$\ell(\alpha)\le s$. More generally, if $\ell(\alpha)\le a\le m$, then
\be
D_\alpha^{(m)}
\le
(k+1)^{am-a(a+1)/2}.
\ee
\end{lemma}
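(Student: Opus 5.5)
We prove the three parts of \cref{lem:rank-dimensions} separately, using the hook-length formula for $d_\alpha$ and the Weyl dimension formula for $D_\alpha^{(m)}$.

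\textbf{Upper bound on $d_\alpha$.} Since $\mathcal P_\alpha$ occurs in $(\mathbb C^s)^{\otimes k}$ with multiplicity at least one, $d_\alpha$ is at most the dimension of the weight space of weight $\alpha$. That weight space is spanned by the sequences in $[s]^k$ with type $\alpha$, so its dimension is the multinomial coefficient $\binom{k}{\alpha_1,\ldots,\alpha_s}$. More directly, the Frobenius formula gives
\[
d_\alpha=\frac{k!\prod_{i<j}(\alpha_i-\alpha_j+j-i)}{\prod_i(\alpha_i+s-i)!}\le\binom{k}{\alpha}.
\]
Combining this with the standard type bound $\binom{k}{\alpha}\le 2^{kH(\bar\alpha)}$ yields $d_\alpha\le 2^{kH(\bar\alpha)}$.

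\textbf{Lower bound on $d_\alpha$.} Set $l_i=\alpha_i+s-i$ and use the same formula, $d_\alpha=k!\prod_{i<j}(l_i-l_j)/\prod_i l_i!$. The Vandermonde factor is at least $1$, because the $l_i$ are strictly decreasing integers. We then bound
\[
\frac{\prod_i l_i!}{\prod_i\alpha_i!}\le\prod_i(\alpha_i+s)^{s-i}\le(k+s)^{s(s-1)/2}.
\]
It follows that $d_\alpha\ge\binom{k}{\alpha}(k+s)^{-s(s-1)/2}$. The multinomial type bound $\binom{k}{\alpha}\ge 2^{kH(\bar\alpha)}/(k+1)^{s-1}$ then gives a total exponent $s(s-1)/2+(s-1)=(s-1)(s+2)/2=b_s$, with $(k+1)\le k+s$. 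This exponent bookkeeping is the main point to check: the hook-length product must be bounded against $\prod_i\alpha_i!$ without losing more than $(k+s)^{a_s}$.

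\textbf{Bounds on $D_\alpha^{(m)}$ and counting.} The Weyl formula gives
\[
D_\alpha^{(m)}=\prod_{1\le i<j\le m}\frac{\alpha_i-\alpha_j+j-i}{j-i}.
\]
Each factor is at most $1+(\alpha_i-\alpha_j)/(j-i)\le k+1$, and it equals $1$ whenever $\alpha_i=\alpha_j$. Since $\ell(\alpha)\le a$, all pairs with $i,j>a$ contribute exactly $1$. The number of remaining pairs, namely those with $i\le a$, is $\binom{m}{2}-\binom{m-a}{2}=am-a(a+1)/2$. This proves the general bound, and setting $a=m=s$ gives $D_\alpha^{(s)}\le(k+1)^{a_s}$.

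Finally, a partition with at most $s$ parts is determined by $(\alpha_2,\ldots,\alpha_s)$, since $\alpha_1=k-\sum_{i\ge2}\alpha_i$. Each $\alpha_i$ lies in $\{0,\ldots,k\}$, so there are at most $(k+1)^{s-1}$ such partitions.
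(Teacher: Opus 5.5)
Your proposal is correct and follows essentially the same route as the paper. For $d_\alpha$ both use the Frobenius/hook formula together with the multinomial type bounds $2^{kH(\bar\alpha)}/(k+1)^{s-1}\le\binom{k}{\alpha}\le 2^{kH(\bar\alpha)}$, and for $D_\alpha^{(m)}$ both count the nontrivial factors of the Weyl formula.

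The differences are minor:
\begin{itemize}
\item For $d_\alpha\le\binom{k}{\alpha}$, the paper uses the fact that a standard tableau is determined by its row sets. Your weight-space argument also works, but the precise reason is that the highest-weight line of $\mathcal Q_\alpha$ places a copy of $\mathcal P_\alpha$ inside the weight-$\alpha$ subspace, not merely that $\mathcal P_\alpha$ occurs somewhere in $(\mathbb C^s)^{\otimes k}$.
\item The paper proves the lower type bound by factoring $\binom{k}{\alpha}$ into binomials, whereas you cite it.
\item You spell out the estimate $\prod_i l_i!/\prod_i\alpha_i!\le(k+s)^{a_s}$ and the partition count, which the paper only asserts.
\end{itemize}
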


\begin{proof}
The hook and Weyl dimension formulas give \cite{weyl1946classical}
\be
d_\alpha
=
\frac{
k!\prod_{i<j}(\alpha_i-\alpha_j+j-i)
}{
\prod_{i=1}^s(\alpha_i+s-i)!
},
\qquad
D_\alpha^{(s)}
=
\prod_{1\leq i<j \leq s}
\left(
1+\frac{\alpha_i-\alpha_j}{j-i}
\right).
\ee
Consider the second equation. Every Weyl factor $1+(\alpha_i-\alpha_j)/(j-i)$ is at most $k+1$. Counting the pairs $i<j$ gives the upper bound
$D_\alpha^{(s)}\le(k+1)^{a_s}$ in \cref{eq:rank-dimensions}. More generally, if
$\ell(\alpha)\le a \le m$, only pairs with $i\le a$ contribute nontrivially,
and their number is
$am-a(a+1)/2$, which gives the last bound.
The hook formula for $d_\alpha$ further implies
\be
d_\alpha
\ge
\binom{k}{\alpha}(k+s)^{-a_s},
\ee
The upper bound $d_\alpha\le\binom{k}{\alpha}$ follows because a standard Young tableau is uniquely determined by the sets of entries in its rows.
For the lower type bound, set
$n_i=\sum_{j=i}^s\alpha_j$ and factor the multinomial coefficient as
\be
\binom{k}{\alpha}
=
\prod_{i=1}^{s-1}\binom{n_i}{\alpha_i}
\ge
\frac{
2^{\sum_i n_i h_2(\alpha_i/n_i)}
}{
\prod_i(n_i+1)
}
\ge
\frac{2^{kH(\bar\alpha)}}{(k+1)^{s-1}}.
\ee
Each binomial probability evaluated at
its empirical parameter is a mode and is therefore at least
$1/(n_i+1)$, while the entropy sum telescopes to $kH(\bar\alpha)$.
Combining this with
$\binom{k}{\alpha}\le2^{kH(\bar\alpha)}$ and
$a_s+s-1=b_s$ proves \cref{eq:rank-dimensions}.
\end{proof}

\begin{lemma}[Pinsker bounds for weak Schur sampling \cite{harrow2005applicationscoherentclassicalcommunication}]\label{lem:rank-pinsker}
Let $\rho_X$ be a quantum state of rank $r_X$ with nonzero spectrum $p$. Its weak Schur sampling outcome $\alpha$ satisfies $\ell(\alpha)\le r_X$, and
\be
 \Pr_{\rho_X}(\alpha)\le(k+1)^{a_{r_X}}2^{-kD(\bar\alpha\Vert p)},\qquad
 \Pr_{\rho_X}\!\left[\|\bar\alpha-p\|_1>\xi\right]
 \le(k+1)^{b_{r_X}}e^{-k\xi^2/2}.
\label{eq:rank-pinsker}
\ee
For $r_X\ge2$ and $0\le\xi\le1$, on the complementary event
$|H(\bar\alpha)-H(p)|\le f_{r_X}(\xi)$, where
$f_{r_X}(\xi)\coloneqq h_2(\xi/2)+(\xi/2)\log_2(r_X-1)$; for $r_X=1$ set $f_1=0$.
\end{lemma}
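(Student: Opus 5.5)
We follow the standard Keyl--Werner/Hayashi route. The target statement has three parts: the support claim $\ell(\alpha)\le r_X$, the pointwise bound $\Pr_{\rho_X}(\alpha)\le(k+1)^{a_{r_X}}2^{-kD(\bar\alpha\Vert p)}$, and a tail bound plus an entropy-continuity bound on the good event.

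\emph{Support.} This is already established in the preliminaries. $\rho_X^{\otimes k}$ lives on $(\operatorname{supp}\rho_X)^{\otimes k}\cong(\mathbb C^{r_X})^{\otimes k}$, and that space has no Schur sectors with more than $r_X$ rows. We may therefore work entirely in dimension $s=r_X$, with $\rho_X=\operatorname{diag}(p_1,\dots,p_s)$ ordered decreasingly. By unitary invariance of $p_\alpha$ we may take this diagonal form.

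\emph{Pointwise bound.} We write $\Pr(\alpha)=d_\alpha\Tr q_\alpha(\rho_X)=d_\alpha s_\alpha(p)$, where $s_\alpha$ is the Schur polynomial. It is a sum over $\mathrm{SSYT}_s(\alpha)$ of monomials $p^{w(T)}$. Each weight $w(T)$ is majorized by $\alpha$ (dominance order), so by the rearrangement/Schur-concavity argument $p^{w(T)}\le p^{\alpha}=\prod_ip_i^{\alpha_i}$ for decreasingly ordered $p$. Hence $s_\alpha(p)\le D^{(s)}_\alpha\,p^\alpha\le(k+1)^{a_s}p^\alpha$, using \cref{lem:rank-dimensions}. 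Combining with $d_\alpha\le2^{kH(\bar\alpha)}$ from the same lemma gives
\[
\Pr(\alpha)\le(k+1)^{a_s}\,2^{kH(\bar\alpha)+k\sum_i\bar\alpha_i\log_2p_i}=(k+1)^{a_s}2^{-kD(\bar\alpha\Vert p)}.
\]
The majorization step $p^{w}\le p^{\alpha}$ is the one place needing care. I would argue it by the fact that $\sum_i w_i\log p_i$ is maximized, over weights in the permutohedron of $\alpha$, at the sorted vertex $\alpha$ itself, since $\log p_i$ is decreasing.

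\emph{Tail.} We sum the pointwise bound over the at most $(k+1)^{s-1}$ partitions with $\|\bar\alpha-p\|_1>\xi$. By Pinsker's inequality $D(\bar\alpha\Vert p)\ge\frac{1}{2\ln2}\|\bar\alpha-p\|_1^2$, so $2^{-kD}\le e^{-k\xi^2/2}$. The prefactor becomes $(k+1)^{a_s+s-1}=(k+1)^{b_s}$, exactly as claimed.

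\emph{Entropy continuity.} On $\|\bar\alpha-p\|_1\le\xi$, both vectors are supported on $s$ points with trace distance at most $\xi/2$. The Fannes--Audenaert inequality then gives $|H(\bar\alpha)-H(p)|\le h_2(\xi/2)+(\xi/2)\log_2(s-1)$ for $\xi/2\le1-1/s$. For larger $\xi$, the right-hand side already exceeds what is needed: it is at least $\log_2 s$ once monotonicity is handled, e.g.\ by capping at the maximizer. The case $s=1$ is trivial, since $\bar\alpha=p=(1)$.

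The only mildly delicate step is the dominance bound on the Schur polynomial; the rest is bookkeeping.
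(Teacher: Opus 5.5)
Your proof is correct and follows the paper's argument step for step: you restrict to $(\supp\rho_X)^{\otimes k}$, write $\Pr(\alpha)=d_\alpha s_\alpha(p)$, use majorization of the SSYT weights to get $s_\alpha(p)\le D_\alpha^{(r_X)}\prod_i p_i^{\alpha_i}$, sum over the at most $(k+1)^{r_X-1}$ partitions with Pinsker, and finish with Audenaert's sharp Fannes inequality. The one loose point is your ``larger $\xi$'' case, which you should simply drop: it is vacuous, because $0\le\xi\le1$ and $r_X\ge2$ give $\xi/2\le 1/2\le 1-1/r_X$ (and Audenaert's bound holds for every trace distance anyway), and your claim that the right-hand side is then at least $\log_2 s$ is false.
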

\begin{proof}
Since $\rho_X$ has rank $r_X$, the only partitions appearing have length $\ell(\alpha)\le r_X$.
Schur sampling has probability $d_\alpha s_\alpha(p)$, with $s_\alpha(p)$ being the corresponding Schur polynomial.
After ordering both $p_1\ge\cdots\ge p_{r_X}$ and $\alpha_1\ge\cdots\ge\alpha_{r_X}$, every weight $\beta$ of the highest-weight representation is majorized by $\alpha$. Hence $\prod_i p_i^{\beta_i}\le\prod_i p_i^{\alpha_i}$, so $s_\alpha(p)\le D_\alpha^{(r_X)}\prod_i p_i^{\alpha_i}$. Combining this with \cref{lem:rank-dimensions} gives the first inequality.
Summing over the at most $(k+1)^{r_X-1}$ partitions and using Pinsker's inequality $D(q\Vert p)\ge\|q-p\|_1^2/(2\ln2)$ gives the second bound. 
The final entropy bound then follows from the sharp Fannes inequality \cite{Audenaert2007}.
\end{proof}

\begin{lemma}[Multiplicity bounds]\label{lem:rank-Q}
For the supplied ceiling $r$, let $Q_{\lambda\mu}^{(r)}$ be the total admissible multiplicity introduced above.
On every physical $(\lambda,\mu)$ branch,
\be
Q_{\lambda\mu}^{(r)}
\le
(k+1)^\chi,
\qquad
\chi
\coloneqq
r_A r_Br-\frac{r_A(r_A+1)}2,
\qquad
\Gamma_k-\chi\log_2(k+1)
\le
J_k^{(r)}
\le
\Gamma_k,
\label{eq:rank-Q}
\ee
where
\be
\Gamma_k
\coloneqq
\log_2 d_\mu-\log_2 d_\nu.
\ee
\end{lemma}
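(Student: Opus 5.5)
The plan is to bound the two sides of $J_k^{(r)}$ separately by elementary comparisons with the single-sector term, and then to control the total multiplicity $Q_{\lambda\mu}^{(r)}$ by a dimension count using the unitary branching rule of \cref{lem:unit}. Throughout, fix a physical branch $(\lambda,\mu,\nu)$, so that $\ell(\lambda)\le r_A$, $\ell(\mu)\le r_B$, $\ell(\nu)\le r_E=r_{AB}\le r$, and $g_{\lambda\mu\nu}>0$ by \cref{thm:schurpurif}.

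\emph{Upper bound on $J_k^{(r)}$.} The volume $\mathcal V_{\lambda\mu}^{(r)}(d_\nu)$ contains the summand $\eta=\nu$, because $\ell(\nu)\le r$ and $d_\nu\le d_\nu$. Hence
\[
\mathcal V_{\lambda\mu}^{(r)}(d_\nu)\ge g_{\lambda\mu\nu}d_\nu\ge d_\nu,
\]
which gives $J_k^{(r)}\le\log_2 d_\mu-\log_2 d_\nu=\Gamma_k$.

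\emph{Lower bound on $J_k^{(r)}$.} Every summand in $\mathcal V_{\lambda\mu}^{(r)}(d_\nu)$ has $d_\eta\le d_\nu$, so
\[
\mathcal V_{\lambda\mu}^{(r)}(d_\nu)\le d_\nu\sum_{\ell(\eta)\le r}g_{\lambda\mu\eta}=d_\nu Q_{\lambda\mu}^{(r)}.
\]
It follows that $J_k^{(r)}\ge\Gamma_k-\log_2 Q_{\lambda\mu}^{(r)}$. Both inequalities in \eqref{eq:rank-Q} therefore follow once we show $Q_{\lambda\mu}^{(r)}\le(k+1)^\chi$.

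\emph{Bounding $Q_{\lambda\mu}^{(r)}$.} I would apply \cref{lem:unit} with the auxiliary local dimensions $d_B'\coloneqq r_B$ and $d_E'\coloneqq r$, i.e.\ to the tensor product $\mathbb C^{r_B}\otimes\mathbb C^{r}$. The lemma needs $\lambda$ to occur in the global decomposition, which holds if $\ell(\lambda)\le r_Br$. This is true because $r_A\le r_Br_E$ (the $A$-marginal of the pure state $\psi_{ABE}$ is supported inside a space of dimension $r_Br_E$) and $r_E\le r$. The branching \eqref{eq:unitary-branching} then gives the dimension identity
\[
D_\lambda^{(r_Br)}=\sum_{\ell(\mu')\le r_B,\ \ell(\eta)\le r}D_{\mu'}^{(r_B)}D_\eta^{(r)}g_{\lambda\mu'\eta}.
\]
Every $D$ appearing on the right is at least $1$, since the lengths fit, so keeping only $\mu'=\mu$ yields $Q_{\lambda\mu}^{(r)}\le D_\lambda^{(r_Br)}$. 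Finally, the last bound of \cref{lem:rank-dimensions} with $a=r_A\ge\ell(\lambda)$ and $m=r_Br\ge a$ gives
\[
D_\lambda^{(r_Br)}\le(k+1)^{r_Ar_Br-r_A(r_A+1)/2}=(k+1)^\chi.
\]

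The only delicate step is this last one: choosing the auxiliary dimensions so that the branching applies, and checking the length conditions $\ell(\lambda)\le r_A\le r_Br$ and $\ell(\mu)\le r_B$. The choice of $r_B$ rather than $d_B$ is what makes the exponent depend on ranks only.
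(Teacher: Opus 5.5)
Your proof is correct and follows the paper's argument essentially step for step. It uses the same sandwich $d_\nu\le\mathcal V_{\lambda\mu}^{(r)}(d_\nu)\le d_\nu Q_{\lambda\mu}^{(r)}$, the same application of \cref{lem:unit} to the auxiliary spaces $\mathbb C^{r_B}\otimes\mathbb C^{r}$ giving $Q_{\lambda\mu}^{(r)}\le D_\lambda^{(r_Br)}$, and the same refined Weyl bound with $a=r_A$, $m=r_Br$, justified by $\ell(\lambda)\le r_A\le r_Br_E\le r_Br$; your explicit check that the term $\eta=\nu$ enters the volume sum because $\ell(\nu)\le r_E\le r$ spells out a point the paper leaves implicit.
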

\begin{proof}
The branching rule of \cref{lem:unit}, applied to the auxiliary spaces
$\mathbb C^{r_B}$ and $\mathbb C^r$, gives
\be
D_\lambda^{(r_Br)}
=
\sum_{\mu',\eta}
g_{\lambda\mu'\eta}
D_{\mu'}^{(r_B)}
D_\eta^{(r)}
\ge
D_\mu^{(r_B)}
Q_{\lambda\mu}^{(r)}
\ge
Q_{\lambda\mu}^{(r)}.
\ee
A physical branch satisfies
$\ell(\lambda)\le r_A\le r_Br_E\le r_Br$.
Hence the refined Weyl bound of \cref{lem:rank-dimensions}, applied with
$a=r_A$ and $m=r_Br$, gives
\be
D_\lambda^{(r_Br)}
\le
(k+1)^{r_A r_Br-r_A(r_A+1)/2}
=
(k+1)^\chi,
\ee
which proves the first claim.
Finally, since a physical triple satisfies $g_{\lambda\mu\nu}\ge1$,
\be
d_\nu
\le
\mathcal V_{\lambda\mu}^{(r)}(d_\nu)
\le
d_\nu Q_{\lambda\mu}^{(r)}.
\ee
Taking logarithms and using
$J_k^{(r)}
=
\log_2 d_\mu-\log_2\mathcal V_{\lambda\mu}^{(r)}(d_\nu)$
gives
\be
\Gamma_k-\log_2 Q_{\lambda\mu}^{(r)}
\le
J_k^{(r)}
\le
\Gamma_k,
\ee
and therefore \eqref{eq:rank-Q}.
\end{proof}

\begin{proposition}[Coherent-information rate]\label{prop:first-order}
Fix a rank ceiling $r$. For every $\rho_{AB}$ with
$\rank\rho_{AB}\le r$, every $\delta>0$, and every rate
\be
0\le R\le I(A\rangle B)_\rho-\delta,
\ee
the universal protocol with yield $K_k=2^{\lfloor kR\rfloor}$ for sufficiently large $k$ has exponentially vanishing infidelity.
\end{proposition}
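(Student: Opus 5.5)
We derive the claim from the master bound \eqref{eq:sw-master} of \cref{thm:universal-distillation}, choosing the slack exponentially large: $L_k\coloneqq 2^{k\delta/4}$, so $(1-p_k)/L_k\le 2^{-k\delta/4}$. It then suffices to show that
\begin{equation}
p_k=\Pr\bigl[J_k^{(r)}<\lfloor kR\rfloor+k\delta/4\bigr]
\end{equation}
decays exponentially in $k$. By \cref{lem:rank-Q}, $J_k^{(r)}\ge\Gamma_k-\chi\log_2(k+1)$ with $\Gamma_k=\log_2 d_\mu-\log_2 d_\nu$, and $\chi$ depends only on the fixed ranks. Since $kR\le kI(A\rangle B)_\rho-k\delta$, for $k$ large enough that $\chi\log_2(k+1)\le k\delta/4$ the bad event is contained in $\{\Gamma_k<kI(A\rangle B)_\rho-k\delta/2\}$.

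Next we control $\Gamma_k$ through the empirical spectra. By \cref{lem:rank-dimensions}, $\log_2 d_\mu\ge kH(\bar\mu)-b_{r_B}\log_2(k+r_B)$ and $\log_2 d_\nu\le kH(\bar\nu)$. Hence, for large $k$, the bad event implies $H(\bar\mu)-H(\bar\nu)<I(A\rangle B)_\rho-\delta/4$. We use $I(A\rangle B)_\rho=H(B)_\rho-H(E)_\psi$, which holds because $H(E)_\psi=H(AB)_\rho$ by purity. The bad event then forces either $H(\bar\mu)<H(\rho_B)-\delta/8$ or $H(\bar\nu)>H(\rho_E)+\delta/8$. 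In the tripartite decomposition of \cref{thm:schurpurif}, the marginal law of $\mu$ is exactly weak Schur sampling of $\rho_B^{\otimes k}$, and that of $\nu$ is weak Schur sampling of $\rho_E^{\otimes k}$. This holds because the local Schur projectors on $B$ (resp.\ $E$) measure $\psi^{\otimes k}$ with statistics $\Tr[\Pi_\mu\rho_B^{\otimes k}]$.

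We then apply \cref{lem:rank-pinsker} to each marginal. By continuity of $f_{r_X}$ at $0$, choose $\xi>0$ with $f_{r_B}(\xi),f_{r_E}(\xi)\le\delta/8$. Each deviation event then has probability at most $(k+1)^{b_{r_X}}e^{-k\xi^2/2}$ with $r_X\le r$; note $r_E=\rank\rho_{AB}\le r$. A union bound gives $p_k\le 2(k+1)^{b_r}e^{-k\xi^2/2}$, which is exponentially small. Combining this with the $2^{-k\delta/4}$ slack term yields exponentially vanishing infidelity. The case $R=0$ (so $K=1$) is trivial.

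The main obstacle is bookkeeping rather than a deep step: we must verify that the marginal distribution of $\nu$ really is Schur sampling of $\rho_E$ with rank $\le r$, so that no dependence on $d_E$ enters. We must also make sure the polynomial overheads from \cref{lem:rank-dimensions,lem:rank-Q} are absorbed into the $\delta$-margins uniformly for large $k$. The threshold on $k$ may depend on the ranks and $\delta$, which the statement permits.
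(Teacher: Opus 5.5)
Your proof is correct and follows the paper's argument essentially step for step. You put an exponentially large slack $L$ in the master bound, use \cref{lem:rank-Q} and \cref{lem:rank-dimensions} to reduce $J_k^{(r)}$ to $k[H(\bar\mu)-H(\bar\nu)]$ up to logarithmic terms, and finish with \cref{lem:rank-pinsker} and a union bound over the weak Schur sampling of $\rho_B$ and $\rho_E$.

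One small slip: $r_B=\rank\rho_B$ need not be at most the global ceiling $r$ (for example, a pure entangled $\rho_{AB}$ with $r=1$). The union bound should therefore read $[(k+1)^{b_{r_B}}+(k+1)^{b_{r_E}}]e^{-k\xi^2/2}$, as in the paper, and this still decays exponentially because $r_B\le d_B$ is fixed.
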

\begin{proof}
Let $r_B=\rank\rho_B$ and $r_E=\rank\rho_{AB}$, and choose
$0<\xi\le1$ such that
$f_{r_B}(\xi)+f_{r_E}(\xi)\le\delta/4$ and set $L=2^{\delta k/2}$.
On the event
\be
\|\bar\mu-p_B\|_1\le\xi,
\qquad
\|\bar\nu-p_E\|_1\le\xi,
\ee
\cref{lem:rank-dimensions,lem:rank-pinsker,lem:rank-Q} give
\be
\begin{aligned}
J_k^{(r)}
&\ge
k\bigl[H(\bar\mu)-H(\bar\nu)\bigr]
-b_{r_B}\log_2(k+r_B)-\chi\log_2(k+1)
\\
&\ge
kI(A\rangle B)_\rho-\frac{\delta k}{2}
\end{aligned}
\ee
for all sufficiently large $k$, since the logarithmic terms are $o(k)$.
Moreover,
\be
\log_2 K_k+\log_2 L
\le
kR+\frac{\delta k}{2}
\le
kI(A\rangle B)_\rho-\frac{\delta k}{2}.
\ee
Hence the event $J_k^{(r)}<\log_2 K_k+\log_2 L$ can occur only if one of the
two Schur spectra deviates by more than $\xi$. By \cref{lem:rank-pinsker}
and the union bound,
\be
p_k
\le
\left[
(k+1)^{b_{r_B}}+(k+1)^{b_{r_E}}
\right]e^{-k\xi^2/2}
\le
e^{-c_1k}
\ee
for all sufficiently large $k$ and some $c_1>0$. Applying
\cref{thm:universal-distillation},
\be
\varepsilon_k
\le
p_k+\frac{1-p_k}{L}
\le
e^{-c_1k}+2^{-\delta k/2}
\le
e^{-ck}
\ee
for some $c>0$ and all sufficiently large $k$.
For each fixed input, Markov's inequality also gives
$\Pr_s[\varepsilon_k(\rho_{AB};s)>e^{-ck/2}]\le e^{-ck/2}$.
\end{proof}

\subsection{Second-order achievability}
Setting $I\coloneqq H(B)_\rho-H(AB)_\rho$, the coherent information variance is
\be
 V=V(A\rangle B)_\rho\coloneqq \Tr\rho_{AB}\bigl(\log_2\rho_{AB}-\log_2(\id_A\otimes\rho_B)-I\id_{AB}\bigr)^2.
\label{eq:information-variance}
\ee
Logarithms of $\rho_{AB}$ act on its support and are extended by zero on its kernel; $\log_2\rho_B$ is restricted to its support. Let $\Phi$ denote the standard normal distribution function.

\begin{lemma}[Schur and classical information spectra]
\label{lem:schur-classical}
Let $\psi_{ABE}$ purify $\rho_{AB}$, and let $(\mu,\nu)$ be the
joint weak Schur outcomes on $B^kE^k$. There is a coupling such that 
\begin{align}
\Gamma_k
   =\sum_{t=1}^k Z_t+\Delta_{E,k}-\Delta_{B,k},\qquad
\Delta_{X,k}\geq0,\quad
\mathbb E2^{\Delta_{X,k}}\leq(k+1)^{b_{r_X}},
\label{eq:sw-classical-coupling}
\end{align}
where $Z_t$ are independent copies of a finite random variable $Z$ with
$\mathbb EZ=I(A\rangle B)_\rho$ and
$\operatorname{Var}Z=V(A\rangle B)_\rho$. Explicitly, writing
$\rho_{AB}=\sum_j e_j\ketbra*{\varphi_j}{\varphi_j}$ and
$\rho_B=\sum_i b_i\ketbra{b_i}{b_i}$ on their supports, its law is
\begin{align}
P(i,j)\coloneqq e_j\bra*{\varphi_j}
  (\id_A\otimes\ketbra{b_i}{b_i})\ket*{\varphi_j},\qquad
Z(i,j)\coloneqq\log_2 e_j-\log_2 b_i.
\label{eq:sw-classical-law}
\end{align}
Consequently, for $\widehat Z_k\coloneqq \sum_t Z_t$ and all $t\in\mathbb R$ and $u_B,u_E>0$,
\begin{align}
\Pr[\widehat Z_k<t-u_E]-(k+1)^{b_{r_E}}2^{-u_E}
\leq\Pr[\Gamma_k<t]
\leq\Pr[\widehat Z_k<t+u_B]+(k+1)^{b_{r_B}}2^{-u_B}.
\label{eq:sw-classical-tails}
\end{align}
\end{lemma}

\begin{proof}
All operators below are restricted to the marginal supports. On
$(\supp \rho_X)^{\otimes k}$, Schur-Weyl duality gives
$\rho_X^{\otimes k}=\bigoplus_\alpha \id_{\mathcal P_\alpha}\otimes q_\alpha(\rho_X)$.
Since $d_\alpha\Tr q_\alpha(\rho_X)\leq1$, the
operator
\begin{align}
\widehat{\Delta}_{X,k}\coloneqq -\log_2 \rho_X^{\otimes k}
  -\sum_\alpha\log_2 d_\alpha\,\Pi_\alpha^X
\end{align}
is positive and commutes with $\rho_X^{\otimes k}$. Moreover, using \cref{lem:rank-dimensions},
\begin{align}
\Tr \rho_X^{\otimes k}2^{\widehat{\Delta}_{X,k}}
 =\sum_{\alpha:\ell(\alpha)\leq r_X}D_{\alpha}^{(r_X)}
 \leq(k+1)^{r_X-1+r_X(r_X-1)/2}.
\end{align}
The two Schur observables, the two $\widehat{\Delta}_{X,k}$, and
$\log_2 \rho_E^{\otimes k}-\log_2 \rho_B^{\otimes k}$ commute. Together with $\Pi_\lambda^A$, their joint
measurement on $\psi_{ABE}^{\otimes k}$ therefore defines a classical
probability distribution. Let $\Delta_{X,k}$ denote the outcome
of measuring $\widehat{\Delta}_{X,k}$. The corresponding operator identity
then gives \cref{eq:sw-classical-coupling} for these classical outcomes.
In the Schmidt basis
$\ket*\psi=\sum_j\sqrt{e_j}\ket*{\varphi_j}_{AB}\ket j_E$, the last observable
has the distribution of an independent sum with law \cref{eq:sw-classical-law}. Furthermore, the first two moments of the classical random variable $Z$ are
\begin{align}
\mathbb EZ=\Tr \rho_{AB}
 (\log_2 \rho_{AB}-\log_2 (\id_A\otimes\rho_B))=I,\qquad
\mathbb EZ^2=\Tr \rho_{AB}
 (\log_2 \rho_{AB}-\log_2 (\id_A\otimes\rho_B))^2.
\end{align}
Finally, Markov's inequality applied to $2^{\Delta_{X,k}}$ and
$\widehat Z_k-\Delta_{B,k}\leq\Gamma_k\leq \widehat Z_k+\Delta_{E,k}$ give \cref{eq:sw-classical-tails}.
\end{proof}
The coupling shows that the Schur coding statistic is governed by
the classical information sum $\widehat Z_k=\sum_{t=1}^k Z_t$.
Indeed, replacing the sampled environment dimension $d_\nu$ by
the cumulative error volume costs at most
$\log_2 Q_{\lambda\mu}^{(r)}\le\chi\log_2(k+1)$ bits. Hence
\begin{align}
\widehat Z_k-\Delta_{B,k}-\chi\log_2(k+1)
\le J_k^{(r)}
\le \widehat Z_k+\Delta_{E,k}.
\label{eq:advanced-sw-classical-coupling}
\end{align}
At fixed ranks, the defect bounds make these corrections
$O(\log_2 k)$ with high probability. Thus the classical sum $\widehat Z_k$
determines both the leading rate and the Gaussian fluctuations
of the Schur coding statistic.
Although $\widehat Z_k$ is a classical random variable, it also describes
the information spectrum associated with quantum hypothesis testing
between $\rho_{AB}^{\otimes k}$ and
$\id_{A^k}\otimes\rho_B^{\otimes k}$.
More precisely, its lower quantiles bound the corresponding
quantum hypothesis-testing divergence, up to error-parameter
shifts and logarithmic corrections
\cite[Theorem~14]{tomamichel2013hierarchy}.
The coupling above connects this testing interpretation to
the Schur coding statistic $J_k^{(r)}$.
The correction terms account for information that is present in
$\widehat Z_k$ but unavailable to the Schur code. For example, one copy of
a maximally entangled state has $\widehat Z_1=\log_2 d$, yet its permutation
registers are one-dimensional and offer no room for a nontrivial
code: $J_1^{(r)}=0$. Here $\Delta_{B,1}=\log_2 d$ and $\Delta_{E,1}=0$,
so the defect exactly accounts for this difference.

\begin{proposition}[Second-order achievability]
\label{prop:second-order}
Suppose $I>0$ and fix a supplied rank ceiling $r\ge r_E$. For $V>0$ and every fixed $0<\varepsilon<1$, a member of the universal family achieves
entanglement infidelity at most $\varepsilon$ and
\begin{align}
\log_2 K_k\geq kI+\sqrt{kV}\,\Phi^{-1}(\varepsilon)
-O(\log_2 k).
\label{eq:sw-second-order-rate}
\end{align}
The expansion specifies achievable target sizes; neither $I$ nor $V$ is used by the protocol once $K_k$ and its coding margin are supplied.
\end{proposition}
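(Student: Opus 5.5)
The plan is to feed the coupling of \cref{lem:schur-classical}, in its refined form \eqref{eq:advanced-sw-classical-coupling}, into the master bound \eqref{eq:sw-master} of \cref{thm:universal-distillation}, and then to control the classical sum $\widehat Z_k$ with Berry--Esseen.

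First I fix the slack. I take $L=L_k\coloneqq k$, or any polynomial choice, so that $(1-p_k)/L\le 1/k$. The remaining task is then to show that $p_k=\Pr[J_k^{(r)}<\log_2 K_k+\log_2 L]$ is at most $\varepsilon-1/k$ for the claimed $K_k$.

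By the lower bound in \eqref{eq:advanced-sw-classical-coupling}, the event $J_k^{(r)}<x$ implies $\widehat Z_k-\Delta_{B,k}<x+\chi\log_2(k+1)$. I then split on the event $\Delta_{B,k}\le u_B$, taking $u_B\coloneqq(b_{r_B}+1)\log_2(k+1)$. Markov's inequality applied to $2^{\Delta_{B,k}}$ bounds the complementary probability by $(k+1)^{b_{r_B}}2^{-u_B}\le 1/(k+1)$, exactly as in \eqref{eq:sw-classical-tails}. This gives
\be
p_k\le \Pr\bigl[\widehat Z_k<x+\chi\log_2(k+1)+u_B\bigr]+\frac1{k+1}.
\ee

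Next I apply Berry--Esseen to the i.i.d.\ sum $\widehat Z_k$. The summands $Z_t$ are finite-valued with mean $I$ and variance $V>0$, so the third absolute moment is finite and depends only on $\rho_{AB}$. This yields $\Pr[\widehat Z_k<kI+\sqrt{kV}\,y]\le\Phi(y)+c_{\rho}/\sqrt k$. I then choose $y_k\coloneqq\Phi^{-1}(\varepsilon-c_\rho/\sqrt k-3/k)$. A Taylor expansion of $\Phi^{-1}$ at the interior point $\varepsilon$ gives $\sqrt{kV}\,y_k=\sqrt{kV}\,\Phi^{-1}(\varepsilon)-O(1)$ for large $k$. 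I set
\be
\log_2 K_k\coloneqq\Bigl\lfloor kI+\sqrt{kV}\,y_k-\chi\log_2(k+1)-u_B-\log_2 L\Bigr\rfloor ,
\ee
which is nonnegative for large $k$ because $I>0$.

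Combining the three estimates gives $\varepsilon_k\le \varepsilon-c_\rho/\sqrt k-3/k+1/(k+1)+1/k\le\varepsilon$. The rate then satisfies \eqref{eq:sw-second-order-rate}, since all subtracted terms are $O(\log_2 k)$ with constants depending on $r_A,r_B,r$ and $\rho$. Universality holds because the protocol of \cref{thm:universal-distillation} uses only $(d_A,d_B,k,K_k,r,L)$. The quantities $I$ and $V$ enter only through the choice of target $K_k$, as the statement remarks.

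The main obstacle is the bookkeeping of the lower tail. The Berry--Esseen error $O(1/\sqrt k)$ must be absorbed by shifting $\varepsilon$ inside $\Phi^{-1}$, and this costs only $O(1)$ in $\log_2K_k$ precisely because $V>0$ and $\varepsilon\in(0,1)$ keep $\Phi^{-1}$ locally Lipschitz. All other losses, namely the multiplicity bound $\chi$ from \cref{lem:rank-Q}, the defect $\Delta_{B,k}$, and the slack $L$, are logarithmic, so they stay inside the $O(\log_2 k)$ term.
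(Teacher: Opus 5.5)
Your proposal is correct and follows the paper's proof: the master bound \eqref{eq:sw-master} with a polynomial slack $L$, the coupling \eqref{eq:advanced-sw-classical-coupling} with its $\chi\log_2(k+1)$ multiplicity loss, Markov's inequality on $2^{\Delta_{B,k}}$, and Berry--Esseen for $\widehat Z_k$. The one technical difference is how the Berry--Esseen remainder is absorbed: you shift the quantile inside $\Phi^{-1}$ at $O(1)$ cost, whereas the paper subtracts an extra $b\log_2(k+1)$ margin and uses the mean-value theorem so that the $\log k/\sqrt k$ drop in $\Phi$ dominates the $O(k^{-1/2})$ remainder. Also, in your final chain the $+c_\rho/\sqrt k$ from Berry--Esseen should cancel your $-c_\rho/\sqrt k$ rather than be omitted, giving $\varepsilon_k\le\varepsilon-3/k+1/(k+1)+1/k\le\varepsilon$, which still suffices.
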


\begin{proof}
By \cref{lem:rank-Q}, we have
$Q_{\lambda\mu}^{(r)}\leq(k+1)^\chi$. Choose
$L_k=(k+1)^2$. The coding term $1/L_k$ in
\cref{eq:sw-master} is then $(k+1)^{-2}$.
By \cref{lem:schur-classical}, the event
$\Delta_{B,k}>(b_{r_B}+2)\log_2(k+1)$ has probability at most $(k+1)^{-2}$.
Set $z=\Phi^{-1}(\varepsilon)$ and, for any
$c>\chi+b_{r_B}+4$, choose the code size
\begin{align}
K_k
=
2^{\left\lfloor
kI+\sqrt{kV}\,z-c\log_2(k+1)
\right\rfloor},
\end{align}
which is at least one for all sufficiently large $k$, since $I>0$.
With $b=c-(\chi+b_{r_B}+4)>0$, the master bound and
\cref{eq:advanced-sw-classical-coupling} give
\begin{align}
\varepsilon_k
&\leq
\Pr\!\left[
\widehat Z_k<kI+\sqrt{kV}\,z-b\log_2(k+1)
\right]
+2(k+1)^{-2}.
\label{eq:second-order-classical-reduction}
\end{align}
Indeed, the two terms $(k+1)^{-2}$ come respectively from
$\Pr[\Delta_{B,k}>(b_{r_B}+2)\log_2(k+1)]$ and from the coding term
$L_k^{-1}$.
To estimate the remaining probability, recall that
$\widehat Z_k=\sum_{t=1}^k Z_t$, where the $Z_t$ are i.i.d. with
$\mathbb EZ=I$ and $\Var Z=V>0$. Writing
$\rho_3\coloneqq \mathbb E|Z-I|^3<\infty$, the Berry--Esseen theorem gives,
for a universal constant $C_{\rm BE}$,
\begin{align}
\sup_{x\in\mathbb R}
\left|
\Pr\!\left[
\frac{\widehat Z_k-kI}{\sqrt{kV}}\leq x
\right]
-\Phi(x)
\right|
\leq
\frac{C_{\rm BE}\rho_3}{V^{3/2}\sqrt{k}} .
\label{eq:berry-esseen}
\end{align}
The third absolute moment is finite because $Z$ has finite support. Taking
\begin{align}
\delta_k
\coloneqq
\frac{b\log_2(k+1)}{\sqrt{kV}},
\end{align}
we obtain from \eqref{eq:second-order-classical-reduction}
\begin{align}
\varepsilon_k
&\leq
\Phi(z-\delta_k)
+\frac{C_{\rm BE}\rho_3}{V^{3/2}\sqrt{k}}
+2(k+1)^{-2}.
\label{eq:second-order-BE}
\end{align}
Now $z=\Phi^{-1}(\varepsilon)$, so $\Phi(z)=\varepsilon$, while
$\delta_k\to0$. Let $\phi=\Phi'$ be the standard normal density. By the
mean-value theorem, for some $\xi_k\in[z-\delta_k,z]$,
\begin{align}
\Phi(z-\delta_k)
&=
\Phi(z)-\phi(\xi_k)\delta_k
\nonumber\\
&=
\varepsilon-\phi(\xi_k)
\frac{b\log_2(k+1)}{\sqrt{kV}} .
\end{align}
Since $\phi(z)>0$ and $\xi_k\to z$, for all sufficiently large $k$,
$\phi(\xi_k)\geq\phi(z)/2$. Hence
\begin{align}
\varepsilon_k
\leq
\varepsilon
-\frac{b\phi(z)}{2\sqrt V}
\frac{\log_2(k+1)}{\sqrt{k}}
+\frac{C_{\rm BE}\rho_3}{V^{3/2}\sqrt{k}}
+2(k+1)^{-2}.
\end{align}
The negative term is of order $\log_2 k/\sqrt{k}$, whereas the
Berry--Esseen remainder is $O(k^{-1/2})$ and the remaining term is
$O(k^{-2})$. Therefore the negative term dominates both remainders for
all sufficiently large $k$, and consequently $\varepsilon_k\leq\varepsilon$.
\end{proof}

\subsection{Perfect local recoverability}
\begin{proposition}[Perfect local recoverability]
\label{prop:perfect-local-recoverability}
Retain each party's Schur label and unitary register throughout the
protocol of \cref{thm:universal-distillation}. There are state-independent local channels $\mathcal S_A$ and
$\mathcal S_B$ such that, for the complete local outputs,
$\mathcal S_X(\omega_X^{\rm out})=\rho_X^{\otimes k}$, $X=A,B$.
\end{proposition}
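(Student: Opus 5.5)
The key observation is that, in Schur coordinates, each party's complete output consists of the Schur label, the untouched unitary register, and a permutation register. The permutation register has been acted on only by operations whose averaged effect on the local marginal is unital within the recorded sector. I would define the recovery channel
\be
\mathcal S_X(\omega)\coloneqq (U_{\mathrm{Schur}}^X)^\dagger\Bigl(\sum_\alpha \ketbra{\alpha}{\alpha}\otimes \pi_{\mathcal P_\alpha}\otimes \Tr_{\mathcal P,\mathrm{cl}}\bigl[(\bra\alpha\otimes\id)\,\omega\,(\ket\alpha\otimes\id)\bigr]\Bigr)U_{\mathrm{Schur}}^X .
\ee
In words, $\mathcal S_X$ reads the retained label $\alpha$, discards the permutation register together with any classical message or flag registers, prepares $\pi_{\mathcal P_\alpha}$, and inverts the Schur transform. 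This channel is manifestly state-independent and CPTP.

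The first step is to compute the marginal of the input in Schur coordinates. For $X=A,B$, the decomposition recalled in \cref{app:schur} gives
\[
U_{\mathrm{Schur}}^X\rho_X^{\otimes k}(U_{\mathrm{Schur}}^X)^\dagger=\sum_\alpha p_\alpha^X\ketbra{\alpha}{\alpha}\otimes\pi_{\mathcal P_\alpha}\otimes\rho^X_{\mathcal Q_\alpha}.
\]
It therefore suffices to show that the output marginal $\omega_X^{\mathrm{out}}$, after discarding the permutation and classical registers, equals $\sum_\alpha p_\alpha^X\ketbra{\alpha}{\alpha}\otimes\rho^X_{\mathcal Q_\alpha}$.

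The second step tracks what the protocol does to the marginal. Alice's operations are the Schur measurement, which is block diagonal and leaves the label distribution $p_\lambda^A$ and the conditional $\mathcal Q$ state unchanged, followed by an instrument acting only on $\mathcal P_\lambda^A$. Because this instrument is a trace-preserving operation on the $\mathcal P$ factor (the Kraus operators $C_t^T$ together with the remainder map), tracing out the $\mathcal P$ register together with the recorded outcome reproduces the partial trace of the pre-measurement state. Hence the marginal on (label, $\mathcal Q_\lambda^A$) is untouched. Bob's operations are likewise his Schur measurement followed by a polar recovery $\mathcal R_C$, and possibly a fixed-state output in rejected branches, all acting on $\mathcal P_\mu^B$ only and conditioned on classical data from Alice. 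Conditioned on any message, these are channels on the $\mathcal P$ register tensored with the identity on $\mathcal Q_\mu^B$. Averaging over messages then gives Bob's reduced (label, $\mathcal Q$) state equal to that of $\rho_B^{\otimes k}$, because the joint state of the message and Bob's unitary register is just a decomposition of his original marginal: summing over the messages recovers Bob's pre-protocol reduced state. The seed average is harmless, since the same holds for each fixed seed.

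The main obstacle is making this last point rigorous for Bob. The message is correlated with Bob's register through the entanglement, so I will argue at the level of the full one-way LOCC map rather than branch by branch. Alice's total instrument, summed over outcomes, is trace preserving on $A$, so it does not change the $B$ marginal (no-signalling). Bob's subsequent operations, summed over outcomes, have the form $\sum_m \mathcal B_m\otimes\idmap_{\mathcal Q}$ applied to the $m$-conditioned states. After tracing $\mathcal P_\mu^B$ and $m$, this gives $\sum_m \Tr_{\mathcal P}[\sigma_m]=\Tr_{\mathcal P}[\sum_m\sigma_m]$, which is the untouched marginal. Block diagonality in $\mu$ ensures that the label is preserved. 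Combining this with the first step and applying $\mathcal S_X$ yields exactly $\rho_X^{\otimes k}$.
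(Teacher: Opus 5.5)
Your proposal is correct and follows the paper's proof. It uses the same reconstruction channel: discard everything except the retained label and unitary register, adjoin $\pi_{\mathcal P_\xi}$ controlled on the label, and invert the Schur transform. The label/unitary-register marginal is preserved, as in the paper, by trace preservation of each local operation together with no-signalling; your message-by-message bookkeeping for Bob just spells out what the paper states in one line.
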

\begin{proof}
For either party the Schur decomposition is
\be
U_{\rm Schur}^X\rho_X^{\otimes k}(U_{\rm Schur}^X)^\dagger
=\bigoplus_\xi p_\xi^X\pi_{\mathcal P_\xi}
\otimes\rho_{\mathcal Q_\xi}^X.
\ee
All subsequent local operations act only on permutation registers and
classical copies of the labels. Trace preservation and no-signalling
therefore preserve the unconditional marginal
$\bigoplus_\xi p_\xi^X\rho_{\mathcal Q_\xi}^X$ of the retained label
and unitary register, including after averaging all communicated
outcomes. The reconstruction channel discards the other local outputs,
adjoins $\pi_{\mathcal P_\xi}$ controlled by $\xi$, and applies the
inverse Schur transform. This gives the displayed input marginal
exactly. Only the retained Schur label and unitary register are needed; the permutation outputs and decoding environments can be discarded. This reconstructs the local states; it makes no assertion about
recovering their original joint correlations.
\end{proof}

\section{Universal quantum communication}\label{app:communication}
A similar construction to the one derived in \cref{thm:universal-distillation} transmits quantum information through an unknown memoryless channel, whose $k$ independent uses act as $\mathcal N^{\otimes k}$. 
We first exhibit a random ensemble whose law and code pairs are independent of that channel. A shared classical seed sampled before transmission works simultaneously for all channels satisfying $R\le I_c-\delta$ with high probability. Fixing a successful seed then yields a deterministic universal code.
Here the input $\xi$ is a state chosen in advance as part of the code design; it is not an unknown state to be estimated. 
For a channel $\mathcal N:\mathcal L(A')\to\mathcal L(B)$, fix $\xi\in\mathcal D(A')$ and a unit-vector purification $\ket{\phi_\xi}_{RA'}\in R\otimes A'$, with $R\simeq A'$, and let
$V_{\mathcal N}:A'\to BE$ be a Stinespring isometry. Define
\begin{align}
\ket{\psi_{\mathcal N}}_{RBE}
\coloneqq
(\id_R\otimes V_{\mathcal N})\ket{\phi_\xi}_{RA'}.
\end{align}
Write $\phi_\xi=\ketbra{\phi_\xi}{\phi_\xi}$, $\psi_{\mathcal N}=\ketbra{\psi_{\mathcal N}}{\psi_{\mathcal N}}\in\mathcal D(RBE)$, and $\omega_{\mathcal N}\coloneqq \Tr_E\psi_{\mathcal N}=(\idmap_R\otimes\mathcal N)(\phi_\xi)\in\mathcal D(RB)$. Then
\begin{align}
I_c(\xi,\mathcal N)
=
H(B)_{\psi_{\mathcal N}}
-
H(E)_{\psi_{\mathcal N}},
\end{align}
using purity of $\psi_{\mathcal N}$.
We use entanglement transmission as the communication criterion. For a
$K$-dimensional message space and an encoder--decoder pair
$(\mathcal E,\mathcal D)$, let
\begin{align}
\Lambda
\coloneqq
\mathcal D\circ\mathcal N^{\otimes k}\circ\mathcal E.
\end{align}
The rate is $\log_2 K/k$ qubits per use and the error is $1-F_e(\pi_K,\Lambda)$, with entanglement fidelity defined in \eqref{eq:entanglement-fidelity}. The limit $F_e(\pi_K,\Lambda)\to1$ means that a $K$-dimensional quantum system is transmitted while preserving its entanglement with an inaccessible reference. This criterion is asymptotically equivalent to the usual
quantum-state transmission criterion.
The advantage of this formulation is that
$(\idmap\otimes\Lambda)(\Phi_K)$ is the normalized Choi state of the
effective channel. Hence the maximally entangled-state recovery bound from
the distillation construction in \cref{thm:universal-distillation} can be reused directly, once the
source-side code is realized as a physical encoder on $A'$.
The ensemble bound below controls the seed-averaged error for each
channel. A diamond-norm net will turn it into simultaneous
high-probability success over the promised channel family.

\begin{lemma}[A channel-independent code ensemble]\label{lem:universal-code-ensemble}
Fix $\xi$, an integer $r\ge1$, $L\ge1$, and $1\le K\le d_{A'}^k$. There is a finite distribution of pairs $(\mathcal E_s,\mathcal D_s)$, with isometric encoders into $(A')^{\otimes k}$ and CPTP decoders from $B^{\otimes k}$, depending only on $(\xi,d_{A'},d_B,k,K,r,L)$, such that every channel with $\rank\omega_{\mathcal N}\le r$ obeys
\be
\mathbb E_s\varepsilon_k(\mathcal N;s)
\le p_k(\mathcal N)+\frac{1-p_k(\mathcal N)}{L},
\qquad p_k(\mathcal N)=\Pr_{\psi_{\mathcal N}}[J_k^{(r)}<\log_2 K+\log_2 L],
\label{eq:sw-transmission-bound}
\ee
where $\varepsilon_k(\mathcal N;s)\coloneqq 1-F_e(\pi_K,\mathcal D_s\circ\mathcal N^{\otimes k}\circ\mathcal E_s)$. Both the distribution and every code pair are independent of the unknown channel. In particular, $r=d_{A'}d_B$ removes the need for any rank promise.
\end{lemma}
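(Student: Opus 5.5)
The plan is to reduce the communication problem to the distillation analysis of \cref{thm:universal-distillation}, with the channel output playing the role of the bipartite source. Take the encoder $\mathcal E_{\lambda,C}$ of \eqref{eq:main-channel-schur-encoder}, where $\lambda$ is sampled with probability $p_\lambda$ from weak Schur sampling of $\xi^{\otimes k}$, and $C=UC_t$ with $U$ Haar or a $2$-design on $\mathcal P_\lambda$ and $C_t$ a fixed block. Because $U_{\rm Schur}\xi^{\otimes k}U_{\rm Schur}^\dagger=\bigoplus_\lambda p_\lambda\pi_{\mathcal P_\lambda}\otimes\xi_{\mathcal Q_\lambda}$, the purification $\ket{\phi_\xi}^{\otimes k}$ decomposes as in \cref{lem:pureschurweyl}. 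Replacing the maximally entangled $\mathcal P_\lambda$ factor by $C$ acting on $\ket{\Phi_K}$ is, by the transpose trick \eqref{eq:code-transpose-trick}, exactly the conditional state that Alice's block measurement $C_t^T$ would produce in the distillation protocol applied to $\psi_{\mathcal N}^{\otimes k}$, with $R$ playing Alice's role.

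The key steps are as follows.

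\textbf{Step 1.} Apply \cref{thm:schurpurif} to $\ket{\psi_{\mathcal N}}_{RBE}^{\otimes k}$. Conditioned on $\lambda$ and Bob's $\mu$, the logical state is $(\id\otimes W_C)(\Phi_K\otimes T)(\id\otimes W_C^\dagger)$, with $T=\bigoplus_\nu S^{(\nu)}\otimes\id_{d_\nu}/d_\nu$ as in \eqref{eq:physical-weighted-code-state}. Since the $\mathcal Q$ register on $R$ is untouched by Alice, it can be traced out, and the same Gram structure appears.

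\textbf{Step 2.} Show that the joint law of $(\lambda,\mu,\nu)$ equals the law under $\psi_{\mathcal N}^{\otimes k}$, independently of $C$. This follows from the weight computation in the proof of \cref{thm:universal-distillation}: the trace identities \eqref{eq:three-error-identities} make each sector weight $\Tr S^{(\nu)}$ independent of $C_t$. This yields the output-label flatness.

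\textbf{Step 3.} Bob measures $\mu$, and the seed (shared with Alice) supplies $(\lambda,U,t)$. Bob then applies the polar decoder $\mathcal R_C$ for the reference channel with accepted set $\mathcal T^{(K)}_{\lambda,\mu}$. \Cref{lem:normalized-polar}, applied uniformly in $T$, bounds the branch error by $p_{\rm bad}+p_{\rm good}KN/D\le p_{\rm bad}+p_{\rm good}/L$. Averaging over branches gives \eqref{eq:sw-transmission-bound}.

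There is no remainder-subspace loss here, because Alice picks a complete block directly. If $d_\lambda<K$ or $\mathcal T=\varnothing$, she encodes arbitrarily, which counts as failure; this lies in the bad event, since $\mathcal T\neq\varnothing$ forces $d_\lambda\ge KL$. To make the distribution finite, replace Haar by a finite exact unitary $2$-design, which suffices because only second moments enter. The ensemble then depends only on $(\xi,d_{A'},d_B,k,K,r,L)$.

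The main obstacle I expect is Step 1, namely justifying that the physical state obtained by \emph{inserting} $CXC^\dagger$ coincides with the normalized post-measurement state of the distillation picture. In particular, the normalization $K/d_\lambda$ and the sampling weights $p_\lambda$ must combine correctly, so that the Gram matrices $S^{(\nu)}$, conditional on $(\lambda,\mu)$, are exactly those of $\psi_{\mathcal N}^{\otimes k}$. I would verify this by writing the encoded Choi vector as $\sqrt{p_\lambda}\,\ket\lambda\otimes(\id\otimes C)\ket{\Phi_K}\otimes\ket{\phi_\lambda}$, comparing it with $\sqrt{d_\lambda/K}$ times the transpose-trick expression, and then pushing $V_{\mathcal N}^{\otimes k}$ through the Schur transforms via \cref{thm:local-global-schur}.
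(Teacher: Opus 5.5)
\paragraph{Overall.} Your reduction is sound and follows the paper's route: the Schur encoder $\mathcal E_{\lambda,C}$, the transpose-trick identification of the encoded Choi state with the block-conditioned distillation state of $\psi_{\mathcal N}^{\otimes k}$ (with $R$ in Alice's role), the polar decoder of the weighted reference channel, and the branch bound $p_{\rm bad}+p_{\rm good}/L$ with no remainder-subspace loss.

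\paragraph{Output-label flatness.} You argue through the distillation bookkeeping. The precise fact you need is that each $(\lambda,\mu)$-conditioned channel $\mathcal N_{\lambda\to\mu}$ is trace preserving, which is the first identity in \eqref{eq:three-error-identities}. This gives every code block conditional probability $K/d_\lambda$ in every $\mu$ branch. The paper obtains the same fact as $\mathcal M_{\lambda\to\mu}^\dagger(\id)=q_{\mu|\lambda}\id_{\mathcal P_\lambda}$, using permutation covariance and Schur's lemma. The two arguments are equivalent.

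\paragraph{The gap: isometric encoders.} The lemma asserts \emph{isometric} encoders into $(A')^{\otimes k}$, and yours are not isometric. $\mathcal E_{\lambda,C}$ appends the generally mixed state $\xi_{\mathcal Q_\lambda}\propto q_\lambda(\xi)$ on the unitary register. It is therefore CPTP but not an isometric channel, and purifying it would require a register outside $(A')^{\otimes k}$. The missing step is the paper's refinement:
\begin{itemize}
\item Diagonalize $\xi_{\mathcal Q_\lambda}=\sum_a t_{\lambda,a}\ketbra*{v_{\lambda,a}}{v_{\lambda,a}}$.
\item Enlarge the seed to $(\lambda,C,a)$ with $a\sim t_{\lambda,a}$.
\item Encode by the isometry $\ket x\mapsto(U_{\rm Schur}^{A'})^\dagger(\ket\lambda\otimes C\ket x\otimes\ket*{v_{\lambda,a}})$.
\item Keep the decoder independent of $a$.
\end{itemize}
Since $F_e(\pi_K,\mathcal D\circ\mathcal N^{\otimes k}\circ\mathcal E)$ is linear in $\mathcal E$, the average over $a$ reproduces the mixed-encoder fidelity, so your bound transfers.

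This averaging must be done \emph{after} your analysis, not inside it. For a single $a$, the output-label law and the joint law of $(\mu,\nu)$ generally differ from those of $\psi_{\mathcal N}^{\otimes k}$; only the $a$-mixture matches.

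\paragraph{Smaller points.} On branches with $d_\lambda<K$ you also need a fixed \emph{isometric} encoder. It exists precisely because of the hypothesis $K\le d_{A'}^k$, which your proposal never uses.

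There is also a wording slip in your handling of failures. Alice cannot branch on $\mathcal T=\varnothing$, because $\mathcal T$ depends on Bob's outcome $\mu$. She branches only on $d_\lambda<K$; when $\mathcal T=\varnothing$, it is Bob who outputs a fixed state. Your accounting of both cases as rejected events is otherwise correct, since $\mathcal T\neq\varnothing$ forces $d_\lambda\ge KL$.
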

\begin{proof}
Fix an
admissible channel $\mathcal N$ for the analysis, and abbreviate $p_k=p_k(\mathcal N)$.
The input state $\xi$ is fixed as part of the code design.  Its $k$-fold
tensor power has the Schur decomposition
\begin{align}
U_{\rm Schur}^{A'}\,
\xi^{\otimes k}\,
(U_{\rm Schur}^{A'})^\dagger
=
\sum_\lambda
p_\lambda\,
\ketbra{\lambda}{\lambda}
\otimes
\pi_{\mathcal P_\lambda}
\otimes
\xi_{\mathcal Q_\lambda},
\label{eq:communication-mixed-schur}
\end{align}
where $\xi_{\mathcal Q_\lambda}=q_\lambda(\xi)/\Tr q_\lambda(\xi)\in\mathcal D(\mathcal Q_\lambda^{A'})$ for $p_\lambda>0$; zero-probability sectors are omitted.
Thus, conditioned on the classical block label $\lambda$, the unitary
register is in the fixed known state $\xi_{\mathcal Q_\lambda}$ and the
permutation register is maximally mixed.  For communication we replace
this maximally mixed permutation register by the encoded logical input.
Given an isometry $C:\mathbb C^K\longrightarrow\mathcal P_\lambda$, Alice encodes $X\in\mathcal D(\mathbb C^K)$ by preparing, in Schur coordinates,
\begin{align}
\ketbra{\lambda}{\lambda}
\otimes
CXC^\dagger
\otimes
\xi_{\mathcal Q_\lambda}.
\label{eq:communication-schur-input}
\end{align}
Equivalently, in the physical input space, the encoding channel is
\begin{align}
\mathcal E_{\lambda,C}(X)
\coloneqq
(U_{\rm Schur}^{A'})^\dagger
\Bigl(
\ketbra{\lambda}{\lambda}
\otimes
CXC^\dagger
\otimes
\xi_{\mathcal Q_\lambda}
\Bigr)
U_{\rm Schur}^{A'} .
\label{eq:source-schur-encoder}
\end{align}
Let $A_0\simeq R_0\simeq\mathbb C^K$, where $R_0$ is the inaccessible
reference used to test entanglement transmission, and prepare
$\Phi_K$ on $R_0A_0$.  In a fixed $\lambda$ block, after applying the
code $C$ the state in the $\lambda$ sector is
\begin{align}
(\id_{R_0}\otimes C)\Phi_K
(\id_{R_0}\otimes C^\dagger)
\otimes
\xi_{\mathcal Q_\lambda}.
\label{eq:encoded-bell-schur-input}
\end{align}

We now introduce a purification only as an analytical device.  Let
$\ket{\varphi_\lambda}_{R_\lambda\mathcal Q_\lambda^{A'}}$ be any
purification of $\xi_{\mathcal Q_\lambda}$, where $R_\lambda$ is
unrelated to the communication reference $R_0$.  A purification of
\eqref{eq:encoded-bell-schur-input} is then
\begin{align}
(\id_{R_0}\otimes C)\ket{\Phi_K}_{R_0A_0}
\otimes
\ket{\varphi_\lambda}_{R_\lambda\mathcal Q_\lambda^{A'}}.
\label{eq:encoded-bell-purification}
\end{align}
Let $V_{\mathcal N}:A'\longrightarrow BE$ be a Stinespring isometry for $\mathcal N$.  Its tensor power
intertwines the permutation actions,
\begin{align}
V_{\mathcal N}^{\otimes k}R_\pi^{A'}
=
R_\pi^{BE}V_{\mathcal N}^{\otimes k},
\qquad
\pi\in S_k.
\end{align}
Since our Schur convention orders each block as
$\mathcal P_\lambda\otimes\mathcal Q_\lambda$, Schur's lemma therefore
gives
\begin{align}
U_{\rm Schur}^{BE}
V_{\mathcal N}^{\otimes k}
(U_{\rm Schur}^{A'})^\dagger
=
\bigoplus_\lambda
\id_{\mathcal P_\lambda}
\otimes
W_{\mathcal N,\lambda}
\label{eq:stinespring-schur-intertwiner}
\end{align}
for isometries $W_{\mathcal N,\lambda}:\mathcal Q_\lambda^{A'}\to\mathcal Q_\lambda^{BE}$.  Hence, in the global $\lambda$ block, the state
\eqref{eq:encoded-bell-purification} is mapped to
\begin{align}
(\id_{R_0}\otimes C)\ket{\Phi_K}
\otimes
(\id_{R_\lambda}\otimes W_{\mathcal N,\lambda})
\ket{\varphi_\lambda}.
\label{eq:encoded-bell-after-stinespring}
\end{align}
The physical channel therefore acts only on the unitary-register factor
in global Schur coordinates, while the logical information remains in
the permutation register. The bipartition is now between the retained reference systems and $BE$.
We next resolve the global $BE$ Schur block into the local Schur blocks
of $B$ and $E$.  Applying the same local--global Clebsch--Gordan
decomposition as in \cref{thm:schurpurif}, and before normalizing the output label $\mu$, this gives a completely positive map $\mathcal M_{\lambda\to\mu}:\mathcal L(\mathcal P_\lambda^{A'})\to\mathcal L(\mathcal P_\mu^B)$, identifying the input irrep with $\mathcal P_\lambda^{BE}$ through \eqref{eq:stinespring-schur-intertwiner},
\begin{align}
\mathcal M_{\lambda\to\mu}(X)
=
\sum_{\nu,e,\alpha,\beta}
\bigl[
\widetilde S_{\mathcal N,\lambda\mu}^{(\nu)}
\bigr]_{\alpha\beta}
E_{\nu,\alpha,e}
X
E_{\nu,\beta,e}^\dagger ,
\label{eq:communication-unnormalized-schur-channel}
\end{align}
with
$\widetilde S_{\mathcal N,\lambda\mu}^{(\nu)}\ge0$.
The map $\mathcal M_{\lambda\to\mu}$ is completely positive and
trace-nonincreasing; for $X\in\mathcal D(\mathcal P_\lambda^{A'})$, $\Tr\mathcal M_{\lambda\to\mu}(X)$ is the probability of Bob's Schur outcome $\mu$.
We now show that this probability is independent of the state inserted
into $\mathcal P_\lambda$.  For fixed $\lambda$, let
\begin{align}
\mathcal E_\lambda(X)
\coloneqq
(U_{\rm Schur}^{A'})^\dagger
\Bigl(
\ketbra{\lambda}{\lambda}
\otimes
X
\otimes
\xi_{\mathcal Q_\lambda}
\Bigr)
U_{\rm Schur}^{A'}.
\end{align}
Then $\mathcal M_{\lambda\to\mu}$ is equivalently obtained by applying
$\mathcal N^{\otimes k}\circ\mathcal E_\lambda$, performing Bob's
Schur transform, projecting onto $\mu$, and tracing over
$\mathcal Q_\mu$.
Permutation covariance gives
\begin{align}
\mathcal M_{\lambda\to\mu}
\bigl(
p_\lambda(\pi)Xp_\lambda(\pi)^\dagger
\bigr)
=
p_\mu(\pi)
\mathcal M_{\lambda\to\mu}(X)
p_\mu(\pi)^\dagger
\end{align}
for every $\pi\in S_k$.  Hence
$\mathcal M_{\lambda\to\mu}^\dagger(\id)$ commutes with the irreducible
representation $p_\lambda(\cdot)$, and Schur's lemma implies
\begin{align}
\mathcal M_{\lambda\to\mu}^\dagger(\id)
=
q_{\mu|\lambda}\id_{\mathcal P_\lambda},
\qquad
\sum_\mu q_{\mu|\lambda}=1.
\label{eq:schur-output-label-flatness}
\end{align}
Therefore every normalized state $X$ on $\mathcal P_\lambda$ satisfies
\begin{align}
\Pr[\mu\mid\lambda,X]
=
\Tr\mathcal M_{\lambda\to\mu}(X)
=
q_{\mu|\lambda}.
\label{eq:communication-mu-flatness}
\end{align}
In particular, this probability is independent of both the transmitted
state and the code $C$.
For $q_{\mu|\lambda}>0$ define
\begin{align}
\mathcal N_{\lambda\to\mu}
\coloneqq
\frac{\mathcal M_{\lambda\to\mu}}{q_{\mu|\lambda}}.
\end{align}
Equation \eqref{eq:schur-output-label-flatness} implies that
$\mathcal N_{\lambda\to\mu}$ is CPTP. As in the distillation setting, this denotes the physical conditional channel; its dependence on the underlying channel $\mathcal N$ and fixed input $\xi$ is suppressed. Defining
\begin{align}
S_{\mathcal N,\lambda\mu}^{(\nu)}
\coloneqq
\frac{
\widetilde S_{\mathcal N,\lambda\mu}^{(\nu)}
}{
q_{\mu|\lambda}
},
\end{align}
we obtain the normalized conditional Schur channel
\begin{align}
\mathcal N_{\lambda\to\mu}(X)
=
\sum_{\nu,e,\alpha,\beta}
\bigl[
S_{\mathcal N,\lambda\mu}^{(\nu)}
\bigr]_{\alpha\beta}
E_{\nu,\alpha,e}
X
E_{\nu,\beta,e}^\dagger .
\label{eq:communication-physical-schur-channel}
\end{align}
Moreover,
\begin{align}
\Tr S_{\mathcal N,\lambda\mu}^{(\nu)}
&=
\Pr[\nu\mid\lambda,\mu],
&
\sum_\nu
\Tr S_{\mathcal N,\lambda\mu}^{(\nu)}
&=
1.
\label{eq:communication-S-normalization}
\end{align}
The joint law of
$(\lambda,\mu,\nu)$ above is exactly the law obtained by the three local
Schur measurements on $\psi_{\mathcal N}^{\otimes k}$. The standard entanglement-transmission state of the conditional channel
$\mathcal N_{\lambda\to\mu}$ with code $C$ is
\begin{align}
\rho_{C|\lambda\mu}^{\rm comm}
\coloneqq
\left[
\id_{R_0}\otimes
\left(
\mathcal N_{\lambda\to\mu}\circ\mathcal C
\right)
\right](\Phi_K),
\label{eq:communication-coded-choi}
\end{align}
Substituting
\eqref{eq:communication-physical-schur-channel} gives
\begin{align}
\rho_{C|\lambda\mu}^{\rm comm}
=
\sum_{\nu,e,\alpha,\beta}
\bigl[
S_{\mathcal N,\lambda\mu}^{(\nu)}
\bigr]_{\alpha\beta}
(\id_{R_0}\otimes E_{\nu,\alpha,e}C)
\Phi_K
(\id_{R_0}\otimes
C^\dagger E_{\nu,\beta,e}^\dagger).
\label{eq:communication-coded-error-state}
\end{align}
This is precisely the coded physical Schur state appearing in the
polar-decoding analysis of
\cref{thm:universal-distillation}.  The difference is operational:
there the code arose from one branch of Alice's measurement on a
pre-existing Schur EPR register, whereas here Alice directly prepares
the encoded input $CXC^\dagger$.  Consequently a single random
isometry $C$ suffices.
Fix now $(\lambda,\mu)$ and write $d=d_\lambda$ and $D=d_\mu$.
Use the correctable region $\mathcal T_{\lambda,\mu}^{(K)}$ from \eqref{eq:accepted-error-sectors}, with the same suppressed dependence on $(k,r,L)$, and let
\begin{align}
N_{\lambda\mu}
\coloneqq
\sum_{\nu\in\mathcal T_{\lambda,\mu}^{(K)}}
g_{\lambda\mu\nu}d_\nu .
\end{align}
Restricting
\eqref{eq:communication-coded-error-state} to the accepted sectors gives
\begin{align}
\rho_{C|\lambda\mu}^{\rm good}
=
\sum_{\substack{
\nu\in\mathcal T_{\lambda,\mu}^{(K)}\\
e,\alpha,\beta}}
\bigl[
S_{\mathcal N,\lambda\mu}^{(\nu)}
\bigr]_{\alpha\beta}
(\id\otimes E_{\nu,\alpha,e}C)
\Phi_K
(\id\otimes C^\dagger
E_{\nu,\beta,e}^\dagger).
\end{align}
For each $\lambda$ with $d_\lambda\ge K$, Alice chooses $C$ before
Bob's output label is available. Bob measures $\mu$ and applies the
polar decoder for $C$ and the corresponding weighted reference channel;
if the accepted set is empty, he outputs any fixed state. This defines
a CPTP decoder $\mathcal D_{\lambda,C}$ independent of the physical
channel. Applying the same accepted-sector
packing bound and normalized polar-recovery lemma as in \cref{thm:universal-distillation}, and then averaging over 
the output label $\mu$, gives
\begin{align}
&\sum_{\lambda:d_\lambda\ge K}
p_\lambda\,
\mathbb E_C
F_e\!\left(
\pi_K,
\mathcal D_{\lambda,C}
\circ
\mathcal N^{\otimes k}
\circ
\mathcal E_{\lambda,C}
\right)
\nonumber\\
&\qquad\ge
(1-p_k)
\left(1-\frac1L\right).
\label{eq:communication-average}
\end{align}
Here we used the same accepted-event identity as in the distillation proof of \cref{thm:universal-distillation},
\[
\sum_{\lambda,\mu}
p_\lambda q_{\mu|\lambda}
\Pr\!\left[
\nu\in\mathcal T_{\lambda,\mu}^{(K)}
\,\middle|\,
\lambda,\mu
\right]
=
1-p_k.
\]
We now refine the mixed encoders into isometric encoders. Diagonalize
the known unitary-register state as
\begin{align}
\xi_{\mathcal Q_\lambda}
=\sum_a t_{\lambda,a}\ketbra*{v_{\lambda,a}}{v_{\lambda,a}}.
\end{align}
For $d_\lambda\ge K$, draw $\lambda$ with probability $p_\lambda$,
draw the isometry $C$ as above, and draw $a$ with probability
$t_{\lambda,a}$. The seed $s=(\lambda,C,a)$ specifies the isometric
encoder $\mathcal E_{\lambda,C,a}$ induced by
\begin{align}
\ket{x}\longmapsto
(U_{\rm Schur}^{A'})^\dagger
\bigl(\ket\lambda\otimes C\ket{x}\otimes\ket*{v_{\lambda,a}}\bigr),
\qquad
\mathcal D_s=\mathcal D_{\lambda,C}.
\label{eq:universal-isometric-encoder}
\end{align}
The decoder does not depend on $a$. Since
$\mathcal E_{\lambda,C}=\sum_a t_{\lambda,a}\mathcal E_{\lambda,C,a}$,
linearity of entanglement fidelity gives, for every physical channel,
\begin{align}
&\sum_a t_{\lambda,a}
F_e\!\left(\pi_K,
\mathcal D_{\lambda,C}\circ\mathcal N^{\otimes k}
\circ\mathcal E_{\lambda,C,a}\right)
\nonumber\\
&\qquad=
F_e\!\left(\pi_K,
\mathcal D_{\lambda,C}\circ\mathcal N^{\otimes k}
\circ\mathcal E_{\lambda,C}\right).
\end{align}
This refinement is performed after averaging the performance of the
mixed encoder; the output-label probabilities need not agree for the
different pure components.

If $d_\lambda<K$, no physical triple in this branch is accepted, since
$d_\mu\le d_\lambda d_\nu
\le d_\lambda\mathcal V_{\lambda\mu}^{(r)}(d_\nu)$.
For such labels, choose any fixed isometric encoder into
$(A')^{\otimes k}$ and any fixed CPTP decoder. Such an encoder exists
because $K\le d_{A'}^k$, and these branches contribute nonnegative
fidelity. Averaging over the complete seed distribution and using
\eqref{eq:communication-average}, we obtain
\begin{align}
\mathbb E_s\,
F_e\!\left(
\pi_K,
\mathcal D_s\circ
\mathcal N^{\otimes k}\circ
\mathcal E_s
\right)
&\ge
(1-p_k)\left(1-\frac1L\right),
\end{align}
and therefore, for
\(
\varepsilon_k(\mathcal N;s)
\coloneqq
1-
F_e(
\pi_K,
\mathcal D_s\circ
\mathcal N^{\otimes k}\circ
\mathcal E_s),
\)
\begin{align}
\mathbb E_s\,
\varepsilon_k(\mathcal N;s)
&\le
p_k+\frac{1-p_k}{L}.
\label{eq:sw-transmission-bound-proof}
\end{align}
Finally, the recovery estimate uses only the first and second unitary
moments. All preceding bounds therefore remain valid when $C$ is
sampled as the first $K$ columns of a unitary drawn from a finite
weighted exact $2$-design on $\mathcal P_\lambda$, as in the distillation
proof. Together with the finite choices of $\lambda$ and $a$, this
gives the claimed finite ensemble of isometric encoders and CPTP
decoders. Every probability and every code pair depends only on the
known input $\xi$ and the supplied code parameters, not on the
physical channel.
\end{proof}

\begin{theorem}[Universal quantum communication]
\label{thm:sw-transmission}
Fix an input $\xi$, an integer rank ceiling $r\ge1$, a target rate
$R\ge0$, and a gap $\delta>0$. Suppose at least one channel satisfies
\begin{align}
\rank\omega_{\mathcal N}\le r,
\qquad
R\le I_c(\xi,\mathcal N)-\delta .
\label{eq:channel-rate-promise}
\end{align}
Then there are finite code ensembles, sampled using a shared classical
seed $s$, with isometric encoders $\mathcal E_{k,s}$ and CPTP decoders
$\mathcal D_{k,s}$, depending only on
$(\xi,d_{A'},d_B,k,r,R,\delta)$, with
$K_k=2^{\lfloor kR\rfloor}$, such that for some $c>0$ and all
sufficiently large $k$,
\begin{align}
\Pr_s\!\left[
\sup_{\mathcal N\ {\rm satisfying}\ \eqref{eq:channel-rate-promise}}
\varepsilon_k(\mathcal N;s)>2^{-ck}
\right]
\le 2^{-ck},
\label{eq:universal-high-probability-transmission}
\end{align}
where $\varepsilon_k(\mathcal N;s)$ is the entanglement infidelity of
the sampled code. The seed is shared before transmission; no auxiliary
classical communication or shared entanglement is used. Fixing a
successful seed also yields a deterministic universal code.
More generally, for any specified nonempty family $\mathfrak I$ of
channels between the same spaces, every rate strictly below
\begin{align}
Q_{\rm reg}(\mathfrak I)
\coloneqq
\sup_{m\ge1}\frac1m
\max_{\xi\in\mathcal D((A')^{\otimes m})}\inf_{\mathcal N\in\mathfrak I}
I_c(\xi,\mathcal N^{\otimes m})
=
\lim_{m\to\infty}\frac1m
\max_{\xi\in\mathcal D((A')^{\otimes m})}\inf_{\mathcal N\in\mathfrak I}
I_c(\xi,\mathcal N^{\otimes m})
\label{eq:regularized-coherent-information}
\end{align}
is achieved with shared classical randomness and the same simultaneous
high-probability guarantee. The ensembles may depend on $\mathfrak I$,
but not on which member is used.
\end{theorem}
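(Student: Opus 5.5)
The plan is to combine the seed-averaged ensemble bound of \cref{lem:universal-code-ensemble} with an exponential per-channel estimate of $p_k(\mathcal N)$, and then upgrade pointwise-in-$\mathcal N$ guarantees to a simultaneous one via a finite net over the promised channel family. First, fix $L=2^{\delta k/2}$ and $K_k=2^{\lfloor kR\rfloor}$. For a single channel $\mathcal N$ satisfying \eqref{eq:channel-rate-promise}, the state $\psi_{\mathcal N}$ is a purification of $\omega_{\mathcal N}$ with rank at most $r$ and $I(R\rangle B)_{\omega_{\mathcal N}}=I_c(\xi,\mathcal N)\ge R+\delta$. The argument of \cref{prop:first-order} then gives $p_k(\mathcal N)\le[(k+1)^{b_{r_B}}+(k+1)^{b_{r}}]e^{-k\xi^2/2}$ with $\xi$ chosen from $\delta$ and the ranks only. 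Here $r_B\le d_B$ and $r_E\le r$, so the constant $c_1$ is uniform over the promised class. Hence $\mathbb E_s\varepsilon_k(\mathcal N;s)\le 2^{-c_2k}$ uniformly in $\mathcal N$.

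Second, for simultaneity, I will use a diamond-norm net. The set of channels $\mathcal L(A')\to\mathcal L(B)$ is compact of fixed dimension, so for precision $\gamma$ it has a $\gamma$-net of size $(C/\gamma)^{O(d_{A'}^2d_B^2)}$. The nontrivial point is that the promise set may not be closed and net points need not satisfy the promise. I will handle this by taking net points among channels satisfying a slightly relaxed promise, namely $\rank\omega\le r$ and $I_c\ge R+\delta/2$, and running the ensemble with $\delta/2$ in place of $\delta$. Continuity of coherent information (Alicki--Fannes--Winter) shows that any promised $\mathcal N$ within diamond distance $\gamma$ of a point in the relaxed set has nearby points with $I_c$ only $O(\gamma\log d+h(\gamma))$ smaller. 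To keep the rank constraint I will choose the net inside the promised set itself: cover the set by $\gamma$-balls centered at members, which is possible with the same cardinality bound. Then $\|\mathcal N^{\otimes k}-\mathcal M^{\otimes k}\|_\diamond\le k\gamma$ gives $|\varepsilon_k(\mathcal N;s)-\varepsilon_k(\mathcal M;s)|\le k\gamma$. Choosing $\gamma=2^{-2ck}$ makes the net size $2^{O(k)}$ with an explicit constant. Markov's inequality on each net point and a union bound then give $\Pr_s[\max_{\mathcal M}\varepsilon_k(\mathcal M;s)>2^{-c_3k}]\le 2^{O(ck)}2^{-c_2k}\cdot 2^{c_3k}$. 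Balancing $c,c_3$ small relative to $c_2$ yields \eqref{eq:universal-high-probability-transmission}. The \textbf{main obstacle} is this balancing: the net exponent grows like $c\,k\,d_{A'}^2d_B^2$, while the ensemble error decays at rate $c_2$ independent of $c$. The resolution is simply to take $c$ small, since $c_2$ does not depend on $\gamma$. Fixing a good seed gives the deterministic code.

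Third, for the regularized statement, take $R<Q_{\rm reg}(\mathfrak I)$ and choose $m$ and $\xi\in\mathcal D((A')^{\otimes m})$ with $\frac1m\inf_{\mathcal N\in\mathfrak I}I_c(\xi,\mathcal N^{\otimes m})\ge R+2\delta$. Apply the first part to the family $\{\mathcal N^{\otimes m}\}$ viewed as channels $(A')^{\otimes m}\to B^{\otimes m}$ with rank ceiling $r=d_{A'}^md_B^m$, at rate $mR$ and gap $m\delta$, so no rank promise is needed. Block lengths $n$ that are not multiples of $m$ are handled by padding with at most $m-1$ unused channel uses, at rate loss $O(1/n)$. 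The limit equality in \eqref{eq:regularized-coherent-information} follows from superadditivity: tensoring optimal inputs gives $f(m_1+m_2)\ge f(m_1)+f(m_2)$ for $f(m)=\max_\xi\inf_{\mathcal N}I_c$, since $\inf$ of a sum is at least the sum of the $\inf$s and coherent information is additive on product inputs through product channels. Fekete's lemma then gives the limit.
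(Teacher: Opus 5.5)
Your proposal is correct and follows essentially the same route as the paper. You use a uniform exponential bound on $\sup_{\mathcal N}\mathbb E_s\varepsilon_k(\mathcal N;s)$ from \cref{prop:first-order} and \cref{lem:universal-code-ensemble} with $L=2^{\delta k/2}$, then a diamond-norm net chosen inside the promised class combined with Markov's inequality, the union bound and the $k\|\mathcal N-\mathcal M\|_\diamond$ continuity estimate at exponentially small precision with a small enough exponent, and finally blocking $\mathcal N^{\otimes m}$ with the trivial rank ceiling, padding, and Fekete for the regularized part. The relaxed-promise/Alicki--Fannes--Winter detour you mention is unnecessary: as in the paper, once the net points are taken inside the promised set (which the volumetric bound $(1+2/\tau)^{(d_{A'}d_B)^2}$ already allows), the ensemble guarantee applies to them directly.
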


\begin{proof}
For $R=0$ take $K_k=1$. Assume $R>0$, let $\mathfrak I$ denote the
class in \eqref{eq:channel-rate-promise}, and set
$L_k=2^{k\delta/2}$. By
\cref{prop:first-order,lem:universal-code-ensemble}, the uniform rank
bounds $r_R\le d_{A'}$, $r_B\le d_B$, $r_E\le r$, together with the
common gap $\delta$, give constants $a>0$ and $k_0$ depending only on
$(\xi,d_{A'},d_B,r,R,\delta)$ such that
\begin{align}
\sup_{\mathcal N\in\mathfrak I}
\mathbb E_s\,\varepsilon_k(\mathcal N;s)
\le 2^{-ak},
\qquad k\ge k_0 .
\label{eq:uniform-random-channel-error}
\end{align}
Here the random seed $s$ specifies the complete encoder--decoder pair
constructed in \cref{lem:universal-code-ensemble}.
We now show that most seeds work simultaneously over $\mathfrak I$. Put
$\kappa=(d_{A'}d_B)^2$ and let
$\{\mathcal N_j\}_{j=1}^{M_\tau}\subset\mathfrak I$ be a
diamond-norm $\tau$-net with
\begin{align}
M_\tau\le(1+2/\tau)^\kappa .
\end{align}
By Markov's inequality, the union bound, and
\eqref{eq:uniform-random-channel-error}, for every $t>0$,
\begin{align}
\Pr_s\!\left[
\max_{j\le M_\tau}\varepsilon_k(\mathcal N_j;s)>t
\right]
\le \frac{M_\tau\,2^{-ak}}{t}.
\label{eq:simultaneous-code-selection}
\end{align}
For every fixed seed,
\begin{align}
\left|
\varepsilon_k(\mathcal N;s)-\varepsilon_k(\mathcal M;s)
\right|
&\le
\left\|
\mathcal N^{\otimes k}-\mathcal M^{\otimes k}
\right\|_\diamond
\nonumber\\
&\le
k\|\mathcal N-\mathcal M\|_\diamond .
\label{eq:channel-fidelity-continuity}
\end{align}
Consequently,
\begin{align}
\Pr_s\!\left[
\sup_{\mathcal N\in\mathfrak I}\varepsilon_k(\mathcal N;s)>t+k\tau
\right]
\le \frac{(1+2/\tau)^\kappa2^{-ak}}{t}.
\end{align}
Choose $b=a/[2(\kappa+1)]$, $t=2^{-bk}/2$, and
$\tau=2^{-bk}/(2k)$. The error threshold is $2^{-bk}$, while
the failure probability is at most
$2(1+4k)^\kappa2^{-ak/2}\le2^{-bk}$ for all sufficiently large $k$.
This proves \eqref{eq:universal-high-probability-transmission} with $c=b$.
In particular, a successful seed $s_k$ exists; fixing it fixes both
maps and consumes no shared randomness during transmission.
Taking $r=d_{A'}d_B$ gives the unrestricted-rank statement.
For the regularized assertion, let $\mathfrak I$ now be any specified
nonempty family and fix $R<Q_{\rm reg}(\mathfrak I)$. Then for some
finite $m$ and some input $\xi$ on $(A')^{\otimes m}$,
\begin{align}
\inf_{\mathcal N\in\mathfrak I}
I_c(\xi,\mathcal N^{\otimes m})
>mR .
\end{align}
Choose a positive gap below this difference and regard
$\mathcal N^{\otimes m}$ as one channel, with the trivial rank ceiling
$r_m=(d_{A'}d_B)^m$. The first part gives a common code ensemble with
$K_k=2^{\lfloor kmR\rfloor}$, whose worst-case error over
$\mathcal N\in\mathfrak I$ is exponentially small in $k$ with high probability over the shared seed.
Using $mk$ physical channel uses, and padding fewer than $m$ unused
positions when necessary, gives asymptotic rate $R$ and exponentially
vanishing error.
Finally, set
\begin{align}
a_m\coloneqq
\max_{\xi\in\mathcal D((A')^{\otimes m})}\inf_{\mathcal N\in\mathfrak I}
I_c(\xi,\mathcal N^{\otimes m}).
\end{align}
Product inputs and additivity of coherent information on tensor
products give $a_{m+\ell}\ge a_m+a_\ell$, $0\le a_m\le m\log_2 d_{A'}$, hence Fekete's lemma yields \eqref{eq:regularized-coherent-information}.
\end{proof}

\section{Connection to Petz--R\'enyi conditional entropies}
\label{app:min-entropy}

The entropy arising directly in the moment calculation is a
Petz--R\'enyi conditional entropy. For $0<\alpha<1$, write
\be
\mathcal I_\alpha
&\coloneqq -H_\alpha^\downarrow(A|B)_\rho
=D_\alpha(\rho_{AB}\Vert\id_A\otimes\rho_B)\\
&=\frac{1}{\alpha-1}
\log_2\Tr\!\left[
\rho_{AB}^{\alpha}
(\id_A\otimes\rho_B^{1-\alpha})
\right].
\label{eq:petz-coherent-information}
\ee
Here $D_\alpha(\omega\Vert\sigma)\coloneqq(\alpha-1)^{-1}\log_2\Tr(\omega^\alpha\sigma^{1-\alpha})$ is the Petz--R\'enyi divergence, and
$\downarrow$ denotes evaluation at the physical marginal.
For a purification $\psi_{ABE}$, duality gives
\be
\mathcal I_\alpha
=H_{2-\alpha}^\downarrow(A|E)_\psi
=-D_{2-\alpha}
(\rho_{AE}\Vert\id_A\otimes\rho_E),
\label{eq:petz-conditional-duality}
\ee
so $\mathcal I_\alpha$ is also a conditional entropy of $A$
given the purifying environment \cite{TomamichelBertaHayashi2014}.
As $\alpha\uparrow1$, it converges to
$H(A|E)_\psi=I(A\rangle B)_\rho$.
For orders above one, the Petz--R\'enyi divergence uses the same trace
formula, with value $+\infty$ unless the support of its first
argument is contained in that of its second argument.

To connect the sample bounds to the pure-state min-entropy
benchmark of Ref.~\cite{Leone_2025}, we also consider sandwiched
conditional R\'enyi entropies. Their optimized version tends to
conditional min-entropy as the order increases.
The sandwiched R\'enyi divergence is
\cite{MullerLennertEtAl2013,WildeWinterYang2014}
\be
\widetilde D_\alpha(\omega\Vert\sigma)
\coloneqq \frac{1}{\alpha-1}
\log_2\Tr\!\left[
\left(
\sigma^{\frac{1-\alpha}{2\alpha}}
\omega
\sigma^{\frac{1-\alpha}{2\alpha}}
\right)^\alpha
\right],
\ee
and, for $\alpha>1$, the corresponding conditional entropies are
\be
\widetilde H_\alpha^\downarrow(A|E)_\psi
&\coloneqq -\widetilde D_\alpha
(\rho_{AE}\Vert\id_A\otimes\rho_E),\\
\widetilde H_\alpha^\uparrow(A|E)_\psi
&\coloneqq \sup_{\sigma_E\in\mathcal D(E)}
\bigl\{-\widetilde D_\alpha
(\rho_{AE}\Vert\id_A\otimes\sigma_E)\bigr\}.
\label{eq:conditional-renyi}
\ee
Following Ref.~\cite{TomamichelBertaHayashi2014}, the tilde
indicates the sandwiched divergence and $\uparrow$ indicates
optimization over the reference. For an arbitrary fixed
$\sigma_E$, we write the negative divergence directly.
Negative powers act on the reference support; for $\alpha>1$,
the divergence is $+\infty$ if
$\supp\omega\not\subseteq\supp\sigma$.
For $1/2\le\alpha<1$, the same formula is used with its limiting
convention at singular references.

The optimized entropy satisfies
\be
\lim_{\alpha\to\infty}
\widetilde H_\alpha^\uparrow(A|E)_\psi
=H_{\min}(A|E)_\psi.
\label{eq:sandwiched-min-entropy-limit}
\ee
This follows from the infinite-order limit of the sandwiched
divergence to the max-relative entropy
\cite{MullerLennertEtAl2013,KRS2009}.
For pure $\rho_{AB}$, one can choose $E$ trivial, giving
$H_{\min}(A|E)_\psi=-\log_2\|\rho_A\|_\infty$,
the reduced-state min-entropy used in Ref.~\cite{Leone_2025}.

The coding analysis proceeds through $\mathcal I_\alpha$;
the comparisons in \cref{lem:regularized-min-entropy} translate these other promises
into lower bounds on it. In particular,
\be
H_{\min}(A|E)_\psi
\le\mathcal I_{1/2}
\le I(A\rangle B)_\rho.
\label{eq:petz-half-hierarchy}
\ee
The first inequality follows from min--max duality and the
second from monotonicity of the Petz--R\'enyi divergence in its order
\cite{KRS2009,TCR2010,TomamichelBertaHayashi2014}.
Thus a min-entropy promise applies directly at $\alpha=1/2$.

\subsection{Moments of the Schur-Weyl information spectrum}

The coupling used for second-order achievability also controls
exponential moments: the same correction that relates the
fluctuations of $\widehat Z_k$ to $J_k^{(r)}$ transfers the
Petz--R\'enyi moment bound to the Schur statistic.

\begin{lemma}[Petz--R\'enyi moments of the Schur spectrum]
\label{lem:petz-schur-moment}
Let $r\ge r_E$ be the supplied rank ceiling, let $\chi$ be as in
\cref{eq:rank-Q}, and set
\be
C_k\coloneqq (k+1)^{\chi+b_{r_B}},\qquad
v\coloneqq \frac{1-\alpha}{2-\alpha},\qquad 0<\alpha<1.
\ee
Then
\be
\mathbb E\,2^{-vJ_k^{(r)}}
\le C_k^v2^{-vk\mathcal I_\alpha}.
\label{eq:petz-schur-moment}
\ee
Consequently, the universal protocol with target $K$ and
coding margin $L\ge1$ satisfies
\be
\varepsilon_k
\le L^{-1}
+\left[C_kKL\,2^{-k\mathcal I_\alpha}\right]^v.
\label{eq:petz-universal-error}
\ee
\end{lemma}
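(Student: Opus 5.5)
The plan is to transfer a Petz--R\'enyi exponential-moment bound from the classical sum $\widehat Z_k$ to the Schur statistic $J_k^{(r)}$, using the coupling of \cref{lem:schur-classical} and \eqref{eq:advanced-sw-classical-coupling}. Then \eqref{eq:petz-universal-error} follows from Markov's inequality and the master bound \eqref{eq:sw-master}.

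\textbf{Step 1 (reduction via the coupling).} On the joint classical probability space of \cref{lem:schur-classical}, the lower bound in \eqref{eq:advanced-sw-classical-coupling} gives
$J_k^{(r)}\ge \widehat Z_k-\Delta_{B,k}-\chi\log_2(k+1)$ pointwise. Since $v>0$,
\begin{align}
2^{-vJ_k^{(r)}}\le (k+1)^{v\chi}\,2^{-v\widehat Z_k}\,2^{v\Delta_{B,k}} .
\end{align}
The only subtlety is that $\widehat Z_k$ and $\Delta_{B,k}$ are correlated. I will not try to control their joint law; instead I decouple them by H\"older's inequality.

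\textbf{Step 2 (H\"older with the matched exponents).} I take $p=1/(1-v)=2-\alpha$ and $q=1/v$, so that $1/p+1/q=1$. This gives
\begin{align}
\mathbb E\bigl[2^{-v\widehat Z_k}2^{v\Delta_{B,k}}\bigr]
\le \bigl(\mathbb E\,2^{-(1-\alpha)\widehat Z_k}\bigr)^{1/(2-\alpha)}
\bigl(\mathbb E\,2^{\Delta_{B,k}}\bigr)^{v}.
\end{align}
Here I use $vp=(1-\alpha)$ and $vq=1$; this is exactly why $v=(1-\alpha)/(2-\alpha)$ is chosen. The second factor is at most $(k+1)^{vb_{r_B}}$ by \eqref{eq:sw-classical-coupling}. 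Combined with the prefactor from Step 1, this produces $C_k^v$.

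\textbf{Step 3 (the single-letter moment).} By independence, $\mathbb E\,2^{-(1-\alpha)\widehat Z_k}=(\mathbb E\,2^{-(1-\alpha)Z})^k$. Using the law \eqref{eq:sw-classical-law},
\begin{align}
\mathbb E\,2^{-(1-\alpha)Z}=\sum_{i,j}e_j^{\alpha}b_i^{1-\alpha}\bra*{\varphi_j}(\id_A\otimes\ketbra{b_i}{b_i})\ket*{\varphi_j}
=\Tr\!\left[\rho_{AB}^{\alpha}(\id_A\otimes\rho_B^{1-\alpha})\right]=2^{-(1-\alpha)\mathcal I_\alpha}.
\end{align}
The last equality is the definition \eqref{eq:petz-coherent-information}, with everything restricted to the supports. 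Raising to the power $k/(2-\alpha)$ yields $2^{-vk\mathcal I_\alpha}$, which proves \eqref{eq:petz-schur-moment}.

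\textbf{Step 4 (error bound).} Markov's inequality applied to $2^{-vJ_k^{(r)}}$ gives
\begin{align}
p_k=\Pr[J_k^{(r)}<\log_2(KL)]\le (KL)^v\,\mathbb E\,2^{-vJ_k^{(r)}}\le\bigl[C_kKL\,2^{-k\mathcal I_\alpha}\bigr]^v .
\end{align}
Inserting this into \eqref{eq:sw-master} and using $p_k+(1-p_k)/L\le p_k+1/L$ gives \eqref{eq:petz-universal-error}.

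\textbf{Main obstacle.} I expect the main difficulty to be justifying Step 2 rigorously. This requires that $\widehat Z_k$, $\Delta_{B,k}$ and $J_k^{(r)}$ really live on one classical probability space, which is the commuting joint measurement in the proof of \cref{lem:schur-classical}. It also requires that the coupling inequality holds pointwise on that space, including the $\chi\log_2(k+1)$ replacement of $d_\nu$ by $\mathcal V_{\lambda\mu}^{(r)}(d_\nu)$ from \cref{lem:rank-Q}. Finally, one must handle the support conventions when $\rho_B$ is singular, so that every $Z$ value with $P(i,j)>0$ is finite.
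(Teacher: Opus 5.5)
Your proposal is correct and follows the paper's proof essentially step for step. Both use the same coupling lower bound $J_k^{(r)}\ge\widehat Z_k-\Delta_{B,k}-\chi\log_2(k+1)$, the same H\"older exponents $2-\alpha$ and $1/v$ (the paper's $1+s$ and $(1+s)/s$ with $s=1-\alpha$) to decouple $\widehat Z_k$ from the correlated defect, the same single-letter identity $\mathbb E\,2^{-(1-\alpha)Z}=2^{-(1-\alpha)\mathcal I_\alpha}$, and Markov's inequality combined with \eqref{eq:sw-master}; the only cosmetic difference is that you pull the deterministic factor $(k+1)^{v\chi}$ out before applying H\"older, whereas the paper folds it into $\Delta_k$ so that $\mathbb E\,2^{\Delta_k}\le C_k$.
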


\begin{proof}
Put $s=1-\alpha$, so $v=s/(1+s)$.
By \cref{lem:schur-classical,eq:advanced-sw-classical-coupling},
\be
J_k^{(r)}\ge\widehat Z_k-\Delta_k,\qquad
\Delta_k\coloneqq \Delta_{B,k}+\chi\log_2(k+1),\qquad
\mathbb E\,2^{\Delta_k}\le C_k.
\ee
The classical law \eqref{eq:sw-classical-law} gives
\be
\mathbb E\,2^{-sZ}
=\Tr\!\left[
\rho_{AB}^{1-s}(\id_A\otimes\rho_B^s)
\right]
=2^{-s\mathcal I_{1-s}}.
\ee
Since $\widehat Z_k$ is a sum of independent copies of $Z$,
H\"older's inequality with exponents $1+s$ and $(1+s)/s$ yields
\be
\mathbb E\,2^{-vJ_k^{(r)}}
&\le\mathbb E\!\left[
2^{-v\widehat Z_k}2^{v\Delta_k}
\right]\\
&\le
\bigl(\mathbb E\,2^{-s\widehat Z_k}\bigr)^{1/(1+s)}
\bigl(\mathbb E\,2^{\Delta_k}\bigr)^v\\
&\le C_k^v2^{-vk\mathcal I_{1-s}}.
\label{eq:petz-holder-transfer}
\ee
This step does not require independence between the information
sum and the Schur correction. Markov's inequality gives
\be
\Pr\!\left[J_k^{(r)}<\log_2 K+\log_2 L\right]
\le\left[C_kKL\,2^{-k\mathcal I_{1-s}}\right]^v.
\ee
Combining this with \eqref{eq:sw-master} proves the error bound.
\end{proof}

\subsection{Removing the local-rank dependence}

To replace the local-rank dependence in $C_k$ by dependence
only on the global rank ceiling, we shift Bob's marginal.
For $0<s<1$ and $\tau\ge0$, define
\be
\mathcal I_{1-s,\tau}
&\coloneqq D_{1-s}\bigl(
\rho_{AB}\Vert\id_A\otimes(\rho_B+\tau\id_B)
\bigr)\\
&=-\frac1s\log_2\Tr\!\left[
\rho_{AB}^{1-s}
\bigl(\id_A\otimes(\rho_B+\tau\id_B)^s\bigr)
\right].
\label{eq:regularized-affinity}
\ee
At $\tau=0$, this is $\mathcal I_{1-s}$.
The reference is left unnormalized to avoid introducing its
dimension-dependent trace. The shift replaces the local rank
in the spectral estimates by
$\Tr[\rho_B(\rho_B+\tau\id_B)^{-1}]\le1/\tau$.

\begin{lemma}[Regularized Schur moment]
\label{lem:regularized-schur}
Let $r\ge r_E$, $0<s<1$, and $\tau>0$.
Write $p(k)$ for the number of partitions of $k$, and set
\be
v\coloneqq \frac{s}{1+s},\qquad
a_{r,\tau}\coloneqq \frac r\tau+\frac{r^2}{2\tau^2},\qquad
C_{k,r,\tau}\coloneqq p(k)\exp(a_{r,\tau}).
\ee
For every $k\ge1$,
\be
\mathbb E\,2^{-vJ_k^{(r)}}
&\le C_{k,r,\tau}^{v}2^{-vk\mathcal I_{1-s,\tau}},\\
H_{\mathrm{SW},k}^{\varepsilon,(r)}(A|E)_\psi
&\ge k\mathcal I_{1-s,\tau}-\log_2 C_{k,r,\tau}
-\frac{1+s}{s}\log_2(1/\varepsilon),
\qquad 0<\varepsilon<1.
\label{eq:regularized-sw-entropy}
\ee
\end{lemma}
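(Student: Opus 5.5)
The plan is to rerun the H\"older transfer from \cref{lem:petz-schur-moment}. The only change is that Bob's single-copy reference $\rho_B$ is replaced by the shifted operator $\rho_B+\tau\id_B$. The local-rank-dependent defect $C_k=(k+1)^{\chi+b_{r_B}}$ must then be rebounded by a quantity depending only on $k$, $r$ and $\tau$. The second claim in \eqref{eq:regularized-sw-entropy} follows from the first by Markov's inequality applied to $2^{-vJ_k^{(r)}}$, exactly as in the end of the proof of \cref{lem:petz-schur-moment}. Indeed, $\Pr[J_k^{(r)}<x]\le 2^{vx}\,\mathbb E\,2^{-vJ_k^{(r)}}$. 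Setting this equal to $\varepsilon$ and solving for $x$ gives the stated quantile bound with the penalty $v^{-1}\log_2(1/\varepsilon)=\frac{1+s}{s}\log_2(1/\varepsilon)$.

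For the moment bound, I would define a regularized classical variable $Z^\tau(i,j)\coloneqq\log_2 e_j-\log_2(b_i+\tau)$ with the same law $P(i,j)$ as in \eqref{eq:sw-classical-law}. One then has $\mathbb E\,2^{-sZ^\tau}=\Tr[\rho_{AB}^{1-s}(\id_A\otimes(\rho_B+\tau\id_B)^s)]=2^{-s\mathcal I_{1-s,\tau}}$. For the sum $\widehat Z_k^\tau$ of $k$ independent copies, this gives $\mathbb E\,2^{-s\widehat Z_k^\tau}=2^{-sk\mathcal I_{1-s,\tau}}$. The observable $-\log_2(\rho_B+\tau\id_B)^{\otimes k}$ commutes with Bob's Schur projectors and with $\log_2\rho_E^{\otimes k}$. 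Therefore, as in \cref{lem:schur-classical}, a joint measurement on $\psi_{ABE}^{\otimes k}$ couples $\widehat Z_k^\tau$ with the triple $(\lambda,\mu,\nu)$. I would then write $J_k^{(r)}=\widehat Z_k^\tau-\Delta_k^\tau$, which defines the regularized defect $\Delta_k^\tau$. H\"older with exponents $1+s$ and $(1+s)/s$, as in \eqref{eq:petz-holder-transfer}, then reduces the lemma to the single estimate
\begin{equation*}
\mathbb E\,2^{\Delta_k^\tau}\le p(k)\exp(a_{r,\tau}).
\end{equation*}
No independence between $\widehat Z_k^\tau$ and $\Delta_k^\tau$ is needed here.

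The main obstacle is this exponential-moment estimate. Written out, $2^{\Delta_k^\tau}$ equals $\mathcal V^{(r)}_{\lambda\mu}(d_\nu)\,d_\mu^{-1}\,2^{\widehat Z_k^\tau}$. Its expectation is a trace of $\psi^{\otimes k}$ against operators that are block diagonal in the Schur labels. After summing over the eigenvalue outcomes within each block, I expect each $\mu$-block contribution to reduce to a ratio of Schur functions. The shifted reference turns $d_\mu^{-1}\Tr[\Pi_\mu\rho_B^{\otimes k}(\rho_B+\tau)^{-\otimes k}]$ into $s_\mu(x)$ with $x_i=b_i/(b_i+\tau)$. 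These satisfy $\sum_i x_i\le1/\tau$, and the local rank $r_B$ disappears because only $\sum_i x_i$ enters, via the power-sum/Cauchy expansion. On the environment side, I would not use the crude bound $\mathcal V\le d_\nu Q^{(r)}_{\lambda\mu}$ with $Q\le(k+1)^\chi$. Instead I would use the branching identity of \cref{lem:unit} with the environment restricted to $\mathbb C^r$, so that multiplicities are absorbed into $D^{(r)}_\eta$-weighted sums.

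Combining the two sides, I would bound the result by a generating function of the form $\exp\bigl(\sum_i x_i\cdot r+\tfrac12(\sum_i x_i)^2r^2\bigr)\le\exp(r/\tau+r^2/(2\tau^2))$. The first- and second-order terms come from the exponential formula truncated at cycle lengths one and two, and the leftover higher-cycle contributions must be shown to be nonpositive or absorbed. The factor $p(k)$ then comes from a union over the single unconstrained partition label, namely Alice's $\lambda$, whose rank bound we are trying to avoid. The step I expect may fail is obtaining exactly the truncation to two cycle types. If that does not work directly, I would first try to prove a weaker bound of the form $p(k)\prod_{i,j}(1-x_iy_j)^{-1}$ with $y$ the uniform spectrum on $r$ sites, and then use $-\ln(1-u)\le u+u^2/2+\dots$ to identify the stated $a_{r,\tau}$.
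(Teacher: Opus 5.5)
Your reduction is the same as the paper's, and those parts are fine: the shifted variable $Z^\tau$, the coupling through the commuting observable $(\rho_B+\tau\id_B)^{\otimes k}$, H\"older with exponents $1+s$ and $(1+s)/s$, the final Markov step, and the identification of Bob's moment with $s_\mu(x)$, $x_i=b_i/(b_i+\tau)$. The gap is at the estimate $\mathbb E\,2^{\Delta_k^\tau}\le p(k)e^{a_{r,\tau}}$, which is the lemma's only new content. There you give expectations rather than an argument, and the one concrete mechanism you describe fails.

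The missing idea is a cancellation between the multiplicity and Bob's Schur function. The environment needs no new treatment. Write $y_E,y_{B,\tau}$ for the spectral outcomes. Then $\mathcal V_{\lambda\mu}^{(r)}(d_\nu)\le d_\nu Q_{\lambda\mu}^{(r)}$ and $d_\nu y_E\le1$ give $2^{\Delta_k^\tau}\le Q_{\lambda\mu}^{(r)}/(d_\mu y_{B,\tau})$, so only the crude $Q\le(k+1)^\chi$ must be replaced. Set $y\coloneqq x\otimes1^r$. The Schur-polynomial form of \cref{lem:unit}, $s_\lambda(y)=\sum_{\mu',\eta}g_{\lambda\mu'\eta}s_{\mu'}(x)s_\eta(1^r)$, together with $s_\eta(1^r)\ge1$ for $\ell(\eta)\le r$, gives $Q_{\lambda\mu}^{(r)}s_\mu(x)\le s_\lambda(y)$. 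This cancels Bob's moment $s_\mu(x)$. Because it is uniform in $\lambda$, it also removes the $\lambda$-dependence of $Q$, whose joint law with $y_{B,\tau}$ has no closed form. Each $\mu$ then contributes at most $M\coloneqq\sum_{\lambda\vdash k}s_\lambda(y)$, so in the paper $p(k)$ counts Bob's labels $\mu$, not Alice's $\lambda$. If you instead drop the indicator of $\lambda$ and sum over it, as your ``union over $\lambda$'' suggests, summing the same inequality over $\mu$ gives $\mathbb E\,2^{\Delta_k^\tau}\le M$ with no $p(k)$ at all.

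To bound $M$, use Littlewood's identity $\sum_\eta s_\eta(y)=\prod_i(1-y_i)^{-1}\prod_{i<j}(1-y_iy_j)^{-1}$. The cancellation leaves a bare sum of Schur functions of $y$, not a Cauchy pairing. Your fallback $\prod_{i,j}(1-x_iy_j)^{-1}$ has degree-$k$ part a single complete symmetric function and cannot dominate $M$.

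Your generating-function step fails for two reasons. First, in the logarithm of Littlewood's product every higher-cycle term, namely $p_m(y)/m$ and $\sum_{i<j}(y_iy_j)^m/m$ for $m\ge2$, is positive, so none of them can be shown nonpositive. Second, $\sum_ix_i\le1/\tau$ is the wrong control parameter. Take $r=2$, pure $\rho_B$, even $k$, and $y=(x_1,x_1)$ with $x_1=(1+\tau)^{-1}$. Then $M=(k/2+1)^2x_1^k$, which is of order $\tau^{-2}$ at $k\approx2/\tau$. For small $\tau$ this exceeds $\exp\bigl(r\sum_ix_i+\tfrac12r^2(\sum_ix_i)^2\bigr)\le e^4$.

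The fix is to pass to $u_i\coloneqq y_i/(1-y_i)$. Each $u_i$ equals $b_j/\tau$ for an eigenvalue $b_j$ of $\rho_B$, repeated $r$ times, so $\sum_iu_i=r/\tau$ exactly, independently of $r_B$; this is where the local rank actually drops out. Then $-\ln(1-a)\le a/(1-a)$ bounds the logarithm by $\sum_iu_i+\sum_{i<j}u_iu_j/(1+u_i+u_j)\le r/\tau+\tfrac12(r/\tau)^2=a_{r,\tau}$, which completes the estimate.
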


\begin{proof}
Use the probability law $P(i,j)$ from
\eqref{eq:sw-classical-law}, replacing $Z$ by
\be
Z_\tau(i,j)\coloneqq \log_2 e_j-\log_2(b_i+\tau).
\ee
Its exponential moment is
$\mathbb E\,2^{-sZ_\tau}=2^{-s\mathcal I_{1-s,\tau}}$.
The coupling construction in the proof of
\cref{lem:schur-classical} applies with
$\rho_B^{\otimes k}$ replaced, as a spectral observable,
by $(\rho_B+\tau\id_B)^{\otimes k}$.
Indeed, this operator also commutes with Bob's Schur projectors.
Writing $y_E,y_{B,\tau}$ for the joint spectral outcomes of
$\rho_E^{\otimes k}$ and $(\rho_B+\tau\id_B)^{\otimes k}$,
the outcome
$\widehat Z_{\tau,k}\coloneqq \log_2 y_E-\log_2 y_{B,\tau}$
has the law of a sum of $k$ independent copies of $Z_\tau$.
The environment correction remains $\Delta_{E,k}\ge0$.
Define the regularized Bob correction and the total correction by
\be
\Delta_{B,\tau,k}\coloneqq -\log_2(d_\mu y_{B,\tau}),\qquad
\Delta_{\tau,k}\coloneqq \Delta_{B,\tau,k}
+\log_2 Q_{\lambda\mu}^{(r)}.
\ee
The same dimension identity and multiplicity-volume bound give
\be
\log_2 d_\mu-\log_2 d_\nu
&=\widehat Z_{\tau,k}+\Delta_{E,k}-\Delta_{B,\tau,k},\\
J_k^{(r)}&\ge\widehat Z_{\tau,k}-\Delta_{\tau,k}.
\label{eq:regularized-defect-coupling}
\ee
The correction $\Delta_{B,\tau,k}$ need not be nonnegative;
only its exponential moment is needed.
The new step is to bound the multiplicity and the Bob correction
together. Let $x$ be the positive spectrum of
$\rho_B(\rho_B+\tau\id_B)^{-1}$, and let $y=x\otimes1^r$
repeat each entry $r$ times. With $s_\lambda$ denoting a Schur
polynomial and $p_\mu=\Pr(\mu)$, Schur-Weyl duality and the
branching identity give
\be
p_\mu\mathbb E\!\left[
2^{\Delta_{B,\tau,k}}\mid\mu
\right]
&=d_\mu^{-1}\Tr\!\left[
\Pi_\mu^B
\bigl(\rho_B(\rho_B+\tau\id_B)^{-1}\bigr)^{\otimes k}
\right]
=s_\mu(x),\\
s_\lambda(y)
&=\sum_{\mu',\eta}
g_{\lambda\mu'\eta}s_{\mu'}(x)s_\eta(1^r)
\ge Q_{\lambda\mu}^{(r)}s_\mu(x).
\label{eq:regularized-characters}
\ee
The inequality uses $s_\eta(1^r)\ge1$ whenever
$\ell(\eta)\le r$ \cite{Macdonald1995}.
Set $M\coloneqq \sum_{\lambda\vdash k}s_\lambda(y)$.
On every positive-probability $\mu$ branch,
$s_\mu(x)>0$ and $Q_{\lambda\mu}^{(r)}\le M/s_\mu(x)$.
Consequently,
\be
\mathbb E\,2^{\Delta_{\tau,k}}
&=\mathbb E\!\left[
Q_{\lambda\mu}^{(r)}2^{\Delta_{B,\tau,k}}
\right]\\
&\le\sum_{\mu:p_\mu>0}
\frac{M}{s_\mu(x)}\,
p_\mu\mathbb E\!\left[
2^{\Delta_{B,\tau,k}}\mid\mu
\right]
\le p(k)M.
\ee
This cancellation removes the local-rank dependence without
assuming independence between the Schur labels and spectral outcomes.
It remains to bound $M$. Since every $y_i<1$, Littlewood's
identity can be evaluated at one \cite{Macdonald1995}:
\be
M\le\sum_\eta s_\eta(y)
=\prod_i(1-y_i)^{-1}
\prod_{i<j}(1-y_iy_j)^{-1},
\ee
where the sum runs over all partitions.
Put $u_i=y_i/(1-y_i)$. Each $u_i$ equals $b_j/\tau$
for some eigenvalue $b_j$ of $\rho_B$, repeated $r$ times,
so $\sum_i u_i=r/\tau$.
Using $-\ln(1-a)\le a/(1-a)$ gives
\be
\ln\!\left[
\prod_i(1-y_i)^{-1}\prod_{i<j}(1-y_iy_j)^{-1}
\right]
&\le\sum_i u_i+
\sum_{i<j}\frac{u_iu_j}{1+u_i+u_j}\\
&\le\frac r\tau+\frac12\left(\frac r\tau\right)^2
=a_{r,\tau}.
\label{eq:je-regularized-generating-bound}
\ee
Therefore $\mathbb E\,2^{\Delta_{\tau,k}}\le C_{k,r,\tau}$, with
\be
\log_2 C_{k,r,\tau}
=\log_2 p(k)+a_{r,\tau}\log_2\mathrm e.
\label{eq:je-regularized-overhead}
\ee
Applying the H\"older estimate \eqref{eq:petz-holder-transfer}
to $(\widehat Z_{\tau,k},\Delta_{\tau,k})$ proves the moment bound.
Markov's inequality at
$k\mathcal I_{1-s,\tau}-\log_2 C_{k,r,\tau}
-v^{-1}\log_2(1/\varepsilon)$ then proves the entropy bound.
\end{proof}

\section{Sample-complexity analysis}\label{app:sample-complexity}
The coupling in \cref{lem:schur-classical} allows us to control the information density directly. This avoids paying a quadratic rank factor for the fluctuations of each empirical spectrum separately.

\begin{lemma}[Concentration of the classical information density]
\label{lem:information-concentration}
Let $Z$, $\widehat Z_k=\sum_{i=1}^k Z_i$, and $I=\mathbb EZ$ be as in
\cref{lem:schur-classical}, and set
\begin{align}
M\coloneqq \log_2(\mathrm e r_Br_E).
\end{align}
There is a universal constant $c_0>0$ such that, for every
$0<\delta\leq1$,
\begin{align}
\Pr[\widehat Z_k-kI\leq-k\delta/2]
\leq
2\exp\!\left(
-c_0\frac{k\delta^2}{M^2}
\right).
\label{eq:uniform-information-tail}
\end{align}
\end{lemma}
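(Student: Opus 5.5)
The plan is a Chernoff bound on the lower tail of the i.i.d. sum $\widehat Z_k$. The single-copy variable $Z$ from \eqref{eq:sw-classical-law} is not bounded uniformly in the state: $\log_2 e_j$ can be very negative. So instead of Hoeffding I will control the moment generating function of $Z-I$ near zero, using two exponential moments that depend only on the ranks. These are $\mathbb E\,2^{Z}$ and $\mathbb E\,2^{-Z}$, read off from the law $P(i,j)$ exactly as in the proof of \cref{lem:schur-classical}.

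\emph{Step 1: two-sided exponential moments.} For the positive side,
\begin{align}
\mathbb E\,2^{Z}=\sum_{i,j}P(i,j)\frac{e_j}{b_i}=\Tr\!\left[\rho_{AB}^2(\id_A\otimes\rho_B^{-1})\right]\le\Tr\!\left[\rho_{AB}(\id_A\otimes\rho_B^{-1})\right]=r_B .
\end{align}
The inequality uses $0\le\rho_{AB}\le\id$, hence $\rho_{AB}^2\le\rho_{AB}$, together with positivity of $\id_A\otimes\rho_B^{-1}$, with the inverse taken on $\supp\rho_B$. The last equality is $\Tr[\rho_B\rho_B^{-1}]=r_B$. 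For the negative side,
\begin{align}
\mathbb E\,2^{-Z}=\sum_j\bra*{\varphi_j}(\id_A\otimes\rho_B)\ket*{\varphi_j}\le r_E ,
\end{align}
since each summand is at most $\|\rho_B\|_\infty\le1$ and there are $r_E=\rank\rho_{AB}$ eigenvectors. By Jensen, $|I|\le\log_2\max\{r_B,r_E\}\le M$.

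\emph{Step 2: sub-exponential MGF bound.} Set $Y\coloneqq Z-I$ and work in natural units, writing $Y=\ln(2)\,Y'$ if convenient. Step 1 gives $\mathbb E\,2^{|Y|}\le 2^{|I|}(r_B+r_E)\le 2^{2M}$, so $Y$ is sub-exponential with scale $O(M)$. Concretely, I will use Taylor's theorem with remainder,
\begin{align}
\mathbb E\,2^{-tY}\le 1+\tfrac{(t\ln 2)^2}{2}\,\mathbb E\!\left[Y^2 2^{t|Y|}\right],
\end{align}
and bound the remaining expectation for $0\le t\le 1/(2M)$. Using $y^2\le C M^2 2^{|y|/(2M)}$ together with Hölder/Jensen and the moment bound $\mathbb E\,2^{|Y|/M}\le 2^{2}$, this should give $\mathbb E\,2^{-tY}\le\exp(c_1t^2M^2)$ for a universal $c_1$. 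This is the step I expect to be the main obstacle. Getting exactly $M^2$ and not, for instance, $M^2\log$ or a worse power requires normalizing $Y/M$ so that its exponential moment is an absolute constant. The role of the factor $\mathrm e$ inside $M$ is presumably to keep $M\ge\log_2\mathrm e>1$, so that $|I|\le M$ and the scale never degenerates when $r_B=r_E=1$.

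\emph{Step 3: Chernoff.} By independence, for $0\le t\le1/(2M)$,
\begin{align}
\Pr[\widehat Z_k-kI\le-k\delta/2]\le 2^{-tk\delta/2}\bigl(\mathbb E\,2^{-tY}\bigr)^k\le\exp\!\left(-k\left(\tfrac{t\delta\ln2}{2}-c_1t^2M^2\right)\right).
\end{align}
Choose $t=\delta/(c_2M^2)$. This choice is admissible because $\delta\le1\le M$, so $t\le1/(2M)$ once $c_2\ge2$. It yields an exponent of order $k\delta^2/M^2$, which gives \eqref{eq:uniform-information-tail} with a universal $c_0$; the prefactor $2$ is only slack. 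The same argument with $+tY$ would give the two-sided version, but only the lower tail is needed.
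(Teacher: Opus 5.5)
Your proof is correct, and it takes a different route from the paper. The paper splits $Z=(-\log_2 b_i)-(-\log_2 e_j)$ into the surprisals of the two marginals of $P$, which are $(b_i)$ and $(e_j)$. It bounds each centered surprisal in $\psi_1$-norm by $O(\log_2(\mathrm e s))$ using $\sum_x p_x^{1-\theta}\le s^\theta$, and combines the two by the triangle inequality; this is where $M=\log_2(\mathrm e r_Br_E)$ comes from. It then cites the sub-exponential Bernstein inequality.

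You instead treat $Z$ as a whole and read off its exponential moments of order $\pm1$ from the quantum structure. Both computations are right:
\[
\mathbb E\,2^{Z}=\Tr[\rho_{AB}^2(\id_A\otimes\rho_B^{-1})]\le r_B ,\qquad
\mathbb E\,2^{-Z}=\Tr[\Pi_{\supp\rho_{AB}}(\id_A\otimes\rho_B)]\le r_E .
\]
These are the Petz--R\'enyi divergences of orders $2$ and $0$ of $\rho_{AB}$ relative to $\id_A\otimes\rho_B$. You then prove the moment-generating-function bound and the Chernoff step by hand.

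The paper's route is shorter and reuses generic classical facts. Yours is self-contained with explicit constants, needs no marginal decomposition, and links naturally to the Petz--R\'enyi moments $\mathbb E\,2^{-sZ}$ used in the next appendix. It even yields the slightly smaller scale $\log_2(\mathrm e\max\{r_B,r_E\})$.

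The step you flagged as the main obstacle closes at once. For $0\le t\le 1/(2M)$ you have $Y^2 2^{t|Y|}\le CM^2\,2^{|Y|/M}$. Jensen with the concave map $x\mapsto x^{1/M}$ (valid since $M>1$) gives $\mathbb E\,2^{|Y|/M}\le(\mathbb E\,2^{|Y|})^{1/M}\le 4$.

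One small fix is needed in Step 3. Besides $c_2\ge2$ for admissibility, you must take $c_2>2c_1/\ln 2$, so that the exponent $\frac{k\delta^2}{M^2}\bigl(\frac{\ln 2}{2c_2}-\frac{c_1}{c_2^2}\bigr)$ is positive.
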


\begin{proof}
Recall that
$Z(i,j)=\log_2 e_j-\log_2 b_i$, and that under the law
\eqref{eq:sw-classical-law} the marginals of $i$ and $j$ are
$(b_i)$ and $(e_j)$, respectively.
We now use the standard sub-exponential norm
$\|X\|_{\psi_1}\coloneqq \inf\{s>0:\mathbb E e^{|X|/s}\leq2\}$.
For a probability vector $p$ supported on at most $s$ points, the
surprisal $Y=-\log_2 p(X)$ satisfies
$\|Y-\mathbb EY\|_{\psi_1}\leq C\log_2(\mathrm e s)$ for a universal
constant $C$. Indeed, this follows directly from
$\sum_xp_x^{1-\theta}\leq s^\theta$ and the standard equivalent
characterizations of sub-exponential random variables
\cite[Proposition~2.8.1]{vershynin2026highdimensional}.
Applying this separately to $-\log_2 b_i$ and $-\log_2 e_j$, and using the
triangle inequality for the $\psi_1$ norm, gives
\begin{align}
\|Z-I\|_{\psi_1}
\leq C_0\log_2(\mathrm e r_Br_E)
=C_0M
\end{align}
for a universal constant $C_0$.
The sub-exponential Bernstein inequality
\cite[Theorem~2.9.1]{vershynin2026highdimensional}
therefore gives
\begin{align}
\Pr[\widehat Z_k-kI\leq-k\delta/2]
&\leq
2\exp\!\left[
-c\min\left\{
\frac{k\delta^2}{M^2},
\frac{k\delta}{M}
\right\}
\right].
\end{align}
Since $0<\delta\leq1$ and $M\geq1$, the quadratic term is the smaller
one. Absorbing universal constants into $c_0$ proves
\eqref{eq:uniform-information-tail}.
\end{proof}

We focus on the rank-scaling regime, in which the actual ranks grow while
the rate gap $\delta>0$ and target infidelity $\varepsilon>0$ are fixed.
In this regime, the concentration requirement grows only
polylogarithmically with the ranks, as $O(\log_2^2(r_Br_E))$, whereas the
Schur overhead grows polynomially, up to logarithmic factors, as
$\widetilde O(r_A r_B r+r_B^2)$. The latter therefore determines the
asymptotic copy complexity:
\[
k=\widetilde O(r_A r_B r+r_B^2).
\]
This is the natural regime for comparing the rank dependence of the
protocol with dimension-dependent procedures such as global tomography.
More generally, the same scaling applies whenever the rank-dependent
overhead dominates the concentration requirement.

\begin{proposition}[Sample complexity]
\label{prop:sample-complexity}
Fix a supplied rank ceiling $r\geq r_E$, a target rate $R\geq0$,
$0<\delta\leq1$, and $0<\varepsilon<1/2$, and let
$I=I(A\rangle B)_\rho$. If $R\leq I-\delta$, then, in the
rank-dominated regime, the universal protocol produces
$K=2^{\lfloor kR\rfloor}$ with entanglement infidelity at most
$\varepsilon$ using
\begin{align}
k
=
\widetilde O\!\left(
\frac{r_A r_B r+r_B^2+\log_2(1/\varepsilon)}{\delta}
\right)
\label{eq:sample-complexity}
\end{align}
copies, independently of the Hilbert space dimensions. Here the
rank-dominated regime means that the bound above dominates the
concentration scale
$\delta^{-2}\log_2^2(\mathrm e r_Br_E)\log_2(1/\varepsilon)$.
The notation $\widetilde O$ hides the additional factor
\[
\log_2\!\left(
2+\frac{r_A r_B r+r_B^2+\log_2(1/\varepsilon)}{\delta}
\right),
\]
and hence only logarithmic dependence on the ranks and $1/\delta$, and
at most an additional $\log_2\log_2(1/\varepsilon)$ dependence on the target
infidelity.
\end{proposition}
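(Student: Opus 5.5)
We plan to feed the master bound \eqref{eq:sw-master} of \cref{thm:universal-distillation} with coding margin $L=2/\varepsilon$. Then $\varepsilon_k\le p_k+\varepsilon/2$, and it suffices to show $p_k=\Pr[J_k^{(r)}<\log_2 K+\log_2 L]\le\varepsilon/2$ for $k$ as in \eqref{eq:sample-complexity}. We take $\log_2 K\le kR\le k(I-\delta)$. It is then enough to show that, with probability at least $1-\varepsilon/2$,
\begin{align}
J_k^{(r)}\ge kI-k\delta+\log_2(2/\varepsilon).
\end{align}

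To control $J_k^{(r)}$ we use the coupling \eqref{eq:advanced-sw-classical-coupling}, which gives
\begin{align}
J_k^{(r)}\ge\widehat Z_k-\Delta_{B,k}-\chi\log_2(k+1).
\end{align}
We split the deficit $k\delta$ into two halves, one for information-density fluctuations and one for Schur overheads.

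\textbf{Fluctuation half.} By \cref{lem:information-concentration}, $\Pr[\widehat Z_k\le kI-k\delta/2]\le 2\exp(-c_0k\delta^2/M^2)$. This is at most $\varepsilon/4$ once $k\gtrsim\delta^{-2}\log_2^2(\mathrm e r_Br_E)\log_2(8/\varepsilon)$, which is exactly the concentration scale. By the rank-dominated assumption it is dominated by \eqref{eq:sample-complexity}.

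\textbf{Overhead half.} By \cref{lem:schur-classical} and Markov's inequality, $\Pr[\Delta_{B,k}>b_{r_B}\log_2(k+1)+\log_2(4/\varepsilon)]\le\varepsilon/4$. On the intersection of the two good events we then need
\begin{align}
(\chi+b_{r_B})\log_2(k+1)+\log_2(4/\varepsilon)+\log_2(2/\varepsilon)\le k\delta/2.
\end{align}
Here $\chi\le r_Ar_Br$ by \cref{lem:rank-Q} and $b_{r_B}=O(r_B^2)$. So the condition has the form $k\ge(2/\delta)\,[X\log_2(k+1)+O(\log_2(1/\varepsilon))]$ with $X=r_Ar_Br+r_B^2$.

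The main step is solving this implicit inequality. We use the standard lemma: $k\ge 2a\log_2(2+a)$ with $a\gtrsim X/\delta$ implies $k\ge a\log_2(k+1)$. Taking $k$ equal to a constant times $\delta^{-1}(X+\log_2(1/\varepsilon))\log_2(2+\delta^{-1}(X+\log_2(1/\varepsilon)))$ satisfies the condition. The logarithmic factor is precisely the one hidden by $\widetilde O$, and it contributes at most a $\log_2\log_2(1/\varepsilon)$ dependence on the infidelity.

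A union bound over the two bad events gives $p_k\le\varepsilon/2$, hence $\varepsilon_k\le\varepsilon$. No Hilbert-space dimension enters, only $r_A,r_B,r$. The only delicate bookkeeping is ensuring the $\log_2(1/\varepsilon)$ terms stay additive inside the overhead. This holds because the concentration requirement is treated separately under the rank-dominated hypothesis, rather than multiplying it into the main bound.
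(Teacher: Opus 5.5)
Your proposal is correct and follows the paper's proof essentially step for step. Both arguments use the same ingredients: the coupling \eqref{eq:advanced-sw-classical-coupling}, a Markov bound on $\Delta_{B,k}$ from \cref{lem:schur-classical}, the Bernstein tail of \cref{lem:information-concentration} on half the gap, and the implicit condition $(\chi+b_{r_B})\log_2(k+1)+O(\log_2(1/\varepsilon))\le k\delta/2$, solved with a $\log_2(2+\cdot)$ factor. The only difference is bookkeeping: the paper takes $L=3/\varepsilon$ and splits the error budget into thirds, whereas you take $L=2/\varepsilon$ and split the remaining half into quarters.
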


\begin{proof}
Let
$\chi=r_A r_B r-r_A(r_A+1)/2\leq r_A r_B r$ and set
$L=3/\varepsilon$. It is sufficient that
\begin{gather}
k\geq
\frac{\log_2^2(\mathrm e r_Br_E)}{c_0\delta^2}
\ln\frac{6}{\varepsilon},
\label{eq:sample-condition-concentration}\\
(\chi+b_{r_B})\log_2(k+1)
+2\log_2\frac{3}{\varepsilon}
\leq
\frac{k\delta}{2}.
\label{eq:sample-condition-overhead}
\end{gather}
Indeed, with
$u=b_{r_B}\log_2(k+1)+\log_2(3/\varepsilon)$,
\cref{lem:schur-classical} gives
$\Pr[\Delta_{B,k}>u]\leq\varepsilon/3$. On the complementary event,
\eqref{eq:advanced-sw-classical-coupling} yields
\[
J_k^{(r)}
\geq
\widehat Z_k-(\chi+b_{r_B})\log_2(k+1)-\log_2(3/\varepsilon).
\]
Since $\log_2 K\leq kR$ and $R\leq I-\delta$,
\eqref{eq:sample-condition-overhead} implies that
$J_k^{(r)}<\log_2 K+\log_2 L$ only if
$\widehat Z_k<kI-k\delta/2$. Therefore
\begin{align}
\Pr[J_k^{(r)}<\log_2 K+\log_2 L]
&\leq
2\exp\!\left(
-c_0\frac{k\delta^2}{\log_2^2(\mathrm e r_Br_E)}
\right)
+\frac{\varepsilon}{3}
\nonumber\\
&\leq
\frac{2\varepsilon}{3},
\end{align}
where we used \cref{lem:information-concentration} and
\eqref{eq:sample-condition-concentration}. Together with the coding
term $1/L=\varepsilon/3$, this gives total infidelity at most
$\varepsilon$. Finally, solving \eqref{eq:sample-condition-overhead} gives
\begin{align}
k
=
O\!\left[
\frac{\chi+b_{r_B}+\log_2(1/\varepsilon)}{\delta}
\log_2\!\left(
2+\frac{\chi+b_{r_B}+\log_2(1/\varepsilon)}{\delta}
\right)
\right].
\end{align}
In the rank-dominated regime, this choice also satisfies
\eqref{eq:sample-condition-concentration}. Using
$\chi\leq r_A r_B r$ and $b_{r_B}\leq r_B^2$ gives
\eqref{eq:sample-complexity}.
\end{proof}

For comparison, consider full state tomography followed by a
distillation code tailored to the estimate. If $n$ fresh copies
are used for distillation and the final infidelity is certified
through the additive bound $\varepsilon_0+n\delta+\beta\le\varepsilon$,
where $\varepsilon_0$ is the estimated-state coding error,
$\delta$ the trace-distance reconstruction accuracy, and $\beta$
the tomography failure probability, then $\delta\le\varepsilon/n$.
Uniform full tomography over states of rank at most $r$ therefore
requires $m=\Omega(r d_A d_B n^2/\varepsilon^2)$ tomography copies
in the worst case and small-error regime, even allowing arbitrary
collective measurements \cite{ScharnhorstSpileckiWright2025}.
This is the cost of the stated additive certificate: the
$n^2/\varepsilon^2$ dependence is not a general lower bound on
tomography-based distillation, and other reconstruction metrics
can improve it \cite{PelecanosSpileckiWright2025}.
The comparison highlights the Hilbert space dimension cost of full state
reconstruction, which our protocol avoids by distilling directly
from the Schur--Weyl sectors.

\subsection{Other entropy promises}

The shift in \eqref{eq:regularized-affinity} lowers the Petz--R\'enyi quantity. We first bound this loss
and compare the entropy promises, then choose the shift
using a supplied gap above the target rate.

\begin{lemma}[Entropy comparison and regularization cost]
\label{lem:regularized-min-entropy}
Let $r\ge r_E$. For $0<s<1$ and $\tau>0$,
\be
2^{-s\mathcal I_{1-s,\tau}}
\le 2^{-s\mathcal I_{1-s}}+(r\tau)^s.
\label{eq:affinity-minentropy}
\ee
Moreover, $\mathcal I_{1-s}\ge h$ for each of the choices
\be
h=\mathcal I_{1-s},
&\qquad 0<s<1;\\
h=\widetilde H_\alpha^\uparrow(A|E)_\psi,
&\qquad s=\frac{\alpha-1}{2\alpha-1},
\quad 1<\alpha<\infty;\\
h=\widetilde H_\alpha^\downarrow(A|E)_\psi,
&\qquad s=\frac{\alpha-1}{\alpha},
\quad 1<\alpha<\infty;\\
h=H_{\min}(A|E)_\psi,
&\qquad s=\frac12.
\label{eq:renyi-choices}
\ee
For $s=(\alpha-1)/(2\alpha-1)$, the comparison also holds
with
$h=-\widetilde D_\alpha
(\rho_{AE}\Vert\id_A\otimes\sigma_E)$
for any reference state $\sigma_E$.
\end{lemma}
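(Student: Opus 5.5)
The plan has two independent parts: a direct spectral estimate for the regularization cost \eqref{eq:affinity-minentropy}, and a chain of standard R\'enyi-divergence inequalities combined with duality for the comparisons \eqref{eq:renyi-choices}.

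\emph{Regularization cost.} By definition \eqref{eq:regularized-affinity},
\[
2^{-s\mathcal I_{1-s,\tau}}=\Tr\!\left[\rho_{AB}^{1-s}\bigl(\id_A\otimes(\rho_B+\tau\id_B)^s\bigr)\right].
\]
The operators $\rho_B$ and $\tau\id_B$ commute, so on the eigenbasis of $\rho_B$ the scalar inequality $(b+\tau)^s\le b^s+\tau^s$ for $0<s<1$ gives the operator inequality $(\rho_B+\tau\id_B)^s\le\rho_B^s+\tau^s\id_B$. On the kernel of $\rho_B$ the right-hand side is just $\tau^s$, which is consistent with the support convention. Since $\rho_{AB}^{1-s}\ge0$, this yields
\[
2^{-s\mathcal I_{1-s,\tau}}\le 2^{-s\mathcal I_{1-s}}+\tau^s\Tr\rho_{AB}^{1-s}.
\]
Concavity of $x\mapsto x^{1-s}$ on the $r_{AB}$ nonzero eigenvalues gives $\Tr\rho_{AB}^{1-s}\le r_{AB}^{s}$. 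Using $r_{AB}=r_E\le r$ then proves \eqref{eq:affinity-minentropy}.

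\emph{Comparisons.} The case $h=\mathcal I_{1-s}$ is trivial. For the remaining cases I use the duality \eqref{eq:petz-conditional-duality}, $\mathcal I_{1-s}=H^\downarrow_{1+s}(A|E)_\psi$, and two ordering facts between Petz and sandwiched divergences. The first is the Araki--Lieb--Thirring bound $\widetilde D_\beta\le D_\beta$. The second is $D_{2-1/\alpha}\le\widetilde D_\alpha$ for $\alpha>1$, which I would cite from \cite{TomamichelBertaHayashi2014}.

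\begin{itemize}
\item For $h=\widetilde H^\downarrow_\alpha(A|E)_\psi$ with $s=(\alpha-1)/\alpha$, we have $1+s=2-1/\alpha$. Applying the second fact to $(\rho_{AE},\id_A\otimes\rho_E)$ gives $\widetilde H^\downarrow_\alpha(A|E)\le H^\downarrow_{2-1/\alpha}(A|E)=\mathcal I_{1-s}$.
\item For $h=\widetilde H^\uparrow_\alpha(A|E)_\psi$, I use the sandwiched duality $\widetilde H^\uparrow_\alpha(A|E)_\psi=-\widetilde H^\uparrow_\beta(A|B)_\rho$ with $1/\alpha+1/\beta=2$, i.e.\ $\beta=\alpha/(2\alpha-1)=1-s$, again citing \cite{TomamichelBertaHayashi2014}. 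Choosing the suboptimal reference $\rho_B$ and applying the first fact gives $\widetilde H^\uparrow_\beta(A|B)\ge-\widetilde D_\beta(\rho_{AB}\Vert\id_A\otimes\rho_B)\ge-D_\beta(\rho_{AB}\Vert\id_A\otimes\rho_B)=-\mathcal I_{1-s}$. Negating yields $\widetilde H^\uparrow_\alpha(A|E)\le\mathcal I_{1-s}$.
\item For a fixed reference $\sigma_E$, $-\widetilde D_\alpha(\rho_{AE}\Vert\id_A\otimes\sigma_E)\le\widetilde H^\uparrow_\alpha(A|E)$ by definition of the supremum, so the fixed-reference variant follows from the previous item.
\item For $h=H_{\min}(A|E)_\psi$ with $s=1/2$, this is exactly the first inequality in \eqref{eq:petz-half-hierarchy}. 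Alternatively, it follows by letting $\alpha\to\infty$ in the $\uparrow$ case, using the monotone decrease of $\widetilde H^\uparrow_\alpha$ in $\alpha$ and \eqref{eq:sandwiched-min-entropy-limit}, since $s=(\alpha-1)/(2\alpha-1)\uparrow1/2$; continuity of $s\mapsto\mathcal I_{1-s}$ at $s=1/2$ holds for finite-dimensional states.
\end{itemize}

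\emph{Main obstacle.} The delicate step is correctly matching orders and directions in the two dualities. One must check that the exponent relation $1/\alpha+1/\beta=2$ lands exactly on the Petz order $1-s$ used in $\mathcal I_{1-s}$. One must also check that the support conventions agree: negative powers of $\rho_E$ and $\sigma_E$ are taken on their supports, and references violating the support condition give $-\infty$, which trivially satisfies the inequality.
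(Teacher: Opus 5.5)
Your proposal is correct and follows the paper's proof essentially step for step: subadditivity of $x\mapsto x^s$ together with $\Tr\rho_{AB}^{1-s}\le r^s$ for the shift, the sandwiched duality at $\beta=\alpha/(2\alpha-1)$ with the suboptimal reference $\rho_B$ and Araki--Lieb--Thirring for $\widetilde H^\uparrow_\alpha$, the supremum bound for a fixed $\sigma_E$, and \eqref{eq:petz-half-hierarchy} for $H_{\min}$. The only deviation is the $\widetilde H^\downarrow_\alpha$ case. There the paper uses the mixed duality $\widetilde H^\downarrow_\alpha(A|E)_\psi=\inf_{\sigma_B}D_{1/\alpha}(\rho_{AB}\Vert\id_A\otimes\sigma_B)$ and drops the infimum, whereas you stay on the $E$ side with $D_{2-1/\alpha}\le\widetilde D_\alpha$ and then apply the Petz duality \eqref{eq:petz-conditional-duality}. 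Your route is equally valid, and for $\alpha>1$ the ordering follows from a single H\"older step, $\Tr\rho^{1-1/\alpha}X\le\|X\|_\alpha$ with $X=\rho^{1/2}\sigma^{(1-\alpha)/\alpha}\rho^{1/2}$, so you need not rely on the citation.
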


\begin{proof}
Since $\rho_B$ commutes with its shift, the scalar inequality
$(b+\tau)^s\le b^s+\tau^s$ gives
\be
2^{-s\mathcal I_{1-s,\tau}}
&\le 2^{-s\mathcal I_{1-s}}
+\tau^s\Tr\rho_{AB}^{1-s}\\
&\le 2^{-s\mathcal I_{1-s}}+(r\tau)^s,
\ee
where the last step uses $\rank\rho_{AB}\le r$.
For the optimized sandwiched entropy, set
$\beta=\alpha/(2\alpha-1)$.
Purification duality and the Araki--Lieb--Thirring
inequality yield
\be
\widetilde H_\alpha^\uparrow(A|E)_\psi
&=\inf_{\sigma_B\text{ state}}
\widetilde D_\beta
(\rho_{AB}\Vert\id_A\otimes\sigma_B)\\
&\le
\widetilde D_\beta
(\rho_{AB}\Vert\id_A\otimes\rho_B)\\
&\le
D_\beta(\rho_{AB}\Vert\id_A\otimes\rho_B)
=\mathcal I_\beta.
\ee
Thus $s=1-\beta=(\alpha-1)/(2\alpha-1)$.
For the physical-marginal sandwiched entropy, the mixed
Petz--sandwiched duality gives
\be
\widetilde H_\alpha^\downarrow(A|E)_\psi
&=\inf_{\sigma_B\text{ state}}
D_{1/\alpha}
(\rho_{AB}\Vert\id_A\otimes\sigma_B)\\
&\le\mathcal I_{1/\alpha},
\ee
so $s=(\alpha-1)/\alpha$.
Both dualities are stated in
Ref.~\cite[Theorem~3]{TomamichelBertaHayashi2014};
boundary reference states are understood by limits.
The min-entropy comparison is
\eqref{eq:petz-half-hierarchy}, and the Petz--R\'enyi choice gives the
equality. Finally, evaluation at a fixed reference gives
no larger entropy than optimization over all reference
states:
\[
-\widetilde D_\alpha
(\rho_{AE}\Vert\id_A\otimes\sigma_E)
\le \widetilde H_\alpha^\uparrow(A|E)_\psi.
\]
\end{proof}

Combining this lemma with \cref{lem:regularized-schur}
gives, for every admissible pair $(h,s)$, every $\tau>0$,
and $0<\eta<1$,
\be
H_{\mathrm{SW},k}^{\eta,(r)}(A|E)_\psi
\ge
-\frac{k}{s}\log_2\!\left(2^{-sh}+(r\tau)^s\right)
-\log_2 C_{k,r,\tau}
-\frac{1+s}{s}\log_2(1/\eta).
\label{eq:direct-renyi-entropy}
\ee
The free parameter $\tau$ may be optimized: increasing it
lowers the entropy term but reduces the Schur overhead.
We balance the two costs using a supplied entropy gap.

\begin{proposition}[Sample complexity from an entropy promise]
\label{prop:min-entropy-samples}
Fix an integer rank ceiling $r\ge1$, a target rate $R\ge0$,
a gap $0<\delta\le1$, and $0<\varepsilon<1/2$.
Choose $h$ and $s$ as in
\cref{lem:regularized-min-entropy}.
There is a constant $c_s>0$, depending only on $s$, such that
the protocol of \cref{thm:universal-distillation}, with
$K=2^{\lfloor kR\rfloor}$ and $L=2/\varepsilon$, has
infidelity at most $\varepsilon$ on every state satisfying
\be
\rank\rho_{AB}\le r,
\qquad h\ge R+\delta,
\ee
whenever the integer $k$ satisfies
\be
k\ge c_s\left(
\frac{r^4\,2^{2R}}{\delta^{\,1+2/s}}
+\frac{\log_2(1/\varepsilon)}{\delta}
\right).
\label{eq:min-entropy-samples}
\ee
For fixed local Hilbert spaces, the same protocol works
throughout this promised class.
\end{proposition}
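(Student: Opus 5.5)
The plan is to combine the master bound \eqref{eq:sw-master} of \cref{thm:universal-distillation} with the direct R\'enyi lower bound \eqref{eq:direct-renyi-entropy}, and then choose the shift $\tau$ as a function of the gap $\delta$. With $L=2/\varepsilon$, the coding term $(1-p_k)/L$ contributes at most $\varepsilon/2$. It therefore suffices to show that $p_k=\Pr[J_k^{(r)}<\log_2K+\log_2L]\le\varepsilon/2$. Equivalently, we need
\[
\log_2 K+\log_2(2/\varepsilon)\le H_{\mathrm{SW},k}^{\varepsilon/2,(r)}(A|E)_\psi .
\]
By \cref{lem:regularized-min-entropy}, the promise $h\ge R+\delta$ gives $\mathcal I_{1-s}\ge R+\delta$. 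Inserting this into \eqref{eq:direct-renyi-entropy} with $\eta=\varepsilon/2$ and $\log_2K\le kR$, the sufficient condition becomes
\[
-\frac{k}{s}\log_2\!\bigl(2^{-s(R+\delta)}+(r\tau)^s\bigr)-\log_2 C_{k,r,\tau}-\Bigl(\frac{1+s}{s}+1\Bigr)\log_2\frac{2}{\varepsilon}\ \ge\ kR .
\]

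First I choose $\tau$ so that the regularization costs at most half the gap. Write $2^{-s(R+\delta)}+(r\tau)^s=2^{-s(R+\delta)}\bigl(1+(r\tau)^s2^{s(R+\delta)}\bigr)$. Then the entropy term equals $k(R+\delta)-\frac{k}{s}\log_2(1+x)$ with $x=(r\tau2^{R+\delta})^s$. Using $\log_2(1+x)\le x\log_2\mathrm e$, it suffices to take $x\le c\,s\delta$ for a suitable constant $c$, which leaves $k(R+\delta/2)$. Concretely, I set
\[
\tau=r^{-1}2^{-(R+\delta)}(cs\delta)^{1/s}.
\]
With this choice, $r/\tau=r^2\,2^{R+\delta}(cs\delta)^{-1/s}$. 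Hence $a_{r,\tau}\le r/\tau+r^2/(2\tau^2)=O_s\!\left(r^4\,2^{2R}\,\delta^{-2/s}\right)$, using $\delta\le1$ to absorb $2^{2\delta}$.

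Next I collect the overheads. The remaining condition is
\[
\frac{k\delta}{2}\ \ge\ \log_2 p(k)+a_{r,\tau}\log_2\mathrm e+O_s\bigl(\log_2(1/\varepsilon)\bigr).
\]
Since $\log_2 p(k)=O(\sqrt k)$, this term is absorbed once $k\ge c'/\delta^2$. That requirement is dominated by $r^4 2^{2R}\delta^{-1-2/s}$, because $1+2/s>2$ and $r\ge1$. The term $a_{r,\tau}$ requires $k\gtrsim r^42^{2R}\delta^{-1-2/s}$, and the $\varepsilon$ term requires $k\gtrsim\log_2(1/\varepsilon)/\delta$. Taking $c_s$ large enough gives \eqref{eq:min-entropy-samples}. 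Universality is then inherited from \cref{thm:universal-distillation}: the protocol depends only on $(d_A,d_B,k,K,r,L)$, and $L$ depends only on $\varepsilon$.

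The main obstacle is the $\sqrt k$ partition-count term. It has to be compared carefully with $k\delta/2$ so that it does not introduce a $\delta^{-2}$ dependence that exceeds the stated bound. This works because the exponent $1+2/s$ is at least $3$. A secondary point is that the $\tau$ dependence must be tracked exactly to get the $r^4 2^{2R}$ factor, which comes from the quadratic term $r^2/\tau^2$. All constants depending only on $s$ go into $c_s$.
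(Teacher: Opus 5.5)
Your proposal is correct and follows essentially the same route as the paper. You choose the shift $\tau$ so that regularization costs half the gap, leaving $\mathcal I_{1-s,\tau}\ge R+\delta/2$, and then absorb $\log_2 p(k)=O(\sqrt k)$, $a_{r,\tau}\log_2\mathrm e=O_s(r^4 2^{2R}\delta^{-2/s})$ and $(2+1/s)\log_2(2/\varepsilon)$ into $k\delta/2$. The differences are only cosmetic: the paper fixes $\tau$ exactly through $(1-2^{-s\delta/2})^{1/s}$ instead of linearizing $\log_2(1+x)$, uses the moment bound with Markov's inequality rather than the quantile form \eqref{eq:direct-renyi-entropy}, and solves a quadratic in $\sqrt k$ instead of splitting the three overhead terms.
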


\begin{proof}
Choose the shift to spend half of the supplied entropy gap:
\be
\tau\coloneqq
\frac{2^{-R-\delta/2}}{r}
\left(1-2^{-s\delta/2}\right)^{1/s}.
\label{eq:sample-complexity-shift}
\ee
By \cref{lem:regularized-min-entropy},
\be
2^{-s\mathcal I_{1-s,\tau}}
&\le 2^{-s(R+\delta)}+(r\tau)^s\\
&=2^{-s(R+\delta/2)},
\ee
and therefore $\mathcal I_{1-s,\tau}\ge R+\delta/2$.
The remaining half of the gap pays for the finite-copy
overhead. Writing $v=s/(1+s)$, the regularized moment
bound, Markov's inequality, and \eqref{eq:sw-master} give
\be
\varepsilon_k
&\le L^{-1}
+\left[
C_{k,r,\tau}KL\,2^{-k\mathcal I_{1-s,\tau}}
\right]^v\\
&\le\frac{\varepsilon}{2}
+\left[
\frac{2C_{k,r,\tau}}{\varepsilon}
\,2^{-k\delta/2}
\right]^v.
\ee
Consequently, $\varepsilon_k\le\varepsilon$ whenever
\be
\frac{k\delta}{2}
\ge
\log_2 C_{k,r,\tau}
+\left(2+\frac1s\right)\log_2\frac2\varepsilon,
\label{eq:min-entropy-explicit}
\ee
since $1+1/v=2+1/s$.
To solve this condition, use
\eqref{eq:je-regularized-overhead} and
\be
\log_2 p(k)\le c_0\sqrt{k},
\qquad
c_0\coloneqq \frac{\pi\sqrt{2/3}}{\ln2}.
\ee
Indeed, the partition generating function gives, for
every $t>0$,
\[
\ln p(k)
\le kt+\sum_{j\ge1}\frac{1}{j(e^{tj}-1)}
\le kt+\frac{\pi^2}{6t},
\]
and one takes $t=\pi/\sqrt{6k}$.
Set
\be
A\coloneqq a_{r,\tau}
=\frac r\tau+\frac{r^2}{2\tau^2},
\qquad
B\coloneqq A\log_2\mathrm e
+\left(2+\frac1s\right)\log_2\frac2\varepsilon.
\ee
It suffices that $k\delta/2\ge c_0\sqrt{k}+B$.
Solving this quadratic inequality in $\sqrt{k}$ gives
the explicit sufficient condition
\be
k\ge
\left\lceil
\left(
\frac{c_0+\sqrt{c_0^2+2\delta B}}{\delta}
\right)^2
\right\rceil.
\label{eq:renyi-copy-inversion}
\ee
Its right-hand side is
\[
O_s\!\left(
\delta^{-2}
+\frac{A}{\delta}
+\frac{\log_2(1/\varepsilon)}{\delta}
\right).
\]
Since
$1-2^{-s\delta/2}\ge(s\ln2)\delta/4$
for the stated parameter range,
\eqref{eq:sample-complexity-shift} implies
\be
A=O_s\!\left(r^4\,2^{2R}\delta^{-2/s}\right).
\ee
Substitution proves \eqref{eq:min-entropy-samples};
the term $\delta^{-2}$ is absorbed by its first term.
All constants are independent of the state and local
dimensions, and no additional logarithmic factors
are suppressed.
\end{proof}

The power of $\delta^{-1}$ is $5+2/(\alpha-1)$ for
$\widetilde H_\alpha^\uparrow$, $3+2/(\alpha-1)$ for
$\widetilde H_\alpha^\downarrow$, and $5$ for
$H_{\min}$. The arbitrary-reference promise in
\cref{lem:regularized-min-entropy} has the same exponent
as the optimized sandwiched entropy.
These guarantees require no marginal-rank assumption.
For fixed local spaces, the operations use only $(k,K,r,L)$.
The entropy promise determines sufficient choices of these
parameters; the state, purification, reference, and shift
are used only in the analysis.
A supplied lower bound $h_0\le h$ allows distillation
arbitrarily close to $h_0$. For
$0<\gamma<\min\{h_0,1\}$, take
$R=h_0-\gamma/2$ and $\delta=\gamma/2$.
Increasing $k$ to at least $\lceil2/\gamma\rceil$ ensures
\[
\frac{\log_2 K}{k}\ge h_0-\gamma.
\]
For a fixed finite-dimensional state,
$\widetilde H_\alpha^\uparrow(A|E)_\psi
\to H(A|E)_\psi=I(A\rangle B)_\rho$
as $\alpha\downarrow1$
\cite{TomamichelBertaHayashi2014},
but the sample bounds are not uniform in this limit.

\subsection{Sample-efficient regime}
\label{sec:sample-sketch}

Capping the target rate at $\log_2 n$, for a supplied size
parameter $n\ge2$, bounds the factor $2^{2R}$ by $n^2$.
This gives the following sample-efficient regime.

\begin{corollary}[Sample complexity at a capped conditional entropy rate]
\label{cor:capped-entropy}
Choose $h$ and $s$ as in
\cref{lem:regularized-min-entropy}.
Fix integers $n\ge2$ and $r\ge1$, a supplied lower bound
$h_0>0$, and parameters
\[
0<\delta<\min\{h_0,\log_2 n\},
\qquad \delta\le1,
\qquad 0<\varepsilon<1/2.
\]
For fixed local Hilbert spaces, there is a
state-independent one-way LOCC channel that, on every
state satisfying $\rank\rho_{AB}\le r$ and $h\ge h_0$,
produces a state with infidelity at most $\varepsilon$
relative to $\Phi_K$, with
\be
\frac{\log_2 K}{k}
&\ge \min\{h_0,\log_2 n\}-\delta,\\
k
&=O_s\!\left(
\frac{r^4\,2^{2\min\{h_0,\log_2 n\}}}
{\delta^{\,1+2/s}}
+\frac{\log_2(1/\varepsilon)}{\delta}
\right).
\label{eq:capped-promise}
\ee
The sample bound is independent of the local dimensions.
\end{corollary}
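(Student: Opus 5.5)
This corollary is a direct specialization of \cref{prop:min-entropy-samples}, with the target rate capped. Set $h_\star\coloneqq\min\{h_0,\log_2 n\}$ and pick the target rate $R\coloneqq h_\star-\delta/2$ together with the gap $\delta'\coloneqq\delta/2$. Then $R\ge0$, because $\delta<h_\star$. On every state in the promised class,
\[
h\ge h_0\ge h_\star=R+\delta'.
\]
The hypothesis $h\ge R+\delta'$ of \cref{prop:min-entropy-samples} therefore holds uniformly over the class.

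The protocol is the one from \cref{thm:universal-distillation} with $K=2^{\lfloor kR\rfloor}$ and $L=2/\varepsilon$. It depends only on $(d_A,d_B,k,K,r,L)$, and these are fixed by $(n,r,h_0,\delta,\varepsilon)$ and the local spaces. Hence the same one-way LOCC channel works for every admissible state, which is the required state-independence.

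Applying \cref{prop:min-entropy-samples} with gap $\delta'$ gives infidelity at most $\varepsilon$ whenever
\[
k\ge c_s\Bigl(\frac{r^4 2^{2R}}{(\delta/2)^{1+2/s}}+\frac{2\log_2(1/\varepsilon)}{\delta}\Bigr).
\]
Since $2^{2R}\le 2^{2h_\star}$ and the factors $2^{1+2/s}$ and $2$ depend only on $s$, both can be absorbed into $O_s$. This yields the sample bound \eqref{eq:capped-promise}, and $2^{2h_\star}\le n^2$ shows it is polynomial in $n$. No local dimension enters, since the proposition's constants are dimension-free.

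For the rate, write
\[
\frac{\log_2 K}{k}=\frac{\lfloor kR\rfloor}{k}\ge R-\frac1k=h_\star-\frac{\delta}{2}-\frac1k .
\]
So it suffices that $k\ge 2/\delta$. The only real care point is checking that this extra requirement is already implied by the displayed bound, or else can be added harmlessly. Because $\delta\le1$, the term $\delta^{-(1+2/s)}$ dominates $2/\delta$ up to an $s$-dependent constant, with $r\ge1$ and $2^{2h_\star}\ge1$. The condition is therefore absorbed into $O_s$ by taking $k$ to be the maximum of the two requirements. With that, the corollary follows.
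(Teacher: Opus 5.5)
Your proposal is correct and follows the paper's proof essentially verbatim. Like the paper, you set $t=\min\{h_0,\log_2 n\}$ and apply \cref{prop:min-entropy-samples} with target $R=t-\delta/2$ and gap $\delta/2$. You then use $2^{2R}\le 2^{2t}\le n^2$ and enlarge $k$ to at least $\lceil 2/\delta\rceil$, so that $\lfloor kR\rfloor/k\ge t-\delta$. Your explicit check that $2/\delta\le 2r^4 2^{2t}\delta^{-(1+2/s)}$ for $\delta\le 1$ is exactly the absorption of the extra $O(1/\delta)$ copies that the paper states without detail.
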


\begin{proof}
Put $t\coloneqq \min\{h_0,\log_2 n\}$.
Apply \cref{prop:min-entropy-samples} with target
$R=t-\delta/2$ and gap $\delta/2$.
The promise holds because
$R+\delta/2=t\le h_0\le h$.
Choose $k$ equal to the sufficient integer in
\eqref{eq:renyi-copy-inversion} with these parameters,
increasing it to $\lceil2/\delta\rceil$ if necessary.
Then
\be
\frac{\log_2 K}{k}
&=\frac{\lfloor kR\rfloor}{k}
\ge R-\frac1k
\ge t-\delta,\\
2^{2R}&\le2^{2t}\le n^2.
\ee
The extra $O(1/\delta)$ copies are absorbed by the stated
bound, and all choices are uniform over the promised class.
\end{proof}

For a supplied promise
$h_0\le\widetilde H_\alpha^\uparrow(A|E)_\psi$,
the corollary gives
\be
\frac{\log_2 K}{k}
&\ge\min\{h_0,\log_2 n\}-\delta,\\
k
&=O_\alpha\!\left(
\frac{r^4n^2}{\delta^{\,5+2/(\alpha-1)}}
+\frac{\log_2(1/\varepsilon)}{\delta}
\right),
\qquad 1<\alpha<\infty.
\label{eq:capped-alpha}
\ee
The other entropy promises use the exponents given above.
Varying the supplied bound $h_0$ describes the achievable
rate envelope, without requiring entropy estimation.
If $\min\{h_0,\log_2 n\}\le\delta$, outputting zero ebits
already satisfies the rate inequality.
For fixed $s$, polynomial bounds on the rank ceiling,
inverse gap, and $\log_2(1/\varepsilon)$ give polynomial
sample complexity. Explicitly, if
$r\le n^a$, $\delta\ge n^{-c}$, and
$\log_2(2/\varepsilon)\le2n^d$ for fixed $a,c,d\ge0$,
one can choose
\be
k\le C_{s,a,c,d}\left(
n^{\,4a+2+c(1+2/s)}
+n^{\,d+c}
\right),
\label{eq:polynomial-renyi-samples}
\ee
with a constant independent of the state and local
dimensions. The sample bound remains polynomial even
when the local spaces have dimension $2^n$.

For comparison with state-dependent coding guarantees based on
$\widetilde H_{1+u}^\downarrow(A|E)_\psi$, take $u>0$ and
$s=u/(1+u)$. The same entropy certifies the universal protocol,
with the overhead displayed above. The pure-state min-entropy
specialization gives the benchmark discussed in
Appendix~\ref{app:min-entropy}. These are rate and sample-complexity guarantees;
computational efficiency of the recovery requires a separate analysis.
Finally, at fixed $r,s,\tau$, the regularized Schur bound
has overhead $O_{r,s,\tau}(\sqrt{k}+\log_2(1/\eta))$ relative
to $k\mathcal I_{1-s,\tau}$. Taking $\eta_k=k^{-2}$ and
$L_k=k^2$ makes the infidelity at most $2k^{-2}$, while
the error-dependent correction and coding margin cost only
$O_s(\log_2 k)$ bits. Thus every positive rate below
$\mathcal I_{1-s,\tau}$ is achievable with vanishing error.
For a fixed finite-dimensional input, subsequently taking
$\tau\downarrow0$ recovers the un-shifted rate. The explicit
shift in \eqref{eq:sample-complexity-shift} gives the
corresponding finite-copy guarantee at a supplied positive
gap. 

\section{First- and second-order optimality}
For a bipartite state $\rho_{AB}$, let
\be
R(\rho)
\coloneqq
\min_{\sigma\in\mathrm{PPT}'}
D(\rho\Vert\sigma),
\qquad
\mathcal R(\rho)
\coloneqq
\mathrm{argmin}_{\sigma\in\mathrm{PPT}'}
D(\rho\Vert\sigma)
\ee
denote the Rains bound and the set of its minimizers, respectively. Here $\mathrm{PPT}'\coloneqq \{\sigma\in\mathcal L(AB):\sigma\ge0,\ \|\sigma^{T_B}\|_1\le1\}$, where $T_B$ denotes partial transpose; these positive operators need not be normalized. We use $D(\rho\Vert\sigma)\coloneqq\Tr\rho(\log_2\rho-\log_2\sigma)$ when $\supp\rho\subseteq\supp\sigma$, and $+\infty$ otherwise.
Since the second-order upper bound of
\cite[Thm.~7]{Fang_2019} holds for any
$\sigma\in\mathcal R(\rho)$, we may choose the minimizer giving the
tightest bound. Accordingly, define
\be
V_R^\varepsilon(\rho)
\coloneqq
\begin{cases}
\displaystyle
\max_{\sigma\in\mathcal R(\rho)}
V(\rho\Vert\sigma),
& 0<\varepsilon<\frac12,\\[1.2ex]
\displaystyle
\min_{\sigma\in\mathcal R(\rho)}
V(\rho\Vert\sigma),
& \frac12<\varepsilon<1,
\end{cases}
\ee
where $V(\rho\Vert\sigma)\coloneqq \Tr\rho[\log_2\rho-\log_2\sigma-D(\rho\Vert\sigma)\id_{AB}]^2$ is the quantum information variance, with logarithms evaluated on their supports and extended by zero on their kernels.
For $\varepsilon=\frac12$, the second-order term vanishes and the
choice of minimizer is immaterial.
Then, for every fixed $0<\varepsilon<1$,
\cite[Thm.~7]{Fang_2019} gives
\be
k I(A\rangle B)_\rho
+\sqrt{kV(A\rangle B)_\rho}\,
\Phi^{-1}(\varepsilon)
+O(\log_2 k)
\le
E_{D,\mathrm{LOCC}}^{(1),\varepsilon}(\rho^{\otimes k})
\le
kR(\rho)
+\sqrt{kV_R^\varepsilon(\rho)}\,
\Phi^{-1}(\varepsilon)
+O(\log_2 k),
\label{eq:general-second-order-sandwich}
\ee
where $E_{D,\mathrm{LOCC}}^{(1),\varepsilon}$ denotes the largest $\log_2 K$ achievable by LOCC with Bell-state infidelity at most $\varepsilon$, and $V(A\rangle B)_\rho$ is the coherent information variance.
Consequently, whenever
\be
R(\rho)=I(A\rangle B)_\rho,
\qquad
V_R^\varepsilon(\rho)=V(A\rangle B)_\rho,
\ee
the lower and upper bounds coincide to second order, and hence
\be
E_{D,\mathrm{LOCC}}^{(1),\varepsilon}(\rho^{\otimes k})
=
k I(A\rangle B)_\rho
+\sqrt{kV(A\rangle B)_\rho}\,
\Phi^{-1}(\varepsilon)
+O(\log_2 k).
\ee
In particular, maximally correlated states satisfy these conditions
\cite[Prop.~10]{Fang_2019}. Thus, for $I>0$ and $V>0$, \cref{prop:second-order} shows that our universal entanglement distillation protocol is optimal up to second order on this class of states.

\bibliography{biblio}

\end{document}